\def\argmin{\text{argmin}}
\newcommand\R{\mathbb{R}}
\newcommand\N{\mathbb{N}}
\def\argmin{\text{argmin}}
\newtheorem{lem}{Lemma}
\newtheorem{pr}{Proposition}
\newtheorem{df}{Definition}
\newtheorem{rem}{Remark}
\newtheorem{ass}{Assumption}
\newcommand{\tu}{{\tilde{u}}}
\newcommand{\tv}{{\tilde{v}}}
\title{\bf A Splitting Strategy for the Calibration of Jump-Diffusion Models}
\author{Vinicius Albani\thanks{Federal University of Santa Catarina,  
88.040-900 Florianopolis, Brazil, \href{mailto:v.albani@ufsc.br}{\tt v.albani@ufsc.br}} 
\,and Jorge P. Zubelli\thanks{IMPA, 
Rio de Janeiro, RJ 22460-320, Brazil, \href{mailto:zubelli@impa.br}{\tt zubelli@impa.br}}}
\date{\today}
\begin{document}

\maketitle

\begin{abstract}
We present a detailed analysis and implementation of a splitting strategy to identify simultaneously
the local-volatility surface and the jump-size distribution from quoted European prices. 
The underlying model consists of a jump-diffusion driven asset with time and price dependent volatility.
Our approach uses a forward Dupire-type partial-integro-differential equations for the option prices 
to produce a parameter-to-solution map. The ill-posed inverse problem for such map is then solved by 
means of a Tikhonov-type convex regularization. The proofs of convergence and stability of the 
algorithm are provided together with numerical examples that substantiate the robustness of the method both for synthetic and real data. 
\end{abstract}

\noindent {\bf keywords:} Jump-Diffusion Simulation, Partial Integro-Differential Equations, Finite Difference Schemes, Inverse Problems, Tikhonov-type regularization.

\section{Introduction}\label{sec:introduction} 
Model selection and calibration is still one of the crucial problems in derivative trading and hedging.
From a mathematical view-point it should be treated as an ill-posed inverse problem by suitable regularization as in the work of \cite{ern} and \cite{schervar}. The subject is deeply connected to nonparametric statistics as described by \cite{somersalo}. The problem of model selection and calibration has thus attracted the attention of number of authors as can be seen in the book of \cite{ConTan2003} and references therein. 

Amongst the most successful nonparametric approaches, 
the local volatility model of  \cite{dupire} has become one of the 
market's standards. It consists in assuming that the underlying price $S_t$ satisfies a stochastic dynamics of the form 
$$
dS_t = r S_t dt + \sigma(t,S_t) dW_t \mbox{, }
$$
where $W_t$ is the Wiener process under the risk-neutral measure and $\sigma$ is the so-called local volatility.  
Besides its intrinsic elegance and simplicity, Dupire's model success is due to at least two factors:
Firstly, the existence of a forward partial differential equation (PDE) satisfied by the price of
call (or put) options when considered as functions of the strike price $K$ and the time to expiration $\tau$. Secondly, to the importance of having a backwards pricing PDE to compute other (perhaps exotic) derivatives. 

Yet, one of the main shortcomings of local volatility models is the fact that such models are still diffusive ones. Thus, well-known stylized facts such as fat tails and jumps in the log-returns become awkward to fit and justify (\cite{ConTan2003}).

The present article is concerned with the calibration of jump-diffusion models with local volatility. We make use of a fairly recent contribution to the literature, namely the existence of a forward equation of Dupire's type for such models present in \cite{BenCon2015}. The availability of a forward equation, allows us, for each fixed time and underlying price,  to look at the option prices as a function of the time to expiration and the strike price. Furthermore, by considering collected data from past underlying and derivative prices, we can enrich our observed data and strive for better calibration prices. 

Efforts to calibrate jump-diffusion models from option prices have been undertaken by a number of authors either from a parametric and a nonparametric perspective. See~\cite{AndAnd2000}, \cite{ConTan2003} and \cite{volguide}.
In this work, differently from previous efforts, we focus on using Dupire's forward equation, as generalized in the work of \cite{BenCon2015} and propose a splitting calibration methodology to recover simultaneously the local volatility surface and the jump-size distribution. For a fixed dataset of European vanilla option prices we calibrate, for example, the volatility surface for some fixed jump-size distribution. Then, we find a new reconstruction of the jump-size distribution for the volatility surface previously calibrated. We repeat these steps until a stopping criteria for convergence is satisfied. The resulting pair of functional parameters is indeed a stable approximation of the true local volatility surface and the jump-size distribution, whenever they exist. It is important to mention that, the dataset used to identify this pair of functional parameters is the same one used in Dupire's local volatility calibration problem. No additional data is required as it would be necessary if we wanted to calibrate both parameters at the same time using standard regularization techniques (\cite{ern}). 
The resulting methodology is amenable to regularization techniques as those studied in \cite{AlbZub2014} and \cite{AlbAscZub2016}. In particular, different a priori distributions could be used. As a byproduct, we prove convergence estimates for the calibration of the jump-diffusion models as the data noise decreases. We also obtain stable and robust calibration algorithms which perform well either under real or synthetic data. 

The plan for this work is the following: In Section~\ref{sec:preliminaries} we set the notation and review some basic facts, including the fundamental forward equation for jump-diffusion processes. 
In Section~\ref{sec:par2sol}
we discuss the main functional-analytic properties of the parameter to solution map. 
Section~\ref{sec:splitting} 
is concerned with the splitting strategy and the regularization of  inverse problems. 
In particular, we review the tangential cone condition and prove its validity under certain assumptions in our context. This condition, in turn, ensures the convergence of Landweber type methods. 
The results in this section are not specific to the jump-diffusion model under consideration. Indeed, they apply to more general inverse problems, although, to the best of our knowledge we have not seen presented in this form.
In Section~\ref{sec:calibration} we compute the gradient of the nonlinear parameter-to-solution map, which is crucial for the iterative
methods.
Section~\ref{sec:numerics} is concerned with the numerical methods for the solution of the calibration problem and its validation. Differently from \cite{ConVol2005a,ConVol2005b}, we consider also the case where the 
jumps may be infinite. 
Section~\ref{sec:examples} presents a number of numerical examples that validate the theoretical results and display the effectiveness of our methodology. 
We close in Section~\ref{sec:conclusion} with some final remarks and suggestions for further work. 

\section{Preliminaries}\label{sec:preliminaries} 
Let us consider the probability space $(\Omega,\mathcal{G},\mathbb{P})$ with a filtration $\{\mathscr{F}_t\}_{t\geq 0}$. Denote by $S_t$ the price at time $t \geq 0$ of our underlying asset and assume that it satisfies
\begin{multline}
S_t = S_0 + \int_0^t rS_{t^\prime-}dt^\prime + \int_0^t\sigma(t^\prime,S_{t^\prime-})S_{t^\prime}dW_{t^\prime} + \\
\int_0^t \int_{\R}S_{t^\prime-}(\text{e}^y-1)\tilde N(dt^\prime dy),~ ~ ~0\leq t\leq T,
\label{ito1}
\end{multline}
where $W$ is a Brownian motion, and $\tilde N$ is the compensated version of the Poisson probability measure on $[0,T]\times \R$, denoted by $N$, with compensator $\nu(dy)dt$. See \citet{ConTan2003}.

Assume also that $\sigma$ is positive and bounded from below and above by positive constants, and the compensator $\nu$ satisfies
\begin{equation}
 \int_{|y|>1} \text{e}^{2y}\nu(dy) < \infty.
\end{equation}

Since $\sigma$ is uniformly bounded and nonnegative, then, by setting $t=0$ and denoting $\tau$ the time to maturity and $K$ the strike price, the seminal work of \citet{BenCon2015} shows that the price of an European call option on the asset in \eqref{ito1}, defined by 
$$
C(\tau,K) = \text{e}^{-r\tau}\mathbb{E}[\max\{0,S_\tau-K\}|\mathscr F_{0}],
$$
is the unique solution in the sense of distributions of the partial integro-differential equation (PIDE):
\begin{multline}
 C_\tau(\tau,K) - \frac{1}{2}K^2\sigma(\tau,K)^2 C_{KK}(\tau,K) + rKC_K(\tau,K)=\\
 \int_{\R}\nu(dz)\text{e}^z\left(C(\tau,K\text{e}^{-z}) - C(\tau,K) - (\text{e}^{-z}-1)KC_K(\tau,K)\right),
 \label{pide1}
\end{multline}
with $\tau\geq 0$, $K>0$, and the initial condition
\begin{equation}
 C(0,K) = \max\{0,S_0-K\}, ~K>0.
 \label{pide1b}
\end{equation}

Since the diffusion coefficient in Equation \eqref{pide1} is unbounded and goes to zero as $K\rightarrow 0$, let us perform the change of variable $y = \log(K/S_0)$ and define
$$
a(\tau,y) = \frac{1}{2}\sigma(\tau,S_0\text{e}^y)^2 \qquad \text{and} \qquad u(\tau,y) = C(\tau,S_0\text{e}^y)/S_0.
$$
So, denoting $D = [0,T]\times\R$, the PIDE problem \eqref{pide1}-\eqref{pide1b} becomes
\begin{multline}
 u_\tau(\tau,y) - a(\tau,y) \left(u_{yy}(\tau,y) - u_{y}(\tau,y)\right)  + ru_y(\tau,y)=\\
 \int_{\R}\nu(dz)\text{e}^z\left(u(\tau,y-z) - u(\tau,y) - (\text{e}^{-z}-1)u_y(\tau,y)\right),
 \label{eq:pide2}
\end{multline}
with $(\tau,y)\in D$, and the initial condition
\begin{equation}
 u(0,y) = \max\{0,1-\text{e}^y\}, ~y\in\R.
 \label{eq:pide2b}
\end{equation}

Instead of using $\nu$ directly in the PIDE problem~\eqref{eq:pide2}-\eqref{eq:pide2b}, we consider, as in \citet{KinMay2011}, the double-exponential tail of $\nu$
\begin{equation}
\varphi(y) = \varphi(\nu;y) = \left\{
\begin{array}{ll}
\int_{-\infty}^y(\text{e}^y-\text{e}^x)\nu(dx), & y < 0\\ 
\int_{y}^\infty(\text{e}^x-\text{e}^y)\nu(dx), & y > 0,
\end{array}
\right.
\label{eq:tail}
\end{equation}
and the convolution operator
$$
I_\varphi f (y) := \varphi * f (y) = \int_{\R}\varphi(y-x)f(x)dx.
$$
Applying Lemma~2.6 in \citet{BenCon2015} to the integral part of the PIDE \eqref{eq:pide2},
\begin{multline}
\int_{\R}\nu(dz)\text{e}^z\left(u(\tau,y-z) - u(\tau,y) - (\text{e}^{-z}-1)u_y(\tau,y)\right)\\ = \int_{\R}\varphi(y-z)(u_{yy}(\tau,z) - u_y(\tau,z))dz.
\label{eq:integral_part}
\end{multline}
In what follows, we replace the integral part of the PIDE \eqref{eq:pide2} by the right-hand side of \eqref{eq:integral_part}.

\begin{rem}
 Define $g(\tau,y) := \max\{0,1-\text{e}^y\}$, so, by the definition of $u$, it follows that $u(\tau,y) = \tv(\tau,y) + g(\tau,y)$, where $\tv$ is the solution of the PIDE: 
\begin{equation}
 \tv_\tau = a\left(\tv_{yy}-\tv_{y}\right) - r\tv_{y} + I_\varphi \left(b(\tv_{yy}-\tv_{y})\right) + G
 \label{eq:pide3}
 \end{equation}
 with homogeneous boundary and initial conditions, where
 $$
 G = a\left(g_{yy}-g_{y}\right) - rg_{y} + I_\varphi \left(b(g_{yy}-g_{y})\right),
 $$
with $g_{yy}$ and $g_{y}$ weak derivatives of $g$.  
\end{rem} 
By assuming that $a \in C_B(D)$ with $a_y \in L^\infty\left([0,T],L^2(\R)\right)$ and $a(\tau,y) \geq c > 0$, for every $(\tau,y)\in D$, and $\nu$ is a L\'evy measure satisfying $\displaystyle\int_{x\geq 1}x\text{e}^x\nu(dx) < \infty$, Theorem~3.9 in \citet{KinMay2011} states the existence and uniqueness of $\tv$. The proof that $u$ is a weak solution of the PIDE problem \eqref{eq:pide2}-\eqref{eq:pide2b} is an easy adaptation of the proof of Theorem~3.9 in \citet{KinMay2011}. To see that, just replace the test functions in $H^1(\R)$, by test functions with compact support in $C_0^\infty(D)$, as in \citet{BenCon2015}, and replace $\tv$ by $u-g$. 

Uniqueness of solution can also be proved by analytical methods as in \citet{BarImb2008,GarMen2002} or by probabilistic arguments as in Theorem~2.8 in \citet{BenCon2015}. An alternative proof is to consider the difference between two different solutions of the PIDE problem \eqref{eq:pide2}-\eqref{eq:pide2b}. The resulting function is the solution of the PIDE \eqref{eq:pide3} with $G \equiv 0$. By Theorem~3.7 in \citet{KinMay2011}, the norm of the solution of the PIDE \eqref{eq:pide3} is dominated by the norm of $G$, which is zero. So, the difference is also zero and uniqueness holds. Since $v$ and $g$ are continuous in $D$, it follows that $u$ is also a continuous function.

In 
Section~\ref{sec:par2sol}
we give an alternative proof for the existence and uniqueness of a solution of the PIDE problem~\eqref{eq:pide2}-\eqref{eq:pide2b} based on the classical theory of parabolic partial differential equations. See \cite{lady}.

\section{The Parameter to Solution Map and its Properties} \label{sec:par2sol}
The goal of the present section is to show the well-posedness of the PIDE problem~\eqref{eq:pide2}-\eqref{eq:pide2b} and some regularity properties of the parameter-to-solution map.

We make the following additional assumption:
\begin{ass}
  The restrictions of the double-exponential tail $\varphi$ to the sets $(-\infty,0)$ and $(0,+\infty)$ are in the Sobolev spaces $W^{2,1}(-\infty,0)$ and $W^{2,1}(0,+\infty)$, respectively.
\label{ass:4}
\end{ass}
$W^{2,1}(-\infty,0)$ (respectively $W^{2,1}(0,+\infty)$) is the Sobolev space of $L^1(-\infty,0)$ (respectively $L^1(0,+\infty)$) functions such that its first and second weak derivatives are in $L^1(-\infty,0)$ (respectively $L^1(0,+\infty)$).

The above assumption holds, for example, if we assume that the measure $\nu$ is such that the functions $x \in (-\infty,0) \mapsto \nu((-\infty,x])$ and $x \in (0,+\infty) \mapsto \nu([x,+\infty))$ are continuous. 

We recall, that the set of non-negative non-increasing functions has a nonempty interior in $W^{2,1}(0,+\infty)$ as well as the set of non-negative non-decreasing functions in $W^{2,1}(-\infty,0)$. This is of particular importance since we need to show that the direct operator has a Frech\'et derivative.

In order to define the domain of the direct operator in the Banach space 
$$X = H^{1+\varepsilon}(D)\times W^{2,1}(-\infty,0) \times W^{2,1}(0,+\infty),$$
let $0<\underline{a}\leq\overline{a}<\infty$ be fixed constants and $a_0:D\rightarrow (\underline{a},\overline{a})$ be a fixed continuous function such that its weak derivatives with respect to $\tau$ and $y$ are in $L^2(D)$.  
\begin{multline*}
 \mathcal{D}(F) = \left\{ (\tilde a,\varphi_-,\varphi_+)\in X ~:~\mbox{let}~ a = \tilde a + a_0,~\mbox{be s.t.}~  \underline{a}\leq a \leq\overline{a}, \right.\\
 \left. ~\mbox{let}~\varphi ~\mbox{be s.t.},~\varphi = \varphi_- ~\mbox{in}~ (-\infty,0) ~\mbox{and}~ \varphi = \varphi_+ ~\mbox{in}~(0,+\infty)
\right\}
\end{multline*}
It is easy to see that $\varphi \in L^1(\R)$.

For simplicity, in what follows we shall write $(a,\varphi) \in \mathcal{D}(F)$, meaning that $a$ and $\varphi$ are given as in the definition of $\mathcal{D}(F)$.
\begin{pr}
 Let $(a,\varphi)$ be in $\mathcal{D}(F)$, in addition, assume that $\|\varphi\|_{L^ 1(\R)} < C^{-1}$, where the constant $C$ depends on $\underline{a}$, $\overline{a}$ and $r$. Then, there exists a unique solution of the PIDE problem \eqref{eq:pide2}-\eqref{eq:pide2b} in $W^{1,2}_{2,loc}(D)$.
 \label{prop:existence}
\end{pr}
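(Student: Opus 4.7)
The plan is to recast the PIDE as a fixed-point equation driven by a linear parabolic operator, and apply Banach's fixed-point theorem using the smallness hypothesis on $\|\varphi\|_{L^1(\R)}$. First I would reduce to homogeneous initial data as in the Remark: write $u = \tilde v + g$ with $g(\tau,y) = \max\{0,1-e^y\}$, so that $\tilde v$ satisfies
\begin{equation*}
\tilde v_\tau - a(\tilde v_{yy}-\tilde v_y) + r \tilde v_y = I_\varphi(\tilde v_{yy}-\tilde v_y) + G, \qquad \tilde v(0,\cdot) = 0,
\end{equation*}
with $G$ determined by $g$. Since $g$, $g_y$, $g_{yy}$ are bounded (with $g_{yy}$ a bounded measure supported at $0$) and $\varphi \in L^1(\R)$, the source $G$ lies in $L^2_{loc}(D)$.

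Next I would consider, for any fixed $h \in L^2_{loc}(D)$, the purely linear parabolic problem
\begin{equation*}
w_\tau - a(w_{yy}-w_y) + r w_y = h + G, \qquad w(0,\cdot) = 0.
\end{equation*}
Since $\underline a \le a \le \overline a$ and $a \in C_B(D)$ with $a_y \in L^\infty([0,T];L^2(\R))$ by the choice of $\mathcal{D}(F)$, the operator is uniformly parabolic with coefficients in the standard Ladyzhenskaya-Solonnikov-Uralt'seva regularity class (\cite{lady}). That theory provides a unique $w \in W^{1,2}_{2,loc}(D)$ together with an a priori estimate of the form
\begin{equation*}
\|w\|_{W^{1,2}_2(\Omega)} \le C\bigl(\|h\|_{L^2(\Omega')} + \|G\|_{L^2(\Omega')}\bigr)
\end{equation*}
on nested bounded subdomains $\Omega \Subset \Omega' \Subset D$, with a constant $C$ depending only on $\underline a$, $\overline a$, $r$ and the geometry.

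Define the solution map $T : w \mapsto $ (unique solution with $h = I_\varphi(w_{yy}-w_y)$). By Young's convolution inequality,
\begin{equation*}
\|I_\varphi f\|_{L^2(\R)} \le \|\varphi\|_{L^1(\R)} \|f\|_{L^2(\R)},
\end{equation*}
so by linearity
\begin{equation*}
\|T(w_1) - T(w_2)\|_{W^{1,2}_2(\Omega)} \le C \|\varphi\|_{L^1(\R)} \|w_1 - w_2\|_{W^{1,2}_2(\Omega')}.
\end{equation*}
Under $C \|\varphi\|_{L^1(\R)} < 1$, $T$ is a strict contraction on the appropriate closed subset of $W^{1,2}_{2,loc}(D)$, so a unique fixed point $\tilde v$ exists; then $u = \tilde v + g$ is the unique weak solution of \eqref{eq:pide2}-\eqref{eq:pide2b}.

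The main obstacle I expect is technical rather than conceptual: reconciling the local parabolic regularity estimates with the nonlocal convolution term, which sees all of $\R$ in the spatial variable. One has to either work in a weighted $L^2$-space on $\R$ in which $\tilde v$, $G$ and the convolution are all well-defined, or carry out the contraction on a careful exhaustion of $D$ by bounded subdomains with uniform constants. Once that functional-analytic set-up is fixed, the Young-inequality estimate plus the parabolic a priori bound collapse into a contraction, and existence and uniqueness follow routinely.
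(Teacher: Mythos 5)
Your overall mechanism (Young's inequality for the convolution plus smallness of $\|\varphi\|_{L^1(\R)}$ giving a Banach contraction) is the same as the paper's, but your reduction step introduces a genuine gap. You split $u = \tilde v + g$ with $g(\tau,y)=\max\{0,1-\text{e}^y\}$ and claim the resulting source $G = a(g_{yy}-g_y)-rg_y+I_\varphi(g_{yy}-g_y)$ lies in $L^2_{loc}(D)$. It does not: $g_{yy} = -\text{e}^y\mathbf{1}_{\{y<0\}}+\delta_0$, so $G$ contains the term $a(\tau,0)\,\delta_0(y)$, a measure persisting for all $\tau\in[0,T]$, which is not an $L^2_{loc}$ function. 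Consequently the $L^2$-based parabolic estimates you invoke do not apply to the problem for $\tilde v$, and in fact $\tilde v = u-g$ cannot belong to $W^{1,2}_{2,loc}(D)$: the solution $u$ is locally $W^{1,2}_2$ for $\tau>0$ while $g$ carries a kink along $\{y=0\}$ for every $\tau$, so the difference inherits a Dirac mass in its second $y$-derivative. This lifting of the payoff is the one used in the paper's Remark for the \emph{weak} ($L^2(H^1)$-type, Kindermann--Mayer) formulation, but it is incompatible with the $W^{1,2}_{2}$-regularity framework in which you want to run the contraction. The second, related problem is that you set up the contraction in $W^{1,2}_{2,loc}$ via interior estimates on nested domains $\Omega\Subset\Omega'$; since each application of the local estimate loses domain and the convolution $I_\varphi$ sees all of $\R$, this does not produce a contraction on a complete normed space without substantial additional work — it is not merely a technicality.

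The paper avoids both issues by a different decomposition: $u = \tilde u + \tilde w$, where $\tilde u = u(a,0)$ solves the purely diffusive Dupire-type equation with the payoff initial condition, whose existence, uniqueness, $W^{1,2}_{2,loc}(D)$ regularity and the global bound on $\|\tilde u_{yy}-\tilde u_y\|_{L^2(D)}$ are taken from Corollary~A.1 of \cite{EggEng2005} (so the nonsmooth initial datum never enters a source term), and the jump correction $\tilde w$ solves the full equation with homogeneous initial data and source $f=-I_\varphi(\tilde u_{yy}-\tilde u_y)\in L^2(D)$. The fixed-point map is then a contraction on the \emph{global} space $W^{1,2}_2(D)$, using Proposition~A.1 of \cite{EggEng2005} together with Young's inequality and $\|\varphi\|_{L^1(\R)}<C^{-1}$, which is exactly where your estimate becomes sound. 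If you replace your lifting by this decomposition (or, alternatively, first build the weak solution à la Kindermann--Mayer and then prove interior regularity separately), the rest of your argument goes through.
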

\begin{proof}
 The existence and uniqueness proof follows by a fixed point argument. Given $(a,\varphi)$ in $\mathcal{D}(F)$ and $f\in L^2(D)$, define the operator $G$ 
 that associates each $v\in W^{1,2}_2(D)$ to $w\in W^{1,2}_2(D)$, solution of
 \begin{equation}
  w_\tau = a(w_{yy}-w_y) - rw_y + I_\varphi(v_{yy}-v_y) + f
  \label{eq:pde1}
 \end{equation}
with homogeneous boundary conditions. By Young's inequality, $I_\varphi(v_{yy}-v_y) \in L^2(D)$. So, by Proposition A.1 in \cite{EggEng2005}, it follows that the PDE problem \eqref{eq:pde1} has a unique solution and $\|w\|_{W^{1,2}_2(D)}\leq C\|I_\varphi(v_{yy}-v_y)+f\|_{L^2(D)}$.
Again, by Young's inequality, $\|I_\varphi(v_{yy}-v_y)\|_{L^2(D)}\leq \|\varphi\|_{L^1(\R)}\|v_{yy}-v_y\|_{L^2(D)} \leq \|\varphi\|_{L^1(\R)}\|v\|_{W^{1,2}_2(D)}$. Since $\|\varphi\|_{L^1(\R)}< C^{-1}$, it follows that, for any $v\in W^{1,2}_2(D)$ with $v\not=0$, 
$\|w\|_{W^{1,2}_2(D)} < \|v\|_{W^{1,2}_2(D)} + C\|f\|_{L^2(D)}$. 
Let us see that $G$ is a contraction. For any $v_1,v_2 \in W^{1,2}_2(D)$, set $w_1 = G(v_1)$, $w_2 = G(v_2)$ and $w=w_1-w_2$. It follows that $w$ is the solution of \eqref{eq:pde1} with $f=0$ and $\|w_1-w_2\|_{W^{1,2}_2(D)} < \|v_1-v_2\|_{W^{1,2}_2(D)}$. So, $G$ is indeed a contraction in $W^{1,2}_2(D)$ and 
has a unique fixed point $\tilde w$, which is the unique solution of
\begin{equation}
 \tilde w_\tau = a(\tilde w_{yy}-\tilde w_y) - r\tilde w_y + I_\varphi(\tilde w_{yy}-\tilde w_y) + f,
 \label{eq:aux_pide}
\end{equation}
with homogeneous boundary conditions.

 Any solution $u$ of the PIDE problem \eqref{eq:pide2}-\eqref{eq:pide2b} can be written as  $u = \tilde w + \tilde u$, where $\tilde w$ is the solution of the PIDE problem \eqref{eq:aux_pide} with $f = -I_{\varphi}(\tilde u_{yy} - \tilde u_y)$ and $\tilde u$ the solution of \eqref{eq:pde1} with $\varphi =0$, $f=0$ and the same boundary and initial conditions as the PIDE problem \eqref{eq:pide2}-\eqref{eq:pide2b}. The existence and uniqueness of $\tilde u \in W^{1,2}_{2,loc}(D)$ is guaranteed by Corollary A.1 in \cite{EggEng2005}. Therefore, the assertion follows.
 \end{proof}
\begin{rem}
 Since $\tilde w$ in the above proof is a fixed point of $G$, it satisfies the inequality
$$
  \|\tilde w\|_{W^{1,2}_{p}(D)} \leq C\left(\|\varphi\|_{L^1(\R)}\|\tilde w\|_{W^{1,2}_{p}(D)}+\|f\|_{L^2(D)}\right) \mbox{ .}
$$
Assuming further that $\|\varphi\|_{L^1(\R)} \leq K/C$ with the constant $0<K<1$, we have that
\begin{equation}
 \|\tilde w\|_{W^{1,2}_{p}(D)} \leq \frac{C}{1-K}\|f\|_{L^2(D)}.
 \label{eq:aux_estimate}
\end{equation}
\end{rem}

\begin{df}
 The direct operator $F : \mathcal{D}(F) \rightarrow W^{1,2}_2(D)$ associates $(\tilde a,\varphi_-,\varphi_+)$ to $u(a,\varphi) - u(a_0,0)$, where $u(a,\varphi)$ is the solution of the PIDE problem \eqref{eq:pide2}-\eqref{eq:pide2b}, with $(a,\varphi)$ in $\mathcal{D}(F)$. In other words, $F(\tilde a,\varphi_-,\varphi_+)$ is the solution of the PIDE problem \eqref{eq:aux_pide} with homogeneous boundary condition and $f = -I_{\varphi}( u(a_0,0)_{yy} - u(a_0,0)_y)$.
 \label{def:direct_op}
 \end{df}

\begin{lem}
 For any $(a,\varphi)$ given by $\mathcal{D}(F)$, $u(a,\varphi)$ solution of the PIDE problem \eqref{eq:pide2}-\eqref{eq:pide2b} satisfies
\begin{equation}
 \|u(a,\varphi)_{yy} - u(a,\varphi)_y\|_{L^2(D)} \leq \displaystyle\frac{C}{1-K},
 \label{eq:unif_estimate}
\end{equation}
with $C$ and $K$ depending on the bounds of the coefficients $a$ and $\varphi$.
\end{lem}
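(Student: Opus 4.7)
The plan is to exploit the decomposition $u(a,\varphi)=\tilde u+\tilde w$ introduced in the proof of Proposition~\ref{prop:existence}, and bound each piece separately using the contraction estimate \eqref{eq:aux_estimate} together with classical parabolic regularity.

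First, I recall that $\tilde u\in W^{1,2}_{2,loc}(D)$ solves the pure local-volatility PDE \eqref{eq:pde1} with $\varphi=0$, $f=0$, and the same initial data $g(y)=\max\{0,1-e^y\}$ as in \eqref{eq:pide2b}. Since $a$ is bounded between the fixed constants $\underline{a}$ and $\overline{a}$, the classical parabolic theory invoked through Corollary~A.1 in \cite{EggEng2005} yields an a~priori estimate of the form
$$
\|\tilde u_{yy}-\tilde u_{y}\|_{L^2(D)}\le C_1,
$$
where $C_1$ depends only on $\underline{a}$, $\overline{a}$, $r$, and the fixed payoff $g$ (this includes weighting arguments if one wants $g\in L^2$, but locally on $D$ they can be handled exactly as in the cited reference).

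Next, $\tilde w$ is the solution of the inhomogeneous PIDE \eqref{eq:aux_pide} with forcing $f=-I_\varphi(\tilde u_{yy}-\tilde u_y)$. Young's inequality gives
$$
\|f\|_{L^2(D)}\le \|\varphi\|_{L^1(\R)}\,\|\tilde u_{yy}-\tilde u_{y}\|_{L^2(D)}\le \|\varphi\|_{L^1(\R)}\,C_1,
$$
and the fixed-point bound \eqref{eq:aux_estimate} from the remark following Proposition~\ref{prop:existence} then yields
$$
\|\tilde w\|_{W^{1,2}_2(D)}\le \frac{C}{1-K}\|f\|_{L^2(D)}\le \frac{C\,\|\varphi\|_{L^1(\R)}\,C_1}{1-K}.
$$
In particular, $\|\tilde w_{yy}-\tilde w_y\|_{L^2(D)}\le \|\tilde w\|_{W^{1,2}_2(D)}$ is controlled by the same bound.

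Finally, applying the triangle inequality to the decomposition $u=\tilde u+\tilde w$ gives
$$
\|u_{yy}-u_y\|_{L^2(D)}\le \|\tilde u_{yy}-\tilde u_y\|_{L^2(D)}+\|\tilde w_{yy}-\tilde w_y\|_{L^2(D)}\le C_1+\frac{C\,\|\varphi\|_{L^1(\R)}\,C_1}{1-K},
$$
which, after absorbing constants (using $\|\varphi\|_{L^1(\R)}\le K/C$), is bounded by $C/(1-K)$ for a new constant $C$ depending only on the bounds of $a$ and $\varphi$, giving \eqref{eq:unif_estimate}. The only real obstacle is justifying the uniform bound on $\|\tilde u_{yy}-\tilde u_y\|_{L^2(D)}$, since the payoff $g$ is only Lipschitz and its weak derivatives are not square-integrable on all of $\R$; this is handled either by working in weighted spaces or by restricting to $W^{1,2}_{2,loc}(D)$ as in Proposition~\ref{prop:existence}, so that the constants depend only on the fixed ellipticity bounds $\underline{a},\overline{a}$ and the fixed data $g$, not on $(a,\varphi)$.
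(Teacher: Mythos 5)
Your proof is correct, and it reaches the bound by a slightly different decomposition than the paper. You split $u(a,\varphi)=\tilde u+\tilde w$ with $\tilde u=u(a,0)$, the jump-free solution for the \emph{same} coefficient $a$, and then bound $\tilde w$ via the fixed-point estimate \eqref{eq:aux_estimate} applied to the source $I_\varphi(\tilde u_{yy}-\tilde u_y)$. The paper instead splits around the fixed reference solution, $u(a,\varphi)=u(a_0,0)+\bigl(u(a,\varphi)-u(a_0,0)\bigr)$, notes that the difference is exactly $F(\tilde a,\varphi_-,\varphi_+)\in W^{1,2}_2(D)$, and bounds it by $\frac{C}{1-K}\|\varphi\|_{L^1(\R)}\|u(a_0,0)_{yy}-u(a_0,0)_y\|_{L^2(D)}$ before applying the same triangle inequality. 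The trade-off is the one you already flag at the end: the paper invokes Corollary~A.1 of \cite{EggEng2005} only for the single fixed function $u(a_0,0)$, so uniformity of that constant over $\mathcal{D}(F)$ is automatic, whereas your route needs the Corollary~A.1 bound for $u(a,0)$ with a constant depending only on $\underline{a},\overline{a}$, $r$ and the payoff $g$; this is consistent both with the lemma's statement (constants may depend on the bounds of the coefficients) and with the paper's own use of Corollary~A.1 for $\tilde u$ in the proof of Proposition~\ref{prop:existence}, so it is a legitimate, if slightly stronger, appeal to the cited parabolic theory. What your choice buys is that both pieces share the same diffusion coefficient, so the remainder's source is purely the convolution term and no multiplication term of the form $(a-a_0)(u(a_0,0)_{yy}-u(a_0,0)_y)$ enters; the final absorption $C_1+\frac{K C_1}{1-K}=\frac{C_1}{1-K}$ using $\|\varphi\|_{L^1(\R)}\le K/C$ is then exactly of the asserted form \eqref{eq:unif_estimate}.
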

\begin{proof}
  By Corollary~A.1 in \cite{EggEng2005}, $\|u(a_0,0)_{yy} - u(a_0,0)_y\|_{L^2(D)} < C$ for some constant $C$, and since $u(a,\varphi)-u(a_0,0) \in W^{1,2}_2(D)$ for any $(a,\varphi) \in \mathcal{D}(F)$, it follows that 
\begin{multline*}
  \|u(a,\varphi)_{yy} - u(a,\varphi)_y\|_{L^2(D)} = \|u(a,\varphi)_{yy} - u(a,\varphi)_y \pm (u(a_0,0)_{yy} - u(a_0,0)_y)\|_{L^2(D)}\\
  \leq \|(u(a,\varphi)-u(a_0,0))_{yy} - (u(a,\varphi)-u(a_0,0))_y\|_{L^2(D)} +\| u(a_0,0)_{yy} - u(a_0,0)_y\|_{L^2(D)}.
 \end{multline*}
 By Equation~\eqref{eq:aux_estimate} and Corollary~A.1 in \cite{EggEng2005},
 \begin{multline*}
  \|(u(a,\varphi)-u(a_0,0))_{yy} - (u(a,\varphi)-u(a_0,0))_y\|_{L^2(D)}\leq \|u(a_0,\varphi)-u(a_0,0)\|_{W^{1,2}_2(D)} \\ \leq \frac{C}{1-K}\|\varphi\|_{L^1(\R)}\|u(a_0,0)_{yy} - u(a_0,0)_y\|_{L^2(D)}.
 \end{multline*}
 Since $\|\varphi\|_{L^1(\R)} \leq K/C$, it follows that
 $$
 \|u(a,\varphi)_{yy} - u(a,\varphi)_y\|_{L^2(D)} \leq \displaystyle\frac{C}{1-K},
 $$
 for any $(a,\varphi)$ given by $\mathcal{D}(F)$.
\end{proof}
We can now state the following:
\begin{pr}
 The map $F : \mathcal{D}(F) \rightarrow W^{1,2}_2(D)$ is continuous.
  \label{prop:continuity}
\end{pr}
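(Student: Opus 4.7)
The plan is to establish continuity by a standard difference-equation argument combined with the a priori estimate \eqref{eq:aux_estimate} and the uniform bound \eqref{eq:unif_estimate}. Given two points $(a_1,\varphi_1), (a_2,\varphi_2) \in \mathcal{D}(F)$, I would set $u_i = u(a_i,\varphi_i)$ (so $F(\cdot) = u_i - u(a_0,0)$) and study $w = u_1 - u_2$, which is also the difference $F(a_1,\varphi_1) - F(a_2,\varphi_2)$.

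First I would subtract the two PIDEs for $u_1$ and $u_2$ and reorganize the result so that $a_1$ and $\varphi_1$ act on $w$ while the ``perturbation'' terms involving $a_1 - a_2$ and $\varphi_1 - \varphi_2$ act on $u_2$. This yields an equation of the form
\begin{equation*}
 w_\tau = a_1(w_{yy} - w_y) - r w_y + I_{\varphi_1}(w_{yy} - w_y) + f,
\end{equation*}
with homogeneous initial/boundary data, where
\begin{equation*}
 f = (a_1 - a_2)(u_{2,yy} - u_{2,y}) + I_{\varphi_1 - \varphi_2}(u_{2,yy} - u_{2,y}).
\end{equation*}
This is precisely the fixed-point PIDE \eqref{eq:aux_pide} with coefficients $(a_1,\varphi_1)$ and source $f$, so estimate \eqref{eq:aux_estimate} applies and gives $\|w\|_{W^{1,2}_2(D)} \leq \frac{C}{1-K}\|f\|_{L^2(D)}$.

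Next I would bound $\|f\|_{L^2(D)}$. For the first term, the Sobolev embedding $H^{1+\varepsilon}(D) \hookrightarrow L^\infty(D)$ (valid for $D \subset \R^2$) gives $\|a_1 - a_2\|_{L^\infty(D)} \leq C\|\tilde a_1 - \tilde a_2\|_{H^{1+\varepsilon}(D)}$, and then
\begin{equation*}
 \|(a_1 - a_2)(u_{2,yy} - u_{2,y})\|_{L^2(D)} \leq \|a_1 - a_2\|_{L^\infty(D)} \|u_{2,yy} - u_{2,y}\|_{L^2(D)}.
\end{equation*}
For the integral term, Young's inequality gives
\begin{equation*}
 \|I_{\varphi_1 - \varphi_2}(u_{2,yy} - u_{2,y})\|_{L^2(D)} \leq \|\varphi_1 - \varphi_2\|_{L^1(\R)} \|u_{2,yy} - u_{2,y}\|_{L^2(D)},
\end{equation*}
and the embeddings $W^{2,1}(-\infty,0), W^{2,1}(0,+\infty) \hookrightarrow L^1$ control $\|\varphi_1 - \varphi_2\|_{L^1(\R)}$ by $\|(\tilde a_1,\varphi_{1,-},\varphi_{1,+}) - (\tilde a_2,\varphi_{2,-},\varphi_{2,+})\|_X$. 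The uniform bound \eqref{eq:unif_estimate} applied to $u_2$ makes $\|u_{2,yy} - u_{2,y}\|_{L^2(D)}$ harmless, and putting everything together yields
\begin{equation*}
 \|F(a_1,\varphi_1) - F(a_2,\varphi_2)\|_{W^{1,2}_2(D)} \leq C'\|(\tilde a_1,\varphi_{1,-},\varphi_{1,+}) - (\tilde a_2,\varphi_{2,-},\varphi_{2,+})\|_X,
\end{equation*}
i.e.\ Lipschitz continuity (and in particular continuity).

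The main obstacle I expect is handling the change of the principal coefficient cleanly: the estimate \eqref{eq:aux_estimate} must be invoked with coefficients $(a_1,\varphi_1)$ both lying within $\mathcal{D}(F)$, which is fine because the constant $C$ depends only on the bounds $\underline{a},\overline{a},r$ and $K$ bounds $\|\varphi\|_{L^1(\R)}$ uniformly. A secondary technical point is justifying $\|a_1 - a_2\|_{L^\infty(D)} \lesssim \|\tilde a_1 - \tilde a_2\|_{H^{1+\varepsilon}(D)}$ on the strip $D = [0,T]\times \R$; this follows from the embedding into continuous bounded functions, which is exactly the role of the regularity parameter $\varepsilon>0$ in the choice of space $H^{1+\varepsilon}(D)$.
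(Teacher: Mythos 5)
Your argument is correct and is essentially the paper's own proof: the paper likewise writes the difference of solutions as a solution of the auxiliary PIDE \eqref{eq:aux_pide} with coefficients from one of the two parameter pairs and source $(a_1-a_2)(u_{2,yy}-u_{2,y})$ plus the convolution term, then applies \eqref{eq:aux_estimate}, the Sobolev embedding $H^{1+\varepsilon}(D)\hookrightarrow L^\infty(D)$, Young's inequality, and the uniform bound \eqref{eq:unif_estimate}. Your final Lipschitz-type estimate is exactly the bound with which the paper concludes, so no gap to report.
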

\begin{proof}
 Let the sequence $\{(\tilde a,\varphi_{-,n},\varphi_{+,n})\}_{n\in\N}$ in $\mathcal{D}(F)$ converge to $(\tilde a,\varphi_{-},\varphi_{+})$. We must show that $\|F(\tilde a,\varphi_{-,n},\varphi_{+,n}) - F(\tilde a,\varphi_{-},\varphi_{+})\|\rightarrow 0$.
 Define 
 $$w_n := F(\tilde a,\varphi_{-,n},\varphi_{+,n}) - F(\tilde a,\varphi_{-},\varphi_{+}) = u(a_n,\varphi_n) - u(a,\varphi).$$ 
 By the linearity of the PIDE problem \eqref{eq:pide2}-\eqref{eq:pide2b}, $w_n$ is the solution of the PIDE problem \eqref{eq:aux_pide} with $a$ and $\varphi$ replaced by $a_n$ and $\varphi_n$, respectively, homogeneous boundary conditions and 
 $$f_n = (a_n-a)(u(a,\varphi)_{yy} - u(a,\varphi)_y) - I_{\varphi_n-\varphi}(u(a,\varphi)_{yy} - u(a,\varphi)_y).$$
 So, by the estimate \eqref{eq:aux_estimate} and Young's inequality for convolutions,
 \begin{multline*}
  \|w_n\|_{W^{1,2}_2(D)}\leq \displaystyle\frac{C}{1-K}\|(a_n-a)(u(a,\varphi)_{yy} - u(a,\varphi)_y) - I_{\varphi_n-\varphi}(u(a,\varphi)_{yy} - u(a,\varphi)_y)\|_{L^2(D)}\\
  \leq \displaystyle\frac{C}{1-K}\left(\|(a_n-a)(u(a,\varphi)_{yy} - u(a,\varphi)_y)\|_{L^2(D)} +\right.\\
  \left. \|\varphi_n-\varphi\|_{L^1(\R)}\|u(a,\varphi)_{yy} - u(a,\varphi)_y)\|_{L^2(D)}\right)
 \end{multline*}
By the Sobolev embedding (see Theorem~7.75 in \cite{iorio}), 
it follows that
\begin{multline*}
\|(a_n-a)(u(a,\varphi)_{yy} - u(a,\varphi)_y)\|_{L^2(D)} \leq \|a_n-a\|_{L^\infty(D)}\|u(a,\varphi)_{yy} - u(a,\varphi)_y\|_{L^2(D)}\\
\leq c\|a_n-a\|_{H^{1+\varepsilon}(D)}\|u(a,\varphi)_{yy} - u(a,\varphi)_y\|_{L^2(D)}
\end{multline*}
The above estimate and Equation~\eqref{eq:unif_estimate} imply that 
\begin{multline*}
 \|(a_n-a)(u_{yy} - u_y)\|_{L^2(D)} + \|\varphi_n-\varphi\|_{L^1(\R)}\|u_{yy} - u_y\|_{L^2(D)}\\ \leq \displaystyle\frac{\tilde C}{1-K}\left(\|a_n-a\|_{H^{1+\varepsilon}(D)} + \|\varphi_n-\varphi\|_{L^1(\R)}\right)
\end{multline*}
Summarizing,
$$
\|w_n\|_{W^{1,2}_2(D)} \leq \left(\displaystyle\frac{\tilde C}{1-K}\right)^2\left(\|a_n-a\|_{H^{1+\varepsilon}(D)} + \|\varphi_n-\varphi\|_{L^1(\R)}\right),
$$
and the assertion follows.
\end{proof}

\begin{pr}
 The map $F : \mathcal{D}(F) \rightarrow W^{1,2}_2(D)$ is weakly continuous and compact.
  \label{prop:compactness}
\end{pr}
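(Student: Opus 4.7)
The plan is to reduce Proposition~\ref{prop:compactness} to the continuity estimate already obtained in the proof of Proposition~\ref{prop:continuity}, by extracting from any weakly convergent (or merely bounded) sequence in $X$ enough strong convergence of the coefficients $(a_n,\varphi_n)$ to pass to the limit via dominated convergence and Young's inequality. The argument handles weak continuity and compactness simultaneously, because it ultimately produces \emph{strong} convergence of the images in $W^{1,2}_2(D)$ along every subsequence.

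First I would extract convergent subsequences of the coefficients. Reflexivity of $H^{1+\varepsilon}(D)$ yields a weakly convergent subsequence $\tilde a_n \rightharpoonup \tilde a$, and the compact Sobolev embedding $H^{1+\varepsilon}(Q)\hookrightarrow\hookrightarrow L^\infty(Q)$ on bounded rectangles $Q=[0,T]\times[-R,R]$ (using $H^{1+\varepsilon}\hookrightarrow C^{\varepsilon}$ in two dimensions and Arzel\`a--Ascoli), combined with a diagonal argument in $R$, upgrades this to $a_n\to a$ uniformly on compacts and pointwise a.e.\ on $D$, the bounds $\underline a\le a_n\le\overline a$ being preserved in the limit. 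The compact embedding $W^{2,1}(I)\hookrightarrow\hookrightarrow C(I)$ on bounded subintervals $I$, plus a further diagonal argument, yields pointwise a.e.\ convergence $\varphi_n\to\varphi$ on $\R$. To promote this to strong $L^1(\R)$ convergence I would invoke the Dunford--Pettis theorem: weak $W^{2,1}$-convergence implies weak $L^1$-convergence, hence uniform integrability and tightness of $\{\varphi_n\}$, and combined with the a.e.\ convergence just established, Vitali's convergence theorem delivers $\|\varphi_n-\varphi\|_{L^1(\R)}\to 0$.

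With these strong convergences in hand, I would run the estimate from the proof of Proposition~\ref{prop:continuity} with $u=u(a,\varphi)$:
$$
\|F(a_n,\varphi_n)-F(a,\varphi)\|_{W^{1,2}_2(D)} \le \frac{C}{1-K}\bigl\|(a_n-a)(u_{yy}-u_y) - I_{\varphi_n-\varphi}(u_{yy}-u_y)\bigr\|_{L^2(D)}.
$$
The first summand vanishes in the limit by dominated convergence, with integrable majorant $(2\overline a)^2|u_{yy}-u_y|^2\in L^1(D)$ supplied by the uniform bound on $u_{yy}-u_y$ in $L^2(D)$ from the preceding lemma. The second summand is dominated, by Young's inequality, by $\|\varphi_n-\varphi\|_{L^1(\R)}\|u_{yy}-u_y\|_{L^2(D)}$, which tends to zero by the $L^1$-convergence just established. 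A standard subsequence-of-subsequence argument then yields strong convergence $F(a_n,\varphi_n)\to F(a,\varphi)$ in $W^{1,2}_2(D)$ of the full sequence, implying both weak continuity and compactness.

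The main obstacle is the global $L^1(\R)$ convergence of the L\'evy-tail functions, because $W^{2,1}$ on a half-line is non-reflexive and fails to compactly embed into $L^1$: a priori, mass could escape to infinity. For weakly convergent input sequences the Dunford--Pettis characterization provides the required uniform integrability and tightness for free, but the compactness claim starts from a merely bounded sequence, in which case weak $L^1$-convergence is not automatic and a direct tightness argument must be supplied, exploiting the structure of $\mathcal{D}(F)$---for instance, the non-negative monotone decay of L\'evy-tail functions, which via $\varphi_{+}(y)\le\|\varphi_+\|_{L^1}/y$ on $(0,+\infty)$ and the analogous bound on $(-\infty,0)$ enforces uniform decay at infinity.
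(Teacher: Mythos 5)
Your proof is correct, and while it uses the same starting point as the paper (the a priori bound $\|F(a_n,\varphi_n)-F(a,\varphi)\|_{W^{1,2}_2(D)}\leq \frac{C}{1-K}\|(a_n-a)(u_{yy}-u_y)-I_{\varphi_n-\varphi}(u_{yy}-u_y)\|_{L^2(D)}$ together with the uniform estimate on $u_{yy}-u_y$), the way you kill the two source terms is genuinely different. The paper never establishes global $L^1(\R)$ convergence of $\varphi_n$: it splits $D$ into $D_M\cup D_M^c$, handles the $a$-term by the compact embedding $H^{1+\varepsilon}(D_M)\hookrightarrow\hookrightarrow H^{1+\varepsilon/2}(D_M)$ plus an $\eta/2$-argument on the tail, and handles the $\varphi$-term by Jensen, Fubini and Rellich--Kondrachov on $|x|\leq M$ followed by dominated convergence. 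You instead upgrade the weak convergence of the coefficients to pointwise a.e.\ convergence of $a_n$ (dominated convergence then disposes of the first term directly, with the majorant $(\overline a-\underline a)^2|u_{yy}-u_y|^2$, avoiding the paper's $M$-splitting) and to \emph{strong} $L^1(\R)$ convergence of $\varphi_n$ via Dunford--Pettis plus Vitali, after which Young's inequality finishes the second term in one line. This buys you something concrete: the paper's tail term $I_1$ must be small \emph{uniformly in $n$} before one can send $M\to\infty$, a point the paper passes over rather quickly, whereas your tightness-plus-uniform-integrability argument handles exactly that uniformity. Your closing remark is also well taken: since $W^{2,1}$ on a half-line is not reflexive, "compactness" starting from a merely bounded sequence does not reduce to the weakly-convergent case, and neither your main argument nor the paper's proof addresses this; the paper in effect only proves weak-to-norm sequential continuity, which is what is actually used later, and your suggested fix via tightness of monotone L\'evy tails would only apply to genuine double-exponential tails rather than to all of $\mathcal{D}(F)$, so it should be flagged as an additional hypothesis rather than a complete repair.
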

\begin{proof}
 Let the sequence $\{(\tilde{a}_n,\varphi_{-,n},\varphi_{+,n})\}_{n\in\N}$ in $\mathcal{D}(F)$ converge weakly to $(\tilde a,\varphi_{-},\varphi_{+})$. Proceeding as in the proof of Proposition~\ref{prop:continuity} 
 define
 $$w_n := F(\tilde{a}_n,\varphi_{-,n},\varphi_{+,n}) - F(\tilde a,\varphi_{-},\varphi_{+}) = u(a_n,\varphi_n) - u(a,\varphi).$$ 
So it satisfies the PIDE~\eqref{eq:aux_pide} with $a$ and $\varphi$ replaced by $a_n$ and $\varphi_n$, respectively, and homogeneous boundary condition. Furthermore, it   
satisfies
\begin{multline}
\|w_n\|_{W^{1,2}_2(D)} \leq \displaystyle\frac{C}{1-K}\left(\|(a_n-a)(u(a,\varphi)_{yy} - u(a,\varphi)_y)\|_{L^2(D)}\right. \\ 
+\left. \|I_{\varphi_n-\varphi}(u(a,\varphi)_{yy} - u(a,\varphi)_y)\|_{L^2(D)}\right). \label{sum2}
\end{multline}
We shall prove that each of the two terms on the RHS of Equation~\eqref{sum2} goes to zero as $n\rightarrow \infty$.
In either case, we decompose the set $D$ as the disjoint union $D = D_M\cup D_M^c$, where
$$D_M =  [0,T]\times [-M,M],$$
with $M>0$. 
Concerning the first term on the RHS of Equation~\eqref{sum2}, we have by Sobolev's embedding that 
\begin{multline}
 \|(a_n-a)(u_{yy} - u_y)\|_{L^2(D)} = \|(a_n-a)(u_{yy} - u_y)\|_{L^2(D_M)} + \|(a_n-a)(u_{yy} - u_y)\|_{L^2(D_M^c)}\\
 \leq \|a_n-a\|_{H^{1+\varepsilon/2}(D_M)}\|u_{yy} - u_y\|_{L^2(D_M)} + \|a_n-a\|_{H^{1+\varepsilon/2}(D_M^c)}\|u_{yy} - u_y\|_{L^2(D_M^c)} \label{sum3}
\end{multline}
By the compact immersion of $H^{1+\varepsilon}(D_M)$ into $H^{1+\varepsilon/2}(D_M)$ we have that weakly convergent
sequences of $H^{1+\varepsilon}(D_M)$ are sent into norm convergent ones in  $H^{1+\varepsilon/2}(D_M)$
(Proposition~IV.4.4 in \cite{Tay2011}). 
Thus, $\|a_n-a\|_{H^{1+\varepsilon/2}(D_M)}\rightarrow 0$.  Now, we recall that 
$\|u_{yy} - u_y\|_{L^2(D_M^c)}\rightarrow 0$ as $M\rightarrow +\infty$. 
To see that the RHS of Inequality~\eqref{sum3} goes to zero, note that, given $\eta>0$, for a sufficiently large $M$, $\|a_n-a\|_{H^{1+\varepsilon/2}(D_M^c)}\|u_{yy} - u_y\|_{L^2(D_M^c)}<\eta/2$, since $\|a_n-a\|_{H^{1+\varepsilon/2}(D_M^c)}$ is dominated by $\|a_n-a\|_{H^{1+\varepsilon/2}(D)}$, which is uniformly bounded. In addition, 
$\|u_{yy} - u_y\|_{L^2(D_M)}$ is bounded by $\|u_{yy} - u_y\|_{L^2(D)}$, which is finite. Thus, 
for all  sufficiently large $n\in \N$, and with the same $M$ of the previous estimate, $\|a_n-a\|_{H^{1+\varepsilon/2}(D_M)}\|u_{yy} - u_y\|_{L^2(D_M)} < \eta/2$. 

Concerning the convergence of the  second term in Equation~\eqref{sum2}, 
by Jensen's inequality, we have that  
\begin{multline*}
\|I_{\varphi_n-\varphi}(u(a,\varphi)_{yy} - u(a,\varphi)_y)\|^2_{L^2(D)}\\ \leq \|\varphi_n-\varphi\|_{L^1(\R)}\displaystyle\int_{D}\int_{\R}|\varphi_n(x-y)-\varphi(x-y)|(u_{yy}(\tau,y) - u_y(\tau,y))^2dy d\tau dx.
\end{multline*}
So, breaking it into the following two integrals we get 
\begin{multline*}
\displaystyle\int_{D}\int_{\R}|\varphi_n(x-y)-\varphi(x-y)|(u_{yy}(\tau,y) - u_y(\tau,y))^2dy d\tau dx \\ = 
\displaystyle\int_{D_M^c}\int_{\R}|\varphi_n(y)-\varphi(y)|(u_{yy}(\tau,x-y) - u_y(\tau,x-y))^2dy d\tau dx\\ +
\displaystyle\int_{D_M}\int_{\R}|\varphi_n(x-y)-\varphi(x-y)|(u_{yy}(\tau,y) - u_y(\tau,y))^2dy d\tau dx =: I_1 + I_2.
\end{multline*}
The integral $I_1$ goes to zero by the dominated convergence theorem as $M\rightarrow \infty$ (Theorem~1.50 in \cite{AdaFou2003}). 
By Fubini's Theorem, it follows that
$$
I_2 = 
\displaystyle\int_{D}(u_{yy}(\tau,y) - u_y(\tau,y))^2\int_{|x|\leq M}|\varphi_n(x-y)-\varphi(x-y)|dxd\tau dy
$$
For almost every $y\in \R$, the Rellich-Kondrachov theorem (Part II of Theorem~6.3 in \cite{AdaFou2003}) implies that $\int_{|x|\leq M}|\varphi_n(x-y)-\varphi(x-y)|dx$ goes to zero. Just recall that $\varphi|_{(-\infty,0)}\in W^{2,1}(-\infty,0)$ and $\varphi|_{(0,+\infty)}\in W^{2,1}(0,+\infty)$. 
By the estimate
$$I_2 \leq \|\varphi_n-\varphi\|_{L^1(\R)}\|u_{yy}-u_y\|^2_{L^2(D)} \leq \frac{2KC}{(1-K)^2} \mbox{, }$$
we can apply the dominated convergence theorem to get that $I_2$ goes to zero as $n\rightarrow \infty$, for each fixed $M$.  
Therefore, $\|w_n\|_{W^{1,2}_2(D)}\rightarrow 0$ and the assertion follows.
\end{proof}
We formally define the derivative of $F$ and then we show that it is in fact the Frech\'et derivative of $F$. 
\begin{df}
 The derivative of $F$ at $(a,\varphi)$ in the direction $h = (h_1,h_2) \in X$, such that $(a+h_1,\varphi+h_2)\in \mathcal{D}(F)$, is the solution of the PIDE problem \eqref{eq:aux_pide} with homogeneous boundary condition and
 $$
 f = h_1(u(a,\varphi)_{yy} - u(a,\varphi)_y) - I_{h_2}(u(a,\varphi)_{yy} - u(a,\varphi)_y),
 $$
 where $u(a,\varphi)$ denotes the solution of the PIDE problem \eqref{eq:pide2}-\eqref{eq:pide2b}.
 Such derivative is denoted by $F^\prime(a,\varphi)h$ or $u^\prime(a,\varphi)h$, and is in $W^{1,2}_2(D)$.
\end{df}

\begin{rem}
 By the proof of Proposition~\ref{prop:existence}, for any $h_1 \in H^{1+\varepsilon}(D)$ and any $h_2 \in L^1(\R)$ the PIDE problem of the definition above still has a solution in $W^{1,2}_2(D)$. In addition, such PIDE problem is linear with respect to $h = (h_1,h_2) \in X$.
 So, for every $(a,\varphi) \in \mathcal{D}(F)$, $h \mapsto F^\prime(a,\varphi)h$ is a linear and bounded map from $X$ to $W^{1,2}_2(D)$, satisfying
 \begin{equation}
  \|F^\prime(a,\varphi)h\| \leq \left(\displaystyle\frac{C}{1-K}\right)^2\|h\|_X.
 \end{equation}

 \end{rem}
\begin{pr}
 The map $F : \mathcal{D}(F) \rightarrow W^{1,2}_2(D)$ is Frech\'et differentiable and satisfies
 \begin{multline}
  \|F(a+h_1,\varphi+h_2) - F(a,\varphi) - F^\prime(a,\varphi)h\|_{W^{1,2}_2(D)} \\
  \leq \displaystyle\frac{C}{1-K}\|h\|_X\|F(a+h_1,\varphi+h_2) - F(a,\varphi)\|_{W^{1,2}_2(D)},
  \label{eq:tangential_pide}
 \end{multline}
for any $(a,\varphi) \in \mathcal{D}(F)$ and any $h = (h_1,h_2)\in X$, such that $(a+h_1,\varphi+h_2) \in \mathcal{D}(F)$.
 \label{prop:frechet}
\end{pr}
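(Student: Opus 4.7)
The plan is to write $F(a+h_1,\varphi+h_2)-F(a,\varphi)-F'(a,\varphi)h$ as the solution of a PIDE of the form \eqref{eq:aux_pide} with an appropriate right-hand side, and then apply the a priori estimate \eqref{eq:aux_estimate} already established for such PIDEs.

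More concretely, set $u=u(a,\varphi)$, $u_1=u(a+h_1,\varphi+h_2)$, $z:=u_1-u$ and $p:=F'(a,\varphi)h$. Because the shift $u(a_0,0)$ in the definition of $F$ is independent of the parameters, one has $F(a+h_1,\varphi+h_2)-F(a,\varphi)=z$, so that the residual of interest is $w:=z-p$. First I would subtract the PIDEs~\eqref{eq:pide2} satisfied by $u_1$ and $u$; a direct calculation using the linearity of $I_\varphi$ in $\varphi$ and the algebraic identity
\[
(a+h_1)(u_{1,yy}-u_{1,y}) - a(u_{yy}-u_y) = a(z_{yy}-z_y) + h_1(u_{yy}-u_y) + h_1(z_{yy}-z_y)
\]
(and the analogous identity for the convolution term) shows that $z$ satisfies the PIDE
\[
z_\tau = a(z_{yy}-z_y) - rz_y + I_\varphi(z_{yy}-z_y) + \bigl[h_1(u_{yy}-u_y) + I_{h_2}(u_{yy}-u_y)\bigr] + \bigl[h_1(z_{yy}-z_y) + I_{h_2}(z_{yy}-z_y)\bigr],
\]
with homogeneous initial/boundary data. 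Comparing with the PIDE defining $p$ (whose source is precisely the first bracketed term), the residual $w=z-p$ solves \eqref{eq:aux_pide} with coefficients $(a,\varphi)$ and source
\[
\tilde f = h_1(z_{yy}-z_y) + I_{h_2}(z_{yy}-z_y).
\]

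At this point the a priori estimate \eqref{eq:aux_estimate} yields
\[
\|w\|_{W^{1,2}_2(D)} \leq \frac{C}{1-K}\|\tilde f\|_{L^2(D)}.
\]
The source $\tilde f$ is controlled by Young's inequality (for the convolution) and the Sobolev embedding $H^{1+\varepsilon}(D)\hookrightarrow L^\infty(D)$ used already in the proof of Proposition~\ref{prop:continuity}, giving
\[
\|\tilde f\|_{L^2(D)} \leq \bigl(\|h_1\|_{L^\infty(D)}+\|h_2\|_{L^1(\R)}\bigr)\|z_{yy}-z_y\|_{L^2(D)} \leq c\,\|h\|_X\,\|z\|_{W^{1,2}_2(D)}.
\]
Since $\|z\|_{W^{1,2}_2(D)} = \|F(a+h_1,\varphi+h_2)-F(a,\varphi)\|_{W^{1,2}_2(D)}$, the desired inequality \eqref{eq:tangential_pide} follows, which, combined with the continuity of $F$ established in Proposition~\ref{prop:continuity}, shows that $F'(a,\varphi)$ is the Fr\'echet derivative.

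The main obstacle is really just the careful algebraic bookkeeping in the subtraction step that isolates $w$ as the solution of a PIDE of type \eqref{eq:aux_pide} whose source vanishes to second order in $h$; once this is done, everything reduces to previously proven estimates. A minor point to watch is the sign convention in the definition of $F'(a,\varphi)h$ (the source term for the derivative PIDE must have the same sign on $I_{h_2}(u_{yy}-u_y)$ as arises from linearizing $I_{\varphi+h_2}$ in $h_2$), so that the first bracketed source in the PIDE for $z$ cancels exactly against the source of $p$.
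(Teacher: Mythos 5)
Your proposal is correct and follows essentially the same route as the paper's proof: identify the residual $w = F(a+h_1,\varphi+h_2)-F(a,\varphi)-F'(a,\varphi)h$ as the solution of the linear PIDE \eqref{eq:aux_pide} with a source quadratic in $(h,v)$, then apply the a priori bound \eqref{eq:aux_estimate} together with Young's inequality and the embedding $H^{1+\varepsilon}(D)\hookrightarrow L^\infty(D)$, and conclude Fr\'echet differentiability from the continuity of $F$. Your closing remark about the sign of the $I_{h_2}$ term in the definition of $F'(a,\varphi)h$ is well taken; the paper's stated signs are inconsistent with the linearization, but this does not affect the norm estimate.
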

\begin{proof}
 Let $(a,\varphi) \in \mathcal{D}(F)$ be fixed and $h = (h_1,h_2)\in X$ be such that $(a+h_1,\varphi+h_2)\in \mathcal{D}(F)$. Define
 $$
 w = F(a+h_1,\varphi+h_2) - F(a,\varphi) - F^\prime(a,\varphi)h.
 $$
 and 
 $v = F(a+h_1,\varphi+h_2) - F(a,\varphi)$. 
 By the linearity of the PIDE problems \eqref{eq:pide2}-\eqref{eq:pide2b} and \eqref{eq:aux_pide}, $w$ is the solution of the PIDE problem \eqref{eq:aux_pide}, with homogeneous boundary conditions and 
 $$f = -h_1(v_{yy} - v_y) + I_{h_2}(v_{yy} - v_y).$$
 So, $w$ satisfies the estimate
$$
 \|w\|_{W^{1,2}_2(D)}\leq \displaystyle\frac{C}{1-K}\|-h_1(v_{yy} - v_y) + I_{h_2}(v_{yy} - v_y)\|_{L^2(D)} \mbox{ .}
$$
By the triangle inequality, Young's inequality for the convolution and Sobolev's embedding's theorem, it follows that
\begin{multline*}
\|-h_1(v_{yy} - v_y) + I_{h_2}(v_{yy} - v_y)\|_{L^2(D)} \leq \|v_{yy}-v_y\|_{L^2(D)}\left(\|h_1\|_{H^{1+\varepsilon}(D)} + \|h_2\|_{L^1(\R)}\right)\\
 \leq \|v\|_{W^{1,2}_2(D)}\|h\|_{X},
\end{multline*}
and the asserted estimate holds.

The set $\mathcal{D}(F)$ has a nonempty interior, $h\mapsto F^\prime(a,\varphi)h$ is a bounded linear map from $X$ to $W^{1,2}_2(D)$, and the estimate \eqref{eq:tangential_pide} implies that 
$$
\displaystyle\lim_{\|h\|_X\rightarrow 0}\frac{\|F(a+h_1,\varphi+h_2) - F(a,\varphi) - F^\prime(a,\varphi)h\|_{W^{1,2}_2(D)}}{\|h\|_X}=0 \mbox{ .}
$$
Thus, $F$ is Frech\'et differentiable. 
\end{proof}


\section{Splitting Strategy and Regularization}\label{sec:splitting} 
In this section, under an abstract setting, we consider a Tikhonov-type regularization of the simultaneous calibration of two parameters from a set of observations. A splitting strategy is used to solve the resulting minimization problem. Results concerning the convergence of this approach to an approximate solution of the inverse problem are provided. They rely on certain assumptions which will be shown to hold for the calibration problem at hand of jump-diffusion local volatility models. 

\subsection{Tikhonov-type Regularization}\label{sec:tikhonov}
Firstly, let us introduce some basic notions of Tikhonov-type regularization.  This methodology has been used extensively for the solution of ill-posed inverse problems. See \cite{schervar} and \cite{ern} for more details.

Consider the map $F:\mathcal{D}(F)\subset X \rightarrow Y$ between two Banach spaces $X$ and $Y$. Given $\tilde y$ in the range of $F$, $\mathcal R(F)$, find some $x \in \mathcal{D}(F)$ solution of the equation:
\begin{equation}
 \tilde y = F(x).
 \label{eq:inverse_problem1}
\end{equation}
Since there may be more than one element in $\mathcal{D}(F)$ solving \eqref{eq:inverse_problem1}, it is common to search for a solution $x^\dagger$ that minimizes some convex functional  $f_{x_0}:\mathcal{D}(f_{x_0})\subset X\rightarrow \R_+$, which is related to some {\em a priori} information. 
So,  
$$
x^\dagger \in \argmin\left\{f_{x_0}(x) ~:~x\in\mathcal{D}(F)\,\mbox{ and }\, F(x) = \tilde y\right\},
$$
and is so-called a $f_{x_0}$-minimizing solution.

In general, it is not possible to have access to the data $\tilde y$ in $\mathcal R(F)$, but only some imperfect approximation $y^\delta \in Y$ satisfying
\begin{equation}
 \|P\tilde y-y^\delta\|_Y \leq \delta,
 \label{eq:data}
\end{equation}
where $\delta > 0$ is the noise level and $P:Y\rightarrow Y$ is a projection onto some subspace of $Y$, where $y^\delta$ is defined. For example, $P$ can define the observation of $y$ in some discrete mesh.

Since the inverse problem \eqref{eq:inverse_problem1} can be ill-posed, Tikhonov-type regularization is applied, i.e., we must find an element of $\mathcal{D}(F)$ that minimizes the (Tikhonov-type) functional:
\begin{equation}
 \mathcal F(x) = \phi(x) + \alpha f_{x_0}(x),
 \label{eq:tikhonov1}
\end{equation}
where
\begin{equation}
 \phi(x) = \|F(x) - y^\delta\|^p_Y
 \label{eq:data_misfit}
\end{equation}
is the {\em data misfit} or {\em merit function}, and $\alpha > 0$ is a constant so-called {\em regularization parameter}. The penalization $f_{x_0}$ is called the {\em regularization functional}. The minimizers of \eqref{eq:tikhonov1} in $\mathcal{D}:=\mathcal{D}(F)\cap\mathcal{D}(f_{x_0})$ are called Tikhonov minimizers or reconstructions, and are denoted by $x^{\delta}_{\alpha}$.

The framework of convex regularization will now be used. See~\cite{schervar} for more information. 
In what follows, we shall need:
\begin{ass}[Assumption~3.13 in \cite{schervar}] Let us assume that 

\begin{enumerate}
  \item The topologies  $T_X$ and $T_Y$ associated to $X$ and $Y$, respectively, are weaker than the corresponding norm topologies.
  \item The exponent in Equation~\eqref{eq:data_misfit} satisfies $p \geq 1$.
  \item The norm of $Y$ is sequentially lower semi-continuous with respect to $T_Y$.
  \item $f_{x_0}$ is convex and continuous with respect to $T_X$.
  \item The objective set satisfies $\mathcal{D}\not= \emptyset$, and $\mathcal D$ has a nonempty interior.
  \item For every $\alpha > 0$ and $M > 0$ the level set
  $$
  M_{\alpha}(M) := \left\{x\in \mathcal D~:~ \mathcal F(x) \leq M\right\}
  $$
  is sequentially pre-compact with respect to $T_X$.
  \item For every $\alpha > 0$ and $M > 0$ the level set $M_{\alpha}(M)$ is sequentially closed w.r.t. $T_X$ and the restriction of $F$ to $M_{\alpha}(M)$ is sequentially continuous w.r.t. $T_X$ and $T_Y$.
 \end{enumerate}
 \label{ass:2}
\end{ass}

By Assumption~\ref{ass:2}, the existence of stable Tikhonov minimizers is guaranteed by Theorems~3.22 and 3.23 in \cite{schervar}. If the inverse problem in \eqref{eq:inverse_problem1} has a solution, then, also based on Assumption~\ref{ass:2}, Theorem~3.25 in \cite{schervar} says that there exists an $f_{x_0}$-minimizing solution of \eqref{eq:inverse_problem1} and Theorem~3.26 states the convergence of a 
sequence of Tikhonov minimizers to an $f_{x_0}$-minimizing solution whenever $\delta\rightarrow 0$ and $\alpha = \alpha(\delta)$ satisfies the limits:
\begin{equation}
 \lim_{\delta\rightarrow 0}\alpha(\delta) = 0 ~\mbox{ and }~ \lim_{\delta\rightarrow 0}\frac{\delta^p}{\alpha(\delta)}=0.
 \label{eq:limits_alpha}
\end{equation}

\subsection{A Splitting Strategy Algorithm}

The presence of jumps together with the diffusive parts motivates separating the regularization into two parts. In this section we shall now describe such approach in the general framework of convex regularization.

Let $X$ be given by $X:= W \times Z$, where $W$ and $Z$ are Banach spaces. 
Consider $T_W$ and $T_Z$ two topologies of $W$ and $Z$, respectively, which are assumed to be  weaker than the norm topologies of each of the corresponding spaces. So, $X$ will be endowed with two natural topologies: The norm 
$\|(w,z)\|_X = \|w\|_W+\|z\|_Z$ and the product topology $T_X:=T_W\times T_Z$ which is weaker than the norm topology. 
Consider again the operator $F:\mathcal{D}(F)\subset X \rightarrow Y$, where $F(x) = F(w,z)$.

The penalty term in \eqref{eq:tikhonov1} can be rewritten as
\begin{equation}
 \alpha f_{x_0}(x) = \alpha f_{x_0}(w,z) = \alpha \beta_1g_{w_0}(w) + \alpha \beta_2 h_{z_0}(z) = \alpha_1g_{w_0}(w) + \alpha_2 h_{z_0}(z),
 \label{eq:fx0}
\end{equation}
where $\alpha_j = \alpha\cdot\beta_j$ with $\beta_j \geq 0$, $j=1,2$, and the functionals $g_{w_0}$ and $h_{z_0}$ are convex and continuous w.r.t. $T_W$ and $T_Z$, respectively. So, the Tikhonov-type functional now reads:
\begin{equation}
 \mathcal F(w,z) = \phi(w,z) + \alpha_1g_{w_0}(w) + \alpha_2 h_{z_0}(z).
 \label{eq:tikhonov2}
\end{equation}

Let us assume that Assumption~\ref{ass:2} holds. Thus, if $\alpha_1,\alpha_2>0$, $\mathcal F(w,z)$ has minimizers in $\mathcal D$. Since the norm topology of $X$ and $T_X$ are defined by the products of the norm topologies of $W$ and $Z$, and $T_W$ and $T_Z$, respectively, the projection operators $P_W:(w,z)\mapsto w$ and $P_Z:(w,z)\mapsto z$ are continuous with respect to the norm topologies of $X$, $W$ and $Z$ and to $T_X$, $T_W$ and $T_Z$.

For each $z \in Z$, define the operator $F_z:P_W(\mathcal D) \subset W \rightarrow Y$ as $F_z(w) = F(w,z)$, the Tikhonov-type functional $\mathcal F_z(w) = \mathcal F(w,z)$, and the set $\mathcal D_z = P_W(\mathcal D)\times\{z\}$. Similarly $F_w$, $\mathcal F_w$ and $\mathcal D_w$ are defined. 

Assume also that Items~5, 6 and 7 in Assumption~\ref{ass:2} remain valid whenever $\mathcal D$ is replaced by $P_W(\mathcal D)$ or $P_Z(\mathcal D)$, $F$ by $F_w$ or $F_z$ and $\mathcal F$ by $\mathcal F_w$ or $\mathcal F_z$. In this case, Theorems~3.22 and 3.23 in \cite{schervar} guarantee the existence of stable Tikhonov minimizers of $\mathcal F_w$ and $\mathcal F_z$, for each $w\in P_W(\mathcal D)$ and $z\in P_Z(\mathcal D)$.

Our approach is to split the iteration so that at each step the jump and the diffusive component are updated successively. More precisely, 
For any $w\in P_W(\mathcal D)$ (or $z\in P_Z(\mathcal D)$), set $w^0 = w$ ($z^0 = z$) and consider the iterations with $n\in\N$:
\begin{eqnarray}
   z^n \in \argmin\left\{\mathcal F_{w^{n-1}}(z) ~:~ z\in P_Z(\mathcal D)\right\}\nonumber\\
   w^n \in \argmin\left\{\mathcal F_{z^n}(w) ~:~ w\in P_W(\mathcal D)\right\}.
   \label{algorithm1}
\end{eqnarray}
Repeat the iterations until some termination criteria.

If the algorithm starts with $z$ instead of $w$, the order of the two iterations must be reversed.

\begin{df}
 A stationary point of the functional $\mathcal F$ is some point $\hat x = (\hat w, \hat z) \in \mathcal D$, such that
 \[
 \hat w\in \argmin\{\mathcal F_{\hat z}(w)~:~ w \in P_W(\mathcal D)\}
 ~\mbox{ and }~
\hat z\in \argmin\{\mathcal F_{\hat w}(z)~:~ z \in P_Z(\mathcal D)\}.
 \]
\end{df}

In what follows we shall assume the continuity of $\mathcal F$ with respect to $T_X$. This holds, for example, if $F$ and $f_{x_0}$ are $T_X$-continuous. This hypothesis is necessary in the proof of the following proposition.

\begin{pr}
 For every initializing pair $(w,z) \in \mathcal{D}$, any convergent subsequence produced by the algorithm of Equation~\eqref{algorithm1} converges to some stationary point of  $\mathcal F$.
 \label{prop:splitting1}
\end{pr}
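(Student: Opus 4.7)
The plan is to use the monotonicity of the objective values $\mathcal F(w^n, z^n)$ along the alternating iteration, combined with the assumed $T_X$-continuity of $\mathcal F$, to pass to the limit in the two argmin inequalities that define the updates.

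First, by the very definition of the two steps in \eqref{algorithm1}, one has
\begin{equation*}
\mathcal F(w^n, z^n) \leq \mathcal F(w^{n-1}, z^n) \leq \mathcal F(w^{n-1}, z^{n-1}),
\end{equation*}
so $\{\mathcal F(w^n, z^n)\}_{n\in\N}$ is non-increasing and, being bounded below by $0$, converges to some limit $L \geq 0$. Given a $T_X$-convergent subsequence $(w^{n_k}, z^{n_k}) \to (\hat w, \hat z)$, continuity of $\mathcal F$ with respect to $T_X$ yields $\mathcal F(\hat w, \hat z) = L$, while the coordinate projections $P_W, P_Z$ being continuous from $T_X$ into $T_W$ and $T_Z$ respectively give $w^{n_k} \to \hat w$ in $T_W$ and $z^{n_k} \to \hat z$ in $T_Z$. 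A squeeze between consecutive terms of the monotone sequence,
\begin{equation*}
L = \lim_k \mathcal F(w^{n_k+1}, z^{n_k+1}) \leq \mathcal F(w^{n_k}, z^{n_k+1}) \leq \mathcal F(w^{n_k}, z^{n_k}) \to L,
\end{equation*}
also yields $\mathcal F(w^{n_k}, z^{n_k+1}) \to L$ without having to extract a further subsequence.

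To check stationarity, I would pick an arbitrary $z \in P_Z(\mathcal D)$ and use the inequality from the first half of the iteration,
\begin{equation*}
\mathcal F(w^{n_k}, z^{n_k+1}) \leq \mathcal F(w^{n_k}, z),
\end{equation*}
then pass $k \to \infty$: the left-hand side tends to $L = \mathcal F(\hat w, \hat z)$ by the squeeze, and the right-hand side tends to $\mathcal F(\hat w, z)$ by the $T_W$-continuity of $\mathcal F(\cdot, z)$. Hence $\hat z \in \argmin\{\mathcal F_{\hat w}(z) : z\in P_Z(\mathcal D)\}$. An entirely symmetric argument, starting from $\mathcal F(w^{n_k}, z^{n_k}) \leq \mathcal F(w, z^{n_k})$ for every admissible $w$ (which is the defining inequality of the second half of the iteration), gives $\hat w \in \argmin\{\mathcal F_{\hat z}(w) : w \in P_W(\mathcal D)\}$. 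Together these make $(\hat w, \hat z)$ a stationary point of $\mathcal F$.

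The main obstacle I anticipate is a mild index mismatch: the intermediate iterate $z^{n_k+1}$ appearing in the first argmin inequality need not a priori converge along the chosen subsequence. The monotonicity squeeze above sidesteps this cleanly, bypassing any need to invoke pre-compactness of level sets to refine the subsequence. Apart from that, the argument is just a textbook alternating-minimization convergence scheme leveraged via the continuity hypothesis that was imposed just before the statement.
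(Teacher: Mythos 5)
Your proof is correct and takes essentially the same route as the paper's: monotonicity of the Tikhonov values $\mathcal{F}(w^n,z^n)$ along the alternating iterates combined with the assumed $T_X$-continuity of $\mathcal{F}$ to pass to the limit in the two argmin inequalities. The only cosmetic difference is that you control the intermediate iterate via the squeeze on $\mathcal{F}(w^{n_k},z^{n_k+1})$, while the paper chains the monotone inequalities down to the next subsequence index $n_{k+1}$ and uses $\mathcal{F}(\overline w,\overline z)\leq \mathcal{F}(w^n,z^n)$; both are the standard alternating-minimization argument.
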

 \begin{proof}
Consider the sequence $\{(w^n,z^n)\}_{n\in\N}$ defined by the iterations in \eqref{algorithm1}. By construction, the sequence $\{\mathcal F (w^n,z^n)\}_{n\in\N}$ is non-increasing and bounded, and thus it converges. In addition, $\{(w^n,z^n)\}_{n\in\N}$ is a subset of some level set $\mathcal{M}_\alpha(M)$, which is $T_X$-pre-compact by Item~6 in Assumption~\ref{ass:2}. 
For every cluster point $(\overline w, \overline z)$ of $\{(w^n,z^n)\}_{n\in\N}$, $\mathcal F(\overline w,\overline z) \leq \mathcal F(w^n,z^n)$ for all $n\in\N$. 

Given $w \in P_W(\mathcal D)$, it follows that $\mathcal F(w,\overline z) = \lim_{k\rightarrow \infty} \mathcal F (w,z^{n_k})$ by the $T_X$-continuity of $\mathcal F$, since the subsequence $\{(w^{n_k},z^{n_k})\}_{k\in\N}$ converges to $(\overline w,\overline z)$ w.r.t $T_X$. So, for each $k\in\N$, 
\[\mathcal F(w,z^{n_k}) \geq \mathcal F(w^{n_{k}+1},z^{n_k}),\]
because $w^{n_k}+1$ is a minimizer of $\mathcal F_{z^{n_k}}$. Applying more steps of the algorithm of Equation~\eqref{algorithm1}, it follows that
$$
\mathcal F(w^{n_k+1},z^{n_k}) \geq \mathcal F(w^{n_k+1},z^{n_k+1})\geq \cdots \geq \mathcal F(w^{n_{k+1}},z^{n_{k+1}}).
$$
So, $\mathcal F(w,z^{n_k})\geq \mathcal F(w^{n_{k+1}},z^{n_{k+1}})$.
In addition, for every $k\in\N$,
\[\mathcal F(w^{n_{k}},z^{n_k}) \geq \mathcal F(\overline w,\overline z).\]
Hence, $\overline w$ is a minimizer of $\mathcal F_{\overline z}$. 
To see that $\overline z$ is a minimizer of $\mathcal F_{\overline w}$, note that, for any $z \in P_Z(\mathcal{D}$,
$
\mathcal{F}(\overline w,z) = \lim_{k\rightarrow \infty}\mathcal F(w^{n_k},z).
$
Since $z^{n_k}$ is a minimizer of $\mathcal F_{w^{n_k}}$, it follows that, 
$\mathcal F(w^{n_k},z)\geq \mathcal F(w^{n_k},z^{n_k})$. By the fact that $\mathcal F(w^{n_k},z^{n_k})\geq F(\overline w, \overline z)$, for every $k\in\N$, the assertion follows.
\end{proof}
Denote the stationary point obtained with Algorithm~\eqref{algorithm1} by $(\overline w^\delta_{\alpha},\overline z^\delta_{\alpha})$. Note that a stationary point need not to be a Tikhonov minimizer, since, in principle, it can be a saddle point. However, we shall see in Proposition~\ref{pr:convergence} that such stationary point is indeed an approximation of the inverse problem solution.  

\begin{rem}
Recall the definition of the sub-differential of a convex function $f: \mathcal{D}(f) \subset X \rightarrow \R$ at the point $\overline x \in \mathcal{D}(f)$, with $X$ a Banach space, which is the set $\partial f(x)$ of elements $x^*$ in the dual space $X^*$ satisfying
 $$
 f(x) - f(\overline x) - \langle x^*,x-\overline x\rangle \geq 0 ~\forall x \in \mathcal{D}(f).
 $$
If $w\mapsto\phi(w,z)$ and $z\mapsto\phi(w,z)$ are Frech\'et differentiable, it follows that, for each $w$ and $z$,
$$
\partial \mathcal F_z(w)  = \left\{\frac{\partial }{\partial w}\phi( w, z)\right\} + \alpha_1 \partial g_{w_0}( w)
\quad\mbox{and}\quad
\partial \mathcal F_w(z)  = \left\{\frac{\partial }{\partial z}\phi( w, z)\right\} + \alpha_2 \partial h_{z_0}( z).
$$
Moreover, $\phi$ is also Frech\'et differentiable and
$$
\partial \mathcal F(w,z) = \left\{\left(\frac{\partial }{\partial w}\phi(w,z),\frac{\partial }{\partial z}\phi(w,z)\right)\right\} + \{\alpha_1 \partial g_{w_0}(w)\} \times\{\alpha_2 \partial h_{z_0}(z)\}.
$$
\label{rem:1}
\end{rem}
For a proof of Remark~\ref{rem:1}, see Item~(c) of Exercise~8.8 and Proposition~10.5 in \cite{RocWet2009}. 
So, if $0 \in \partial \mathcal F_w(z)$ and $0 \in \partial \mathcal F_z(w)$, then $0 \in \partial \mathcal F(w,z)$. 

Let $(\overline w, \overline z)$ denote the stationary point obtained with the algorithm of Equation~\eqref{algorithm1}, this means that $\overline w$ is a local minimum of $\mathcal F_{\overline z}$ and $\overline z$ is a local minimum of $\mathcal F_{\overline w}$. By Theorem~10.1 in \cite{RocWet2009}, $0 \in \partial \mathcal F_{\overline w}(\overline z)$ and $0 \in \partial \mathcal F_{\overline z}(\overline w)$. So, $0 \in \partial \mathcal F(\overline w, \overline z)$. If, in addition, $\mathcal{F}$ is convex, then, $(\overline w, \overline z)$ is a Tikhonov minimizer.


\begin{df}
  A stationary point $(\overline w,\overline z)$ with data $y^\delta$ is stable, if for every sequence $\{y_k\}_{k\in\N} \subset Y$ such that $y_k\rightarrow y^\delta$ in norm, then, $\{(\overline w^k,\overline z^k)\} \subset \mathcal D$, the sequence of solutions obtained with the algorithm of Equation~\eqref{algorithm1} considering the data $y_k$ for each $k\in\N$ has a $T_X$-convergent subsequence. In addition, the limit of every $T_X$-convergent subsequence $\{(\overline w^{k_l},\overline z^{k_l})\}$ is a stationary point of the Tikhonov functional $\mathcal F$ with data $y^\delta$.
  \label{def:stability}
\end{df}

\begin{pr}
The stationary point obtained by the algorithm of Equation~\eqref{algorithm1} is stable.
\label{prop:splitting2}
\end{pr}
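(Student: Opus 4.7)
The plan is to adapt the standard Tikhonov stability argument (cf.\ Theorem~3.23 in \cite{schervar}) to the characterisation of stationary points by the two separate minimisation inequalities from Proposition~\ref{prop:splitting1}. I would carry out three steps in order: (a) trap the whole sequence in a common $T_X$-pre-compact level set; (b) extract a $T_X$-convergent subsequence; (c) pass to the limit in the two stationary-point inequalities simultaneously in the state variable and in the data.

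For step (a), denote by $\mathcal F^{y}$ the Tikhonov-type functional with data $y$ and let $(\overline w^k,\overline z^k)$ be the stationary point produced by Algorithm~\eqref{algorithm1} for data $y_k$, started from a common initialiser $(w^0,z^0)\in\mathcal D$. By construction of the algorithm one has $\mathcal F^{y_k}(\overline w^k,\overline z^k)\le \mathcal F^{y_k}(w^0,z^0)$; the right-hand side is uniformly bounded in $k$ because $y_k\to y^\delta$ in norm and $y\mapsto \|F(w^0,z^0)-y\|^p$ is norm-continuous. A triangle inequality in the misfit term then yields $\mathcal F^{y^\delta}(\overline w^k,\overline z^k)\le \tilde M$ for some $\tilde M$ independent of $k$, so $(\overline w^k,\overline z^k)\in \mathcal M_\alpha(\tilde M)$, the level set of $\mathcal F^{y^\delta}$. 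By Item~6 of Assumption~\ref{ass:2} this level set is $T_X$-pre-compact, which gives step (b): a subsequence, still denoted $(\overline w^{k_l},\overline z^{k_l})$, converges in $T_X$ to some $(\overline w,\overline z)\in\mathcal D$.

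For step (c), fix any $w\in P_W(\mathcal D)$ and start from the defining inequality
\[
\mathcal F^{y_{k_l}}(\overline w^{k_l},\overline z^{k_l})\le \mathcal F^{y_{k_l}}(w,\overline z^{k_l}).
\]
On the right, the pair $(w,\overline z^{k_l})$ converges to $(w,\overline z)$ in $T_X$, so the standing $T_X$-continuity of $\mathcal F^{y^\delta}$ gives $\mathcal F^{y^\delta}(w,\overline z^{k_l})\to \mathcal F^{y^\delta}(w,\overline z)$; combined with the estimate $|\mathcal F^{y_{k_l}}(w,\overline z^{k_l})-\mathcal F^{y^\delta}(w,\overline z^{k_l})|\le p\,C^{p-1}\|y_{k_l}-y^\delta\|_Y$ (which follows from the mean-value bound for $t\mapsto t^p$ and the fact that $F(w,\overline z^{k_l})$ stays norm-bounded in $Y$), the right-hand side converges to $\mathcal F^{y^\delta}(w,\overline z)$. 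The same estimate applied to $(\overline w^{k_l},\overline z^{k_l})\to(\overline w,\overline z)$, again using that the sequence sits in a level set and hence $\{F(\overline w^{k_l},\overline z^{k_l})\}$ is norm-bounded, shows that the left-hand side converges to $\mathcal F^{y^\delta}(\overline w,\overline z)$. Passing to the limit yields $\mathcal F^{y^\delta}(\overline w,\overline z)\le \mathcal F^{y^\delta}(w,\overline z)$ for every admissible $w$, i.e.\ $\overline w\in\argmin_{w}\mathcal F^{y^\delta}_{\overline z}(w)$. The symmetric argument with the other inequality of Algorithm~\eqref{algorithm1} delivers $\overline z\in\argmin_{z}\mathcal F^{y^\delta}_{\overline w}(z)$, and so $(\overline w,\overline z)$ is a stationary point of $\mathcal F^{y^\delta}$ in the sense of the definition preceding Proposition~\ref{prop:splitting1}, finishing the proof.

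The main obstacle is the joint limit in step (c): unlike the standard stability theorem, here we need to take limits in inequalities that hold for the \emph{perturbed} data $y_{k_l}$ and conclude an inequality for the \emph{limit} data $y^\delta$, while the state variable is only $T_X$-convergent (not norm-convergent). The fix is exactly the level-set bound from step (a), which both forces $T_X$-pre-compactness and produces a uniform norm bound on $F(\overline w^{k_l},\overline z^{k_l})$, allowing the data-perturbation error in the misfit term to be controlled linearly in $\|y_{k_l}-y^\delta\|_Y$. Without that uniform bound the Lipschitz-type estimate for $t\mapsto t^p$ would fail and the passage to the limit could not be justified.
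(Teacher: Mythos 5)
Your proposal is correct and follows essentially the same route as the paper's proof: uniform boundedness of $\mathcal F(\cdot,\cdot;y^\delta)$ along the sequence via the common initializer and a $p$-power triangle-type inequality (the paper cites Lemma~3.21 in \cite{schervar}), $T_X$-pre-compactness from Item~6 of Assumption~\ref{ass:2}, and passage to the limit in the two minimality inequalities using the standing $T_X$-continuity of $\mathcal F$. Your mean-value (Lipschitz-in-data) estimate for the misfit is just a slightly more explicit bookkeeping of the data perturbation than the paper's $\liminf$/limit step, not a genuinely different argument.
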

\begin{proof}
Consider the sequences $\{y_k\}_{k\in\N} \subset Y$ and $\{(\overline w^k,\overline z^k)\} \subset \mathcal D$ as in Definition~\ref{def:stability}. Firstly, it is necessary to prove that $\{(\overline w^k,\overline z^k)\} \subset \mathcal D$ has a convergent subsequence. By Lemma~3.21 in \cite{schervar},
$$
\mathcal F(\overline w^k,\overline z^k;y^\delta) 
\leq 2^{p-1}\mathcal F(\overline w^k,\overline z^k;y_k) + 2^{p-1}\|y_k-y^\delta\|^p.
$$
The sequence $\{y_k\}_{k\in\N}$ converges to $y^\delta$, so $\|y_k-y^\delta\|^p$ is uniformly bounded in $k$. In addition, if we assume further that, for each $y_k$, the algorithm of Equation~\eqref{algorithm1} is initialized with the same $(w^0,z^0)$, it follows that
$$
\mathcal F(\overline w^k,\overline z^k;y_k) \leq \mathcal F( w^0,z^0;y_k),
$$
and applying Lemma~3.21 in \cite{schervar} again, 
$$\mathcal F( w^0,z^0;y_k) \leq 2^{p-1}\mathcal F( w^0,z^0;y^\delta)+2^{p-1}\|y_k-y^\delta\|^p.$$
By the estimates above,
$$
\mathcal F(\overline w^k,\overline z^k;y^\delta) \leq 4^{p-1}\mathcal F( w^0,z^0;y^\delta) + (4^{p-1}+2^{p-1})\|y_k-y^\delta\|^p,
$$
which implies that $\{(w^k,z^k)\}_{k\in\N}$ is a subset of some level set of $\mathcal F(\cdot,\cdot\cdot;y^\delta)$. Item~6 in Assumption~\ref{ass:2} implies that such level set is $T_X$-pre-compact, and the assertion follows.

Suppose with no loss of generality that $\{(\overline w^k,\overline z^k)\}$ converges to $(\tilde w, \tilde z)$, w.r.t. $T_X$. For every $w \in P_W(\mathcal D)$, since $\overline w^k$  is in $\argmin \mathcal F_{\overline z^k;y_k}(w)$,
\[
 \mathcal F(\tilde w,\tilde z;y^\delta) \leq \liminf_{k\rightarrow \infty}\mathcal F(\overline w^k,\overline z^k;y_k)
\leq \lim_{k\rightarrow \infty}\mathcal F(w,\overline z^k,y_k) = \mathcal F(w,\tilde z;y^\delta).
\]
So, $\tilde w$ is in $\argmin \mathcal F_{\tilde z;y^\delta}(w)$. Similarly, it follows that $\tilde z$ is in $\argmin \mathcal F_{\tilde w;y^\delta}(z)$ and the assertion follows.
\end{proof}

Since the stationary point obtained by the algorithm in Equation~\eqref{algorithm1} is determined w.r.t $y^\delta$ and the regularization parameters $\alpha_1$ and $\alpha_2$, let us denote it by $(\overline w^\delta_{\alpha_1,\alpha_2},\overline z^\delta_{\alpha_1,\alpha_2})$. 
Let us also denote by $x^\dagger = (w^\dagger,z^\dagger)$ an $f_{x_0}$-minimizing solution of the inverse problem in \eqref{eq:inverse_problem1}, and by $\tilde y$ the noiseless data in \eqref{eq:inverse_problem1}.

\paragraph{Tangential Cone Condition} Now, we show that the tangential cone condition is a sufficient condition for the splitting strategy algorithm of Equation~\eqref{algorithm1} to converge to some approximation of an $f_0$-minimizing solution of the inverse problem \eqref{eq:inverse_problem1}.

\begin{ass}
 Let the operator $F$ be Fr\'echet differentiable on each variable $w$ and $z$, so it is Fr\'echet differentiable and its Fr\'echet derivative satisfies
$$
F^\prime(x) = (\partial_w F(w,z), \partial_z F(w,z)).
$$
 In addition, there exist positive constants $r>0$ and $0 \leq \eta < 1/2$ such that, if $x,\tilde x$ are in the ball $B(x^*;r)$,  centered at $x^*$ with radius $r$, then the {\em tangential cone condition} is satisfied:
 $$
 \|F(\tilde x) - F(x) - F^\prime(x)(\tilde x -x)\| \leq \eta\|F(\tilde x) - F(x)\|.
 $$
 \label{ass:3}
\end{ass}
So, we can state the following result:
\begin{pr}
 Let Assumptions \ref{ass:2} and \ref{ass:3} hold. If the initializing pair and $x^\dagger$ are inside the ball $B(x^*;r)$ and $\lambda > (1+\eta)/(1-\eta)$ is fixed, then, for any pair of regularization parameters $(\alpha_1,\alpha_2)$ with sufficiently small entries,  there exists some finite $n = n(\alpha_1,\alpha_2)$ such that the iterates of the splitting algorithm satisfy
 \begin{equation}
  \|F(w^n,z^n) - y^\delta\|_Y \geq \lambda \delta > \|F(w^{n+1},z^{n+1}) - y^\delta\|_Y.
  \label{eq:discrepancy}
 \end{equation}
 \label{pr:saddle_point}
\end{pr}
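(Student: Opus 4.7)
The approach I would take is proof by contradiction. Assume $\|F(w^n,z^n)-y^\delta\|_Y\geq \lambda\delta$ for every $n$; extract a cluster point of the iterates, and then use the first-order optimality at that stationary point, combined with the tangential cone inequality, to force the limit residual strictly below $\lambda\delta$ whenever $(\alpha_1,\alpha_2)$ are small enough, thereby contradicting the standing assumption.

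First, as in the proof of Proposition~\ref{prop:splitting1}, the block-minimization structure gives $\mathcal{F}(w^{n+1},z^{n+1})\leq\mathcal{F}(w^n,z^n)$, so the Tikhonov values are monotone and the iterates stay in a $T_X$-pre-compact level set by Item~6 of Assumption~\ref{ass:2}. Extract a $T_X$-convergent subsequence $(w^{n_k},z^{n_k})\to(\bar w,\bar z)$; by Proposition~\ref{prop:splitting1} the limit is a stationary point. A preliminary step is to certify that all iterates (and therefore the limit) remain in $B(x^*;r)$, so that the tangential cone condition applies at $(\bar w,\bar z)$; this is done by a monotonicity argument exploiting the Tikhonov decrease from the initialization, possibly after shrinking the feasible region.

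At the stationary point, Remark~\ref{rem:1} provides subgradients $s^*\in\partial g_{w_0}(\bar w)$ and $t^*\in\partial h_{z_0}(\bar z)$ for which the two block first-order conditions hold; for the guiding Hilbertian case $p=2$ these read
\[
2[\partial_w F(\bar w,\bar z)]^*(F(\bar w,\bar z)-y^\delta)+\alpha_1 s^*=0,\qquad 2[\partial_z F(\bar w,\bar z)]^*(F(\bar w,\bar z)-y^\delta)+\alpha_2 t^*=0.
\]
Pair these with $(\bar w-w^\dagger,\bar z-z^\dagger)$, use the subgradient inequalities $\langle s^*,\bar w-w^\dagger\rangle\geq g_{w_0}(\bar w)-g_{w_0}(w^\dagger)$ and $\langle t^*,\bar z-z^\dagger\rangle\geq h_{z_0}(\bar z)-h_{z_0}(z^\dagger)$, and write $F'(\bar w,\bar z)((\bar w,\bar z)-x^\dagger)=F(\bar w,\bar z)-F(x^\dagger)-R$ with $\|R\|\leq\eta\|F(\bar w,\bar z)-F(x^\dagger)\|$ from the tangential cone condition. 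Combining these with $\|F(x^\dagger)-y^\delta\|\leq\delta$, one deduces
\[
2\|F(\bar w,\bar z)-y^\delta\|\bigl[(1-\eta)\|F(\bar w,\bar z)-y^\delta\|-(1+\eta)\delta\bigr]\leq \alpha_1(g_{w_0}(w^\dagger)-g_{w_0}(\bar w))+\alpha_2(h_{z_0}(z^\dagger)-h_{z_0}(\bar z)).
\]
The right-hand side is bounded above by $\alpha_1 g_{w_0}(w^\dagger)+\alpha_2 h_{z_0}(z^\dagger)$ (assuming $g_{w_0},h_{z_0}\geq 0$ after an additive normalization), which vanishes as $(\alpha_1,\alpha_2)\to 0$. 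If $\|F(\bar w,\bar z)-y^\delta\|\geq\lambda\delta$ with $\lambda>(1+\eta)/(1-\eta)$, the left-hand side is at least $2\lambda\delta^2\bigl[(1-\eta)\lambda-(1+\eta)\bigr]>0$, so for $(\alpha_1,\alpha_2)$ sufficiently small we obtain a contradiction. Hence $\|F(\bar w,\bar z)-y^\delta\|<\lambda\delta$; by continuity of $F$ along the pre-compact level set, the residuals $\|F(w^{n_k},z^{n_k})-y^\delta\|$ converge to $\|F(\bar w,\bar z)-y^\delta\|$, contradicting the assumption. Selecting the smallest $n+1$ at which the residual first falls below $\lambda\delta$ then yields the claimed transition \eqref{eq:discrepancy}.

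The principal obstacles are: (i) certifying that all iterates and the limit $(\bar w,\bar z)$ lie in $B(x^*;r)$, so that the tangential cone inequality can be applied, which requires a careful monotonicity argument based on the Tikhonov decrease from the initial pair $(w^0,z^0)$; (ii) extending the Hilbertian duality computation to general $p\geq 1$ and non-Hilbertian $Y$, which requires replacing inner products by the duality mapping $J_p$ in each pairing; and (iii) justifying the subgradient form of the first-order conditions for each block minimizer, which relies on the sum rule invoked in Remark~\ref{rem:1} and the nonempty interior of $\mathcal{D}$.
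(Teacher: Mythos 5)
Your proposal follows essentially the same route as the paper's proof: first-order optimality at the stationary point produced by the splitting iteration (via Remark~\ref{rem:1}), pairing the optimality relation with $\overline x - x^\dagger$, combining the tangential cone condition with $\|F(x^\dagger)-y^\delta\|\le\delta$ and the convexity subgradient inequalities, and then letting $(\alpha_1,\alpha_2)$ be small to force the residual at the stationary point below $\lambda\delta$, after which continuity along the convergent iterates yields the finite index $n$ realizing \eqref{eq:discrepancy}. The obstacles you flag (membership of the iterates and of the stationary point in $B(x^*;r)$, and replacing the Hilbertian pairing for $p=2$ by the duality map $J$ for general $p\ge 1$) are treated only implicitly or not at all in the paper, so your sketch matches the paper's argument both in structure and in its level of rigor.
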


\begin{proof}
 By Proposition~\ref{prop:splitting1}, the splitting algorithm converges to $(\overline w^\delta_{\alpha_1,\alpha_2},\overline z^\delta_{\alpha_1,\alpha_2})$, a stationary point of the functional in \eqref{eq:tikhonov2}. Since the operator $F$ is Fr\'echet differentiable, by Remark~\ref{rem:1}, zero is in the sub-differential of $\mathcal F$ at $(\overline w^\delta_{\alpha_1,\alpha_2},\overline z^\delta_{\alpha_1,\alpha_2})$. In other words, there exists $\gamma \in \partial g_{w_0}(w)$ and $\beta\in \partial h_{z_0}(z)$, such that
 \begin{multline*}
 0 = \left(\frac{\partial }{\partial w}\phi(\overline w^\delta_{\alpha_1,\alpha_2},\overline z^\delta_{\alpha_1,\alpha_2}),\frac{\partial }{\partial z}\phi(\overline w^\delta_{\alpha_1,\alpha_2},\overline z^\delta_{\alpha_1,\alpha_2})\right) + (\alpha_1 \gamma, \alpha_2 \beta) =\\ F^\prime(\overline x^\delta_{\alpha_1,\alpha_2})^*J(F(\overline x^\delta_{\alpha_1,\alpha_2}) - y^\delta) + (\alpha_1 \gamma, \alpha_2 \beta),
 \end{multline*}
 where $J:Y\rightarrow Y^*$ is the duality map, and $\overline x^\delta_{\alpha_1,\alpha_2} = (\overline w^\delta_{\alpha_1,\alpha_2},\overline z^\delta_{\alpha_1,\alpha_2})$. See \cite{MarRie2014} and Chapter~II in \cite{Cio1990} for more details on duality maps. 
 Applying $\overline x^\delta_{\alpha_1,\alpha_2} - x^\dagger$ on both sides of the above equality, we have:
 $$
 \alpha_1\langle \gamma,w^\dagger - \overline w^\delta_{\alpha_1,\alpha_2}\rangle + \alpha_2\langle \beta, z^\dagger - \overline z^\delta_{\alpha_1,\alpha_2}\rangle = \langle J(F(\overline x^\delta_{\alpha_1,\alpha_2})-y^\delta), F^\prime(\overline x^\delta_{\alpha_1,\alpha_2})(\overline x^\delta_{\alpha_1,\alpha_2} - x^\dagger)\rangle.
 $$
 Note that,
 \begin{multline*}
  \langle J(F(\overline x^\delta_{\alpha_1,\alpha_2})-y^\delta), F^\prime(\overline x^\delta_{\alpha_1,\alpha_2})(\overline x^\delta_{\alpha_1,\alpha_2} - x^\dagger)\rangle  =\\
  \|F(\overline x^\delta_{\alpha_1,\alpha_2})-y^\delta\|^p - \langle J(y^\delta-F(\overline x^\delta_{\alpha_1,\alpha_2})), y^\delta - F(\overline x^\delta_{\alpha_1,\alpha_2}) - F^\prime(\overline x^\delta_{\alpha_1,\alpha_2})(x^\dagger -\overline x^\delta_{\alpha_1,\alpha_2})\rangle \geq\\
  \|F(\overline x^\delta_{\alpha_1,\alpha_2})-y^\delta\|^p - \|F(\overline x^\delta_{\alpha_1,\alpha_2})-y^\delta\|^{p-1}\|y^\delta - F(\overline x^\delta_{\alpha_1,\alpha_2}) - F^\prime(\overline x^\delta_{\alpha_1,\alpha_2})(x^\dagger-\overline x^\delta_{\alpha_1,\alpha_2})\| \geq \\
  \|F(\overline x^\delta_{\alpha_1,\alpha_2})-y^\delta\|^p +\\- \|F(\overline x^\delta_{\alpha_1,\alpha_2})-y^\delta)\|^{p-1}\left(\delta + \|F(x^\dagger) - F(\overline x^\delta_{\alpha_1,\alpha_2}) - F^\prime(\overline x^\delta_{\alpha_1,\alpha_2})(x^\dagger - \overline x^\delta_{\alpha_1,\alpha_2})\|\right)\geq \\
  \|F(\overline x^\delta_{\alpha_1,\alpha_2})-y^\delta\|^p - \|F(\overline x^\delta_{\alpha_1,\alpha_2})-y^\delta\|^{p-1}\left(\delta + \eta\|F(x^\dagger) - F(\overline x^\delta_{\alpha_1,\alpha_2})\|\right)\geq \\
  \|F(\overline x^\delta_{\alpha_1,\alpha_2})-y^\delta\|^p - \|F(\overline x^\delta_{\alpha_1,\alpha_2})-y^\delta\|^{p-1}\left((1+\eta)\delta + \eta\|y^\delta - F(\overline x^\delta_{\alpha_1,\alpha_2})\|\right).
 \end{multline*}
Let us assume, by contradiction, that there is no $\alpha_1,\alpha_2 > 0$ such that $\phi(w^\delta_{\alpha_1,\alpha_2},\overline z^\delta_{\alpha_1,\alpha_2}) < \lambda^p\delta^p$. So, by the above estimates and assuming that $\lambda > (1+\eta)/(1-\eta)$, it follows
\begin{multline*}
\alpha_1\langle \gamma,w^\dagger - \overline w^\delta_{\alpha_1,\alpha_2}\rangle + \alpha_2\langle \beta, z^\dagger - \overline z^\delta_{\alpha_1,\alpha_2}\rangle \geq\\ \|F(\overline x^\delta_{\alpha_1,\alpha_2})-y^\delta\|^p - \|F(\overline x^\delta_{\alpha_1,\alpha_2})-y^\delta)\|^{p-1}\left((1+\eta)\delta + \eta\|y^\delta - F(\overline x^\delta_{\alpha_1,\alpha_2})\|\right) \geq \\
\|F(\overline x^\delta_{\alpha_1,\alpha_2})-y^\delta\|^p\left(1 - \eta - \displaystyle\frac{1+\eta}{\lambda}\right) \geq (\lambda\delta)^p\left(1 - \eta - \displaystyle\frac{1+\eta}{\lambda}\right).
\end{multline*}
Since $g_{w_0}$ and $h_{z_0}$ are convex,
$$
\alpha_1 (g_{w_0}(w^\dagger)-g_{w_0}(w^\delta_{\alpha_1,\alpha_2})) + \alpha_2(h_{z_0}(z^\dagger)-h_{z_0}(z^\delta_{\alpha_1,\alpha_2})) \geq \alpha_1\langle \gamma,\overline w^\dagger - w^\delta_{\alpha_1,\alpha_2}\rangle + \alpha_2\langle \beta, z^\dagger - \overline z^\delta_{\alpha_1,\alpha_2}\rangle.
$$
Summarizing,
\begin{equation}
 \alpha_1 (g_{w_0}(w^\dagger)-g_{w_0}(w^\delta_{\alpha_1,\alpha_2})) + \alpha_2(h_{z_0}(z^\dagger)-h_{z_0}(z^\delta_{\alpha_1,\alpha_2})) \geq (\lambda\delta)^p\left(1 - \eta - \displaystyle\frac{1+\eta}{\lambda}\right),
 \label{eq:4}
\end{equation}
and the right-hand side of the inequality~\eqref{eq:4} is positive. Since $\alpha_1,\alpha_2>0$, by the estimate above, $g_{w_0}(w^\dagger)-g_{w_0}(w^\delta_{\alpha_1,\alpha_2})\geq 0$ and $h_{z_0}(z^\dagger)-h_{z_0}(z^\delta_{\alpha_1,\alpha_2}) \geq 0$. 
If $K = \max\{g_{w_0}(w^\dagger),h_{z_0}(z^\dagger)\}$, then,
$$
2(\alpha_1 + \alpha_2)K \geq \alpha_1 (g_{w_0}(w^\dagger)-g_{w_0}(w^\delta_{\alpha_1,\alpha_2})) + \alpha_2(h_{z_0}(z^\dagger)-h_{z_0}(z^\delta_{\alpha_1,\alpha_2})).
$$
Hence, we can find $\alpha_1,\alpha_2>0$ such that the left-hand side of \eqref{eq:4} becomes smaller than the right-hand side, which is a contradiction.  Therefore, there must exist $\alpha_1^+,\alpha_2^+>0$ such that $$\phi(w^\delta_{\alpha_1^+,\alpha_2^+},\overline z^\delta_{\alpha_1^+,\alpha_2^+}) < (\lambda\delta)^p \mbox{ ,}$$ 
for each fixed $\lambda>(1+\eta)/(1-\eta)$. By the continuity of $\phi(\cdot,\cdot\cdot)$ with respect to the norm topology of $X$ and to the topology $T_X$ the existence of some finite iterate number $n$ holds. To see that this also holds for any sufficiently small regularization parameters $\alpha_1,\alpha_2$, just note that, by the same arguments above, there is no sequence $\{\alpha_1^n,\alpha_2^n\}_{n\in\N}$, of regularization parameters with $\alpha_1^n,\alpha_2^n\searrow 0$, such that $\phi(\overline w^\delta_{\alpha_1^n,\alpha_2^n},\overline z^\delta_{\alpha_1^n,\alpha_2^n}) \geq \lambda^p \delta^p$ for every $n\in\N$. So, there must be $\alpha_1^+,\alpha_2^+>0$, such that, for any $\alpha_1\in (0,\alpha_1^+)$ and $\alpha_2\in(0,\alpha_2^+)$. It follows that $\phi(\overline w^\delta_{\alpha_1,\alpha_2},\overline z^\delta_{\alpha_1,\alpha_2}) < \lambda^p \delta^p$.
\end{proof}

As a corollary of the proof above, we have the following estimate:
\begin{multline}
  (1-\eta)\|F(\overline w_{\alpha_1,\alpha_2}^\delta,\overline z_{\alpha_1,\alpha_2}^\delta)-y^\delta\|^{p} - (1+\eta)\delta\|F(\overline w_{\alpha_1,\alpha_2}^\delta,\overline z_{\alpha_1,\alpha_2}^\delta)-y^\delta\|^{p-1}\\ \leq \alpha_1\left[g_{w_{0}}(w^\dagger)-g_{w_{0}}(\overline w_{\alpha_1,\alpha_2}^\delta)\right] +\alpha_2\left[h_{z_0}(z^\dagger) -h_{z_0}(\overline w_{\alpha_1,\alpha_2}^\delta)\right].
  \label{eq:aux_ineq1}
 \end{multline}

The following proposition states that the algorithm of Equation~\eqref{algorithm1} produces a stable approximation of the solution of the inverse problem in \eqref{eq:inverse_problem1}.

\begin{pr}
 If Assumptions~\ref{ass:2} and \ref{ass:3} hold and the regularization parameters satisfy
\begin{equation}
\lim_{\delta\rightarrow 0}\alpha_j(\delta)=\delta~ \mbox{ and }~ \lim_{\delta\rightarrow 0}\frac{\delta^p}{\alpha_j(\delta)}=0, ~j=1,2,
\label{eq:alpha_delta}
\end{equation}
then, every sequence of solutions obtained by the algorithm of Equation~\eqref{algorithm1}, satisfying the discrepancy \eqref{eq:discrepancy}, when $\delta\searrow 0$, has a 
 $T_X$-convergent subsequence converging 
 to some $f_{x_0}$-minimizing solution of the Inverse Problem~\eqref{eq:inverse_problem1}, with $f_{x_0}$ as in Equation~\eqref{eq:fx0}. 
 \label{pr:convergence}
\end{pr}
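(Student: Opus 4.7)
The plan is to adapt the classical convergence proof for Tikhonov-type regularization under the discrepancy principle. The novelty is that the iterates are only stationary points of $\mathcal F$, so the standard comparison $\mathcal F(\overline x^\delta_\alpha)\le \mathcal F(x^\dagger)$ is unavailable; the corollary estimate~\eqref{eq:aux_ineq1}, established in the proof of Proposition~\ref{pr:saddle_point}, serves as its stationary-point substitute.

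Fix $\delta_k\searrow 0$ and set $\alpha_j^k=\alpha_j(\delta_k)$. By Proposition~\ref{pr:saddle_point}, pick stationary points $\overline x_k=(\overline w_k,\overline z_k)$ at which the discrepancy $\|F(\overline x_k)-y^{\delta_k}\|_Y<\lambda\delta_k$ holds. Inserting $\|F(\overline x_k)-y^{\delta_k}\|_Y\le\lambda\delta_k$ into the left-hand side of~\eqref{eq:aux_ineq1} bounds it below by $-(1+\eta)\lambda^{p-1}\delta_k^p$, so that
\begin{equation*}
\alpha_1^k g_{w_0}(\overline w_k)+\alpha_2^k h_{z_0}(\overline z_k)\;\le\;\alpha_1^k g_{w_0}(w^\dagger)+\alpha_2^k h_{z_0}(z^\dagger)+C\,\delta_k^p,
\end{equation*}
with $C=(1+\eta)\lambda^{p-1}$. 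Recalling $\alpha_j^k=\alpha(\delta_k)\beta_j$, dividing by $\alpha(\delta_k)$ and invoking the second limit in~\eqref{eq:alpha_delta} yields $\limsup_k f_{x_0}(\overline x_k)\le f_{x_0}(x^\dagger)$. In particular, $\{f_{x_0}(\overline x_k)\}$ is uniformly bounded.

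Combined with $\|F(\overline x_k)-y^{\delta_k}\|_Y\le\lambda\delta_k$ (hence $\|F(\overline x_k)-\tilde y\|_Y\le(\lambda+1)\delta_k\to 0$), the sequence $\{\overline x_k\}$ lies in a bounded sublevel set of the Tikhonov functional associated with some fixed $\alpha$ and $\tilde y$, which by Assumption~\ref{ass:2}(6) is $T_X$-pre-compact. Extract a $T_X$-convergent subsequence $\overline x_{k_l}\to\tilde x\in\mathcal D$. By Assumption~\ref{ass:2}(7), $F(\overline x_{k_l})\to F(\tilde x)$ in $T_Y$, and by the $T_Y$-lower semi-continuity of $\|\cdot\|_Y$ (Assumption~\ref{ass:2}(3))
\begin{equation*}
\|F(\tilde x)-\tilde y\|_Y\;\le\;\liminf_{l\to\infty}\|F(\overline x_{k_l})-\tilde y\|_Y\;=\;0,
\end{equation*}
so $\tilde x$ solves~\eqref{eq:inverse_problem1}. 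Finally, $T_X$-continuity of $f_{x_0}$ (Assumption~\ref{ass:2}(4)) gives $f_{x_0}(\tilde x)=\lim_l f_{x_0}(\overline x_{k_l})\le f_{x_0}(x^\dagger)$; since $x^\dagger$ is $f_{x_0}$-minimizing among solutions of~\eqref{eq:inverse_problem1}, equality must hold, and therefore $\tilde x$ is itself an $f_{x_0}$-minimizing solution.

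The main obstacle is exactly the absence of a global-minimizer property for the iterates, which blocks the classical ``$\mathcal F(\overline x^\delta_\alpha)\le\mathcal F(x^\dagger)$'' argument. This is overcome by the substitute inequality~\eqref{eq:aux_ineq1}, whose proof exploits Fr\'echet differentiability of $F$, the tangential cone condition, and the first-order optimality $0\in\partial\mathcal F(\overline x^\delta_{\alpha_1,\alpha_2})$ derived from Remark~\ref{rem:1}. Once~\eqref{eq:aux_ineq1} is in hand, the usual pre-compactness-plus-lower-semi-continuity machinery of convex Tikhonov regularization runs essentially unchanged.
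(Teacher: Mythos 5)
Your proposal is correct and follows essentially the same route as the paper's proof: it uses the stationary-point estimate \eqref{eq:aux_ineq1} together with the discrepancy bound to get $\limsup_k f_{x_0}(\overline x_k)\le f_{x_0}(x^\dagger)$, places the sequence in a $T_X$-pre-compact level set of the Tikhonov functional with noiseless data, and concludes via the sequential continuity of $F$ and lower semi-continuity of the norm that the $T_X$-limit is an $f_{x_0}$-minimizing solution. No substantive differences from the paper's argument.
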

\begin{proof}
Consider $\{\delta_k\}_{k\in\N}$, such that $\delta_k\searrow 0$, for each $k\in \N$, choose $\alpha_1 = \alpha_1(\delta_k)$ and $\alpha_2=\alpha_2(\delta_k)>0$ such that the discrepancy principle in Equation~\eqref{eq:discrepancy} and the estimates in Equation~\eqref{eq:alpha_delta} hold with data $y^{\delta_k}$ and noise level $\delta_k$. 
Consider also the sequence $\{(\overline w^{\delta_k}_{\alpha_1,\alpha_2},\overline z^{\delta_k}_{\alpha_1,\alpha_2})\}_{k\in\N}$ of the corresponding stationary points generated by the algorithm of Equation~\eqref{algorithm1}.
%

We need to show that this sequence has a $T_X$-convergent subsequence. Assume that the algorithm in Equation~\eqref{algorithm1} initializes always with the same pair $(w^0,z^0)$. So, for every $k$, 
$$\mathcal F(\overline w^{\delta_k}_{\alpha_1,\alpha_2},\overline z^{\delta_k}_{\alpha_1,\alpha_2};y^{\delta_k})\leq \lambda^p\delta^p.$$
In addition, by the relation in Equation~\eqref{eq:fx0} and the estimate in \eqref{eq:aux_ineq1}, 
\begin{equation}
 \limsup_{k\rightarrow\infty} \left[\beta_1 g_{w_0}(\overline w^{\delta_k}_{\alpha_1,\alpha_2}) + \beta_2 h_{w_0}(\overline z^{\delta_k}_{\alpha_1,\alpha_2})\right]\leq \beta_1 g_{w_0}(w^\dagger)  +\beta_2 h_{z_0}(z^\dagger).
\label{eq:estimate_gh}
\end{equation}
So, taking $\alpha^+ = \max_{k\in\N}\max\{\alpha_1(\delta_k),\alpha_2(\delta_k)\}$, and since
$$
\|F(\overline w^{\delta_k}_{\alpha_1,\alpha_2},\overline z^{\delta_k}_{\alpha_1,\alpha_2}) -  \tilde y\| \leq \|F(\overline w^{\delta_k}_{\alpha_1,\alpha_2},\overline z^{\delta_k}_{\alpha_1,\alpha_2}) -  y^{\delta_k}\| + \delta_k \leq (\lambda+1)\delta_k,
$$
it follows that,
\begin{multline*}
\limsup_{k\rightarrow \infty}\left[\|F(\overline w^{\delta_k}_{\alpha_1,\alpha_2},\overline z^{\delta_k}_{\alpha_1,\alpha_2}) -  \tilde y\|^p + \alpha^+g_{w_0}(\overline w^{\delta_k}_{\alpha_1,\alpha_2}) + \alpha^+h_{z_0}(\overline z^{\delta_k}_{\alpha_1,\alpha_2})\right]\\
\leq \alpha^+g_{w_0}(w^\dagger) + \alpha^+h_{z_0}(z^\dagger),
\end{multline*}
i.e., there exists a constant $K>0$ such that the sequence $\{(\overline w^{\delta_k}_{\alpha_1,\alpha_2},\overline z^{\delta_k}_{\alpha_1,\alpha_2})\}_{k\in\N}$ is in the level set $M_{\alpha^+}(K)$, which is pre-compact w.r.t. $T_X$. Hence, it has a $T_X$-convergent subsequence, which is denoted again by 
%
%
$\{(\overline w^{\delta_k}_{\alpha_1,\alpha_2},\overline z^{\delta_k}_{\alpha_1,\alpha_2})\}_{k\in\N}$, and converging to $(\tilde w,\tilde z)$, w.r.t. $T_X$. Since, for each  $k\in \N$, $\|F(\overline w^{\delta_k}_{\alpha_1,\alpha_2},\overline z^{\delta_k}_{\alpha_1,\alpha_2}) - y^{\delta_k}\|_Y \leq \lambda \delta_k$, by the weakly lower semi-continuity of $\phi$, 
\[
 \|F(\tilde w,\tilde z) - \tilde y\|^p \leq \displaystyle\liminf_{k\rightarrow 0}\|F(\overline w^{\delta_k}_{\alpha_1,\alpha_2},\overline z^{\delta_k}_{\alpha_1,\alpha_2}) - y^{\delta_k}\|^p
 \leq \lim_{k\rightarrow 0}\lambda \delta_k= 0.
\]
This means that $(\tilde w,\tilde z)$ is a solution of the Inverse Problem \eqref{eq:inverse_problem1}. Note that, by the estimate in Equation~\eqref{eq:estimate_gh},
$$\beta_1 g_{w_0}(\tilde w) + \beta_2 h_{z_0}(\tilde z)\leq \beta_1 g_{w_0}(w^\dagger) + \beta_2 h_{z_0}(z^\dagger).$$
So, $(\tilde w,\tilde z)$ is an $f_{x_0}$-minimizing solution.
\end{proof}

The following proposition states the convergence of {\em inexact solutions} to some solution of the inverse problem in Equation~\eqref{eq:inverse_problem1}. By inexact solution we mean the iterate $(w^{n+1},z^{n+1})$ satisfying the discrepancy in Equation~\eqref{eq:discrepancy}.

\begin{pr}
 Let the hypotheses of Proposition~\ref{pr:saddle_point} be satisfied. Assume further that the functionals $g_{w_0}(w)$ and $h_{z_0}(z)$ are uniformly bounded for $(w,z) \in \mathcal{D}$. Then, when $\delta\searrow 0$, every sequence of inexact solutions satisfying the discrepancy in Equation~\eqref{eq:discrepancy} has a $T_X$-convergent subsequence converging to a solution of the inverse problem in Equation~\eqref{eq:inverse_problem1}.
\end{pr}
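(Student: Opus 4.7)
The plan is to combine the discrepancy bound with the assumed uniform boundedness of $g_{w_0}$ and $h_{z_0}$ on $\mathcal{D}$ to place the inexact iterates inside a $T_X$-pre-compact level set, then extract a convergent subsequence and use the sequential continuity of $F$ to identify its limit as a solution of the inverse problem~\eqref{eq:inverse_problem1}.

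First I would fix a sequence $\delta_k\searrow 0$, choose regularization parameters $\alpha_1(\delta_k),\alpha_2(\delta_k)>0$ satisfying~\eqref{eq:alpha_delta}, and by Proposition~\ref{pr:saddle_point} select for each $k$ the first iteration index $n_k$ for which the inexact solution $(w_k,z_k):=(w^{n_k+1},z^{n_k+1})$ from the splitting algorithm~\eqref{algorithm1} satisfies $\|F(w_k,z_k)-y^{\delta_k}\|_Y\leq \lambda\delta_k$. Using the standing uniform bounds $g_{w_0}(w)\leq M_1$ and $h_{z_0}(z)\leq M_2$ for all $(w,z)\in\mathcal{D}$, one obtains
$$
\mathcal{F}(w_k,z_k;y^{\delta_k}) = \|F(w_k,z_k)-y^{\delta_k}\|^p_Y + \alpha_1 g_{w_0}(w_k) + \alpha_2 h_{z_0}(z_k) \leq (\lambda\delta_k)^p + \alpha_1 M_1 + \alpha_2 M_2,
$$
which is uniformly bounded in $k$ since $\alpha_j(\delta_k)\to 0$.

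Next, applying Lemma~3.21 in \cite{schervar} to pass from data $y^{\delta_k}$ to the noise-free $\tilde y$ exactly as in the proof of Proposition~\ref{prop:splitting2}, the sequence $\{(w_k,z_k)\}_{k}$ lies in a common level set $M_{\alpha^+}(K)$ of $\mathcal{F}(\cdot,\cdot\cdot;\tilde y)$ with $\alpha^+=\sup_k\max\{\alpha_1(\delta_k),\alpha_2(\delta_k)\}$ and some $K>0$ independent of $k$. By Item~6 of Assumption~\ref{ass:2}, this level set is sequentially $T_X$-pre-compact, so a subsequence, still denoted $\{(w_k,z_k)\}$, converges with respect to $T_X$ to some $(\tilde w,\tilde z)\in\mathcal{D}$. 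By the sequential continuity of $F$ on $M_{\alpha^+}(K)$ (Item~7 of Assumption~\ref{ass:2}) and the sequential lower semi-continuity of $\|\cdot\|_Y$,
$$
\|F(\tilde w,\tilde z)-\tilde y\|^p_Y \leq \liminf_{k\to\infty}\|F(w_k,z_k)-\tilde y\|^p_Y \leq \liminf_{k\to\infty}(\lambda+1)^p\delta_k^p = 0,
$$
so $(\tilde w,\tilde z)$ solves the inverse problem~\eqref{eq:inverse_problem1}.

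The main difficulty I anticipate is the level-set argument itself: in contrast to Proposition~\ref{pr:convergence}, the inexact iterates do not satisfy first-order optimality conditions, so one cannot invoke the estimate~\eqref{eq:aux_ineq1} to control $g_{w_0}(w_k)$ and $h_{z_0}(z_k)$ by their values at $x^\dagger$. The additional uniform boundedness hypothesis on $g_{w_0}$ and $h_{z_0}$ is precisely what replaces that optimality-based control, and the delicate technical point is to keep the constants $\alpha^+$ and $K$ in the level-set embedding independent of $k$ so that the single pre-compact set $M_{\alpha^+}(K)$ captures every inexact iterate along the sequence $\delta_k\searrow 0$.
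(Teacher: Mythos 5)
Your argument is correct and follows essentially the same route as the paper's proof: bound the Tikhonov functional at the inexact iterates via the discrepancy and the uniform bounds on $g_{w_0}$, $h_{z_0}$, use Lemma~3.21 of \cite{schervar} to pass to the noise-free data and place the iterates in a fixed level set $M_{\alpha^+}(K)$, then invoke its $T_X$-pre-compactness together with the sequential continuity of $F$ and lower semi-continuity of the norm to conclude that the subsequential limit solves \eqref{eq:inverse_problem1}. The only cosmetic difference is that you estimate $\|F(w_k,z_k)-\tilde y\|\leq(\lambda+1)\delta_k$ before taking the $\liminf$, whereas the paper works directly with $\|F(w_k,z_k)-y^{\delta_k}\|\leq\lambda\delta_k$; both are equally valid.
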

\begin{proof}
As in the proof of Proposition~\eqref{pr:convergence}, let us consider $\{\delta_k\}_{k\in\N}$, such that $\delta_k\searrow 0$, for each $k\in \N$, choose $\alpha_1 = \alpha_1(\delta_k)$ and $\alpha_2=\alpha_2(\delta_k)>0$ such that the discrepancy principle in Equation~\eqref{eq:discrepancy} is satisfied and assume that $\max\{\alpha_1(\delta_k),\alpha_2(\delta_k)\}\leq \alpha^+$ for some finite constant $\alpha^+$. Consider also the iterates $(w^{n+1,\delta_k},z^{n+1,\delta_k})$ corresponding to $\alpha_1,\alpha_2$ and satisfying the discrepancy in Equation~\eqref{eq:discrepancy}.

Since $\|F(w^{n+1,\delta_k},z^{n+1,\delta_k}) - y^{\delta_k}\| \leq \lambda\delta_k$ and by Lemma~3.21 in \cite{schervar},
\begin{multline*}
 \mathcal F(w^{n+1,\delta_k},z^{n+1,\delta_k};\tilde y,\alpha^+)\leq 2^{p-1}\mathcal F(w^{n+1,\delta_k},z^{n+1,\delta_k};y^{\delta_k},\alpha^+) + 2^{p-1}\delta_k^p\\
 \leq 2^{p-1}(\lambda^p+1)\delta_k^p + \alpha^+g_{w_0}(w^{n+1,\delta_k}) + \alpha^+h_{z_0}(z^{n+1,\delta_k}).
\end{multline*}
Since $g_{w_0}(w)$ and $h_{z_0}(z)$ are uniformly bounded for $(w,z) \in \mathcal{D}$, there exists some constant $K>0$ such that $(w^{n+1,\delta_k},z^{n+1,\delta_k})\in M_{\alpha^+}(K)$, where $y^\delta$ is replaced by $\tilde y$ in the Tikhonov-type functional. Since $M_{\alpha^+}(K)$ is pre-compact w.r.t. $T_X$, the sequence of iterates $\{(w^{n+1,\delta_k},z^{n+1,\delta_k})\}_{k\in\N}$ has a $T_X$-convergent subsequence, which is also denoted by $\{(w^{n+1,\delta_k},z^{n+1,\delta_k})\}_{k\in\N}$ and converges to $(\tilde w,\tilde z)$ w.r.t. $T_X$. So, by the $T_X$- continuity of $F$ and the norm of $Y$, it follows that
$$
\|F(\tilde w,\tilde z)-\tilde y\| \leq \liminf_{k\rightarrow \infty}\|F(w^{n+1,\delta_k},z^{n+1,\delta_k})-y^\delta_k\| \leq \lim_{k\rightarrow \infty}\lambda\delta_k = 0,
$$
and the assertion follows.
\end{proof}

\begin{rem}
 It is not difficult to prove that, under the hypotheses of Proposition~\ref{pr:convergence} 
 there exists a 
 sequence of finite iterates or inexact solutions that converges w.r.t. $T_X$ to some $f_{x_0}$-minimizing solution of the inverse problem in Equation~\eqref{eq:inverse_problem1}, when $\delta\searrow 0$. Let us consider $\{\delta_k\}_{k\in\N}$, such that $\delta_k\searrow 0$ and assume that for each $k\in\N$, the solution $(\overline w^{\delta_k}_{\alpha_1,\alpha_2},\overline z^{\delta_k}_{\alpha_1,\alpha_2})$ provided by Algorithm~\ref{algorithm1} satisfies the discrepancy in Equation~\eqref{eq:discrepancy}. Also, for each $k\in\N$, find a subsequence of iterates converging w.r.t. $T_X$ to $(\overline w^{\delta_k}_{\alpha_1,\alpha_2},\overline z^{\delta_k}_{\alpha_1,\alpha_2})$ and select one iterate that also satisfies the discrepancy and is close to $(\overline w^{\delta_k}_{\alpha_1,\alpha_2},\overline z^{\delta_k}_{\alpha_1,\alpha_2})$ w.r.t. $T_X$, and gets arbitrarily closer as $k$ increases. By Proposition~\ref{pr:convergence}, the sequence $\{(\overline w^{\delta_k}_{\alpha_1,\alpha_2},\overline z^{\delta_k}_{\alpha_1,\alpha_2})\}$ has a $T_X$-convergent subsequence, converging to $(\tilde w,\tilde z)$, a $f_{x_0}$-minimizing solution of the inverse problem in Equation~\eqref{eq:inverse_problem1}. It is easy to see that the corresponding subsequence of iterates also converges to $(\tilde w,\tilde z)$ w.r.t. $T_X$.
\end{rem}

\section{The Calibration}\label{sec:calibration} 
This section is devoted to the solution of the calibration from quoted European vanilla option prices of the local volatility surface $\left\{ a(\tau,y) | (\tau,y)\in D \right\}$ 
and the double exponential tail $\varphi$, by the splitting technique presented in Section~\ref{sec:splitting}. From the double-exponential tail, we estimate the jump-size distribution $\nu$. 

\subsection{Calibration of Local Volatility Surface and Double Exponential Tail}\label{sec:calibexptail}
The inverse problem can be stated as: {\em 
Given a set of European call option prices $\tu$, such that, $\tu - u(a_0,0)$ is in the range of $F$, $\mathcal R(F)$, find $(a^\dagger,\varphi^\dagger)$ in $\mathcal D(F)$ satisfying the equation}
\begin{equation}
 \tu = u(a^\dagger,\varphi^\dagger),
 \label{eq:inverseproblem}
\end{equation}
{\em where $u$ is the solution of the PIDE problem in \eqref{eq:pide2}-\eqref{eq:pide2b}, using the integral representation \eqref{eq:integral_part}.} 

In practice, it is only possible to observe noisy option data given in a sparse mesh of strikes. Such data is denoted by $u^\delta$, where
\begin{equation}
 \|\tu - u^\delta\| \leq \delta,
\end{equation}
and $\delta>0$ is the noise level.

To use the results from Section~\ref{sec:splitting} in this context, we introduce the following notation:
\begin{multline*}
w := \tilde a = a-a_0,~ z := (\varphi_-,\varphi_+),~w_0 := a_0, z_0 := 0,\\
\tilde y := \tu - u(a_0,0),~\mbox{and}~ y^\delta := u^\delta - Pu(a_0,0),
\end{multline*}
where $P$ projects the solution of \eqref{eq:pide2}-\eqref{eq:pide2b} onto the sparse mesh where $u^\delta$ is given. 
Since $X = W\times Z$, 
$$
W := H^{1+\varepsilon}(D),\quad\mbox{and}\quad Z := W^{2,1}(-\infty,0)\times W^{2,1}(0,+\infty). 
$$

Let $T_X$ and $T_Y$ be the weak topologies of $X$ and $Y=W^{1,2}_2(D)$, respectively. By assuming that $g_{a_0} = g_{w_0}$ and $h_{\varphi_0} = h_{z_0}$ are convex, proper and weakly lower semi-continuous functionals, Propositions~\ref{prop:continuity}-\ref{prop:frechet} imply that  Assumptions~\ref{ass:2}-\ref{ass:3} hold true. Note that, the tangential cone condition in Assumption~\ref{ass:3} is an easy consequence of the Inequality~\eqref{eq:tangential_pide} in Proposition~\ref{prop:frechet}. Hence, given the data $u^\delta$, the splitting algorithm applied to the simultaneous calibration of $a$ and $\varphi$ converges to some approximation of the true solution of the inverse problem \eqref{eq:inverseproblem}, if the latter exists. 

Since the inclusion of $W^{1,2}_2(D)$ into $L^2(D)$ is continuous, the existence and stability of solutions given by the splitting algorithm as well as its convergence to the true solution also hold whenever $Y=W^{1,2}_2(D)$ is replaced $Y=L^2(D)$ in the Tikhonov regularization functional in \eqref{eq:tikhonov2} and \eqref{eq:data_misfit}.

A possible choice of the penalization term to fulfill the weak pre-compactness of the level sets of the Tikhonov functional \eqref{eq:tikhonov2} is $g_{a_0}(a) = \|a-a_0\|^2_{H^{1+\varepsilon}(D)}$ for the variable $a$ and for the variable $\varphi$, $h_{\varphi_0}(\varphi)$ is 
\begin{multline*}
h_{\varphi_0}(\varphi) = KL(\varphi_+|\varphi_{+,0}) + KL(\varphi^\prime_+|\varphi^\prime_{+,0}) + KL(\varphi^{\prime\prime}_+|\varphi^{\prime\prime}_{+,0})\\ + KL(\varphi_-|\varphi_{-,0}) + KL(\varphi^\prime_-|\varphi^\prime_{-,0}) + KL(\varphi^{\prime\prime}_-|\varphi^{\prime\prime}_{-,0})
\end{multline*}
where the $KL$ stands for the Kullback-Leibler divergence
$$
 KL(\varphi_+|\varphi_{+,0}) = \displaystyle\int_{0}^{+\infty}\left[\varphi_+\ln\left(\frac{\varphi_+}{\varphi_{+,0}}\right)+ (\varphi_{+,0}-\varphi_+)\right]dx,
$$
with $\varphi_0> 0$ given. 
In this case, $g_{a_0}$ and $h_{\varphi_0}$ are convex, weakly continuous and coercive. In addition, the level sets of the Kullback-Leibler divergence
$$
\{\varphi \in L^1(\R) ~:~ KL(\varphi|\varphi_0) \leq C\}
$$
are weakly pre-compact in $ L^1(\R)$. See Lemma 3.4 in \cite{resa}.

\subsection{Calibration of Jump-Size Distribution from Double Exponential Tail}

One possible way, but not recommended, to obtain the jump-size distribution $\nu$ is by differentiating once the double exponential tail $\varphi$, since, $\nu$ is such that $\varphi_{+}:=\varphi|_{(0,+\infty)} \in W^{2,1}(0,+\infty)$ and $\varphi_{-}:=\varphi|_{(-\infty,0)}\in W^{2,1}(-\infty,0)$. By Sobolev's embedding (see Theorem~4.12 in \cite{AdaFou2003}), $\varphi^\prime_{\pm}$ are continuous functions and
$$
\varphi^\prime(z) = \left\{
\begin{array}{ll}
\text{e}^z\displaystyle\int_{-\infty}^{z}\nu(dx) = \text{e}^z\nu((-\infty,z]),& z<0\\
-\text{e}^z\displaystyle\int^{+\infty}_{z}\nu(dx)= -\text{e}^z\nu([z,+\infty)),& z>0.
\end{array}
\right.
$$
So, $z\mapsto \nu((-\infty,z])$ and $z\mapsto \nu([z,+\infty))$ are continuous functions.

By Proposition~5.2 in \cite{KinMay2011}, $\nu$ can be represented as
$$
\nu(dx) = \left\{
\begin{array}{ll}
\displaystyle\frac{1+x^2}{x^2}\mu_-(dx),& x<0\\
\displaystyle\frac{1+x^2}{x^2(1+x\text{e}^x)}\mu_+(dx),& x>0,
\end{array}
\right.
$$
where $\mu_+$ and $\mu_-$ are finite measures, defined in $(0,+\infty)$ and $(-\infty,0)$, respectively. This implies that, $z\mapsto \mu_-((-\infty,z])$ and $z\mapsto \mu_+([z,+\infty))$ are continuous functions, which implies that they are absolutely continuous with respect to the Lebesgue measure. See Lemma~III.4.13 in \cite{DunSch1958}. So, there exist integrable functions $h_{\pm}$, such that
$$
h_-(x)dx = \mu_-(dx) \quad \mbox{ and }\quad h_+(x)dx = \mu_+(dx).
$$
Define $h \in L^1(\R)$, such that $h|_{(-\infty,0)} = h_-$ and $h|_{(0,+\infty)} = h_+$.

\begin{lem}
 The map $h \in L^1(\R) \longmapsto \varphi \in L^2(\R)$ is compact.
\end{lem}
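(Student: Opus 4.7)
The plan is to verify the Kolmogorov--Riesz--Fr\'echet compactness criterion for the image of any bounded set in $L^1(\R)$, exploiting the monotonicity of $\varphi$ on each half-line together with an exponential decay estimate. Because the map $T:h\mapsto\varphi$ is a linear integral operator given for $y<0$ by $\varphi(y)=\int_{-\infty}^y(e^y-e^x)w(x)h(x)\,dx$ with $w(x)=(1+x^2)/x^2$, and by an analogous expression on $y>0$ with weight $(1+x^2)/(x^2(1+xe^x))$, the operator splits across the two half-lines and by symmetry it suffices to treat the negative side; the positive side is handled identically.

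First I would rewrite, via Fubini, $\varphi(y)=\int_{-\infty}^y e^t\,\nu((-\infty,t])\,dt$ for $y<0$, which shows that $\varphi$ is absolutely continuous, non-negative, and non-decreasing on $(-\infty,0)$, with $\lim_{y\to-\infty}\varphi(y)=0$. For a bounded sequence $\{h_n\}\subset L^1(\R)$ with corresponding $\varphi_n=Th_n$, the weight obeys $w(x)\le 1+1/\delta^2$ on $(-\infty,-\delta]$, giving the uniform estimate $\varphi_n(y)\le(1+1/\delta^2)e^y\|h_n\|_{L^1}$ there. This simultaneously delivers the tightness bound $\int_{-\infty}^{-R}\varphi_n^2\,dy\to 0$ uniformly in $n$ as $R\to\infty$, together with uniform boundedness of $\varphi_n$ on every compact subset of $(-\infty,0)$.

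Next, Helly's selection theorem applied to the uniformly bounded non-decreasing sequence $\{\varphi_n|_{(-\infty,-1/k]}\}$, followed by a diagonal extraction over $k\in\N$, produces a subsequence $\{\varphi_{n_j}\}$ converging pointwise on $(-\infty,0)$ to a non-decreasing limit $\varphi^\ast$. To upgrade pointwise to $L^2$-convergence I would invoke the Vitali convergence theorem: the exponential domination handles the region $(-\infty,-\delta]$, and equi-integrability on the remaining slab $(-\delta,0)$ should follow from the cancellation $(e^y-e^x)\le y-x$ for $x<y<0$ combined with the L\'evy-type inequality $\int(1\wedge x^2)\,\nu_n(dx)\le 2\|h_n\|_{L^1}$, which is precisely what tames the $1/x^2$ singularity of the weight.

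The hard part is exactly this equi-integrability of $\{\varphi_n^2\}$ near the origin. Both the kernel weight $w(x)$ and the function $\varphi_n(y)$ can be unbounded as $x,y\to 0$, so separating the factors in the integrand loses the cancellation that keeps the $L^2$-mass near zero under control. The technical heart of the argument will be to exploit the joint structure of the kernel --- how the vanishing factor $(e^y-e^x)$ absorbs the singularity of $w(x)$ when weighted by the L\'evy moment of $\nu_n$ --- and this quantitative control is the principal obstacle. Once secured, $L^2$-convergence on each half-line follows, completing the proof.
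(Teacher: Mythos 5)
You are attempting a different---and in fact strictly stronger---statement than the one the paper proves, and the step you yourself flag as ``the principal obstacle'' is not merely hard: it is false. You read compactness as ``bounded sets of $L^1(\R)$ have precompact images in $L^2(\R)$'' and plan to run Helly plus Kolmogorov--Riesz/Vitali on an arbitrary bounded sequence $\{h_n\}$. But with the weights of the paper the map is not even bounded from $L^1$ into $L^2$, precisely because of the $1/x^2$ singularity at the origin. Take $h_n=n\,\chi_{(-1/n,0)}$, so $\|h_n\|_{L^1}=1$. For $y\in(-1/n,0)$, using $e^y-e^x\geq e^{-1}(y-x)$ and $(1+x^2)/x^2\geq x^{-2}$,
$$
\varphi_n(y)\ \geq\ \frac{n}{e}\int_{-1/n}^{y}\frac{y-x}{x^2}\,dx\ =\ \frac{n}{e}\left(\ln\frac{1}{n|y|}+n|y|-1\right),
$$
whence $\int_{-1/n}^{0}\varphi_n(y)^2\,dy\ \geq\ c\,n\rightarrow\infty$ for some $c>0$. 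So the image of the unit ball of $L^1$ is unbounded in $L^2$, the equi-integrability of $\{\varphi_n^2\}$ near $0$ that your Vitali step requires cannot hold, and neither the cancellation $e^y-e^x\leq y-x$ nor the bound $\int(1\wedge x^2)\,\nu_n(dx)\leq 2\|h_n\|_{L^1}$ can repair this: the obstruction is concentration of $L^1$-mass at the origin, which mere boundedness permits.

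The compactness asserted in the lemma (and used afterwards to conclude ill-posedness) is the weak-to-norm sequential property: if $(h_{-,n},h_{+,n})$ converges \emph{weakly} in $L^1(-\infty,0)\times L^1(0,+\infty)$, then $\varphi_n\rightarrow\varphi$ in $L^2(\R)$. Weak $L^1$ convergence is exactly what rules out your counterexample (it forces uniform integrability, so no concentration at $0$, and the limit is again an $L^1$ density), and the paper's argument is far more elementary than your scheme: for each fixed $z\neq 0$ the kernel $x\mapsto (e^z-e^x)\tfrac{1+x^2}{x^2}$ (and its analogue on the positive half-line) is a bounded function of $x$, so weak convergence of $h_n$ immediately gives $\varphi_n(z)\rightarrow\varphi(z)$ almost everywhere, and the convergence is then upgraded to $L^2$ using a convergence theorem together with the standing assumption that the $\varphi_n$ lie in $W^{2,1}$ on each half-line. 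If you insist on the bounded-set formulation, you must add a hypothesis that prevents mass from accumulating at the origin (uniform integrability of the $h_n$, or a uniform $W^{2,1}$ bound on the $\varphi_n$); without it the statement you set out to prove is false, and with weak convergence as the hypothesis the Helly/Kolmogorov--Riesz machinery is unnecessary.
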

\begin{proof}
Let the sequence $\{(h_{-,n},h_{+,n})\}_{n\in\N}$ converge weakly to some $(h_{-},h_{+})$ in $ L^1(-\infty,0)\times L^1(0,+\infty)$. Define
$$
\varphi_n(z) = \left\{
\begin{array}{ll}
\displaystyle\int_{-\infty}^z(\text{e}^z-\text{e}^x)\frac{1+x^2}{x^2}h_{-,n}(x)dx,& z<0\\
\\
\displaystyle\int^{+\infty}_z(\text{e}^x-\text{e}^z)\frac{1+x^2}{x^2(1+x\text{e}^x)}h_{+,n}(x)dx,& z>0,
\end{array}
\right.
$$
for each $n\in \N$ and $\varphi = \varphi(\mu_{-},\mu_{+})$ in the same way. It is easy to see that $\varphi_+ = \varphi|_{(0,+\infty)} \in W^{1,1}(0,+\infty)$ and $\varphi_- = \varphi|_{(-\infty,0)} \in W^{1,1}(-\infty,0)$. Note also that, by hypothesis, $\varphi_{+,n} \in W^{2,1}(0,+\infty)$ and $\varphi_{-,n}\in W^{2,1}(-\infty,0)$. So, by Sobolev's embedding (see Theorem~4.12 in \cite{AdaFou2003}), $\varphi_{+,n}, \varphi_+ \in L^2(0,+\infty)$ and $\varphi_{-,n},\varphi_-\in L^2(-\infty,0)$. 

The estimate
$$
|\varphi_{-,n}(z) - \varphi_-(z)|^2 = \left|\displaystyle\int^z_{-\infty}(\text{e}^z-\text{e}^x)\frac{1+x^2}{x^2}(h_{-,n}-h_-)(x)dx\right|^2\rightarrow 0,
$$
holds almost everywhere in $(-\infty,0)$, since $(\text{e}^z-\text{e}^x)\frac{1+x^2}{x^2}$ is in $L^\infty(0,+\infty)$. Similarly, $|\varphi_{+,n}(z) - \varphi_+(z)|\rightarrow 0$ almost everywhere. By the monotone convergence theorem the assertion follows.
\end{proof}
Since the map that associates $h$ to $\varphi$ is compact, it follows that the corresponding inverse problem is ill-posed. So, the procedure of obtaining $h$ by differentiating $\varphi$ is not stable.

The inverse problem of finding the jump-size distribution from the double exponential tail is: {\em Given the output of the splitting algorithm $\tilde \varphi \in L^2(\R)$, find $(h_-,h_+) \in \mathcal{V}_-\times \mathcal{V}_+$ satisfying}
\begin{equation}
 \varphi(h_-,h_+) = \tilde\varphi,
 \label{eq:ip_phi}
\end{equation}
where
 $\mathcal{V}_-\times \mathcal{V}_+$ is some subset of $L^1(-\infty,0)\times L^1(0,+\infty)$.

If we apply Tikhonov-type regularization to such inverse problem, it can be rewritten as: find $(h_-,h_+) \in \mathcal{V}_-\times \mathcal{V}_+$ minimizing
$$
\mathcal G(h_-,h_+) = \| \varphi(h_-,h_+) - \tilde\varphi\|^2_{L^2(\R)} + \alpha f_{h_{-,0},h_{+,0}}(h_-,h_+),
$$
with $(h_{-,0},h_{+,0})$ in $L^1(-\infty,0)\times L^1(0,+\infty)$ given.

Let us assume that $ f_{\nu_{-,0},\nu_{+,0}}$ is weakly lower-semi-continuous and convex. If the level sets of 
$\mathcal G(h_-,h_+) $ are weakly compact in $\mathcal{V}_-\times \mathcal{V}_+$ (or $\mathcal{V}_-\times \mathcal{V}_+$ is weakly compact), then, as in Section~\ref{sec:tikhonov}, there exists stable minimizers of $\mathcal G(h_-,h_+)$ in $\mathcal{V}_-\times \mathcal{V}_+$.

Summing up, the Tikhonov-type regularization provides a stable approximation for the jump-size distribution $\nu$.
\subsection{Gradient Evaluation}\label{sec:gradient}
To implement a numerical gradient descent algorithm to minimize the Tikhonov-type functional with respect to each variable, as in \citet{AlbAscZub2016}, it is necessary to evaluate the directional derivatives of a numerical approximation of the data misfit function $\phi = \phi(a,\varphi)$. If $a^k$ and $\varphi^k$ denote the iterates of $a$ and $\varphi$ respectively in the gradient descent algorithm, evaluate
\begin{eqnarray}
 a^k = a^{k-1} + \theta_k \frac{\partial}{\partial a} \mathcal F(a^{k-1},\tilde\varphi)\\
 \varphi^k = \varphi^{k-1} + \beta_k \frac{\partial}{\partial \varphi} \mathcal F(\tilde a,\varphi^{k-1}),
\end{eqnarray}
until some tolerance is reached, with $\tilde a$ and $\tilde\varphi$ fixed. To perform this task, we shall present the evaluation of such derivatives in the continuous setting. 

Since
$$
\frac{\partial}{\partial a} \mathcal F(a,\varphi) = \frac{\partial}{\partial a} \phi(a,\varphi) + \alpha_1 \partial g_{a_0}(a) ~\text{  and  }~
\frac{\partial}{\partial \varphi} \mathcal F(a,\varphi) = \frac{\partial}{\partial \varphi} \phi(a,\varphi) + \alpha_2 \partial h_{\varphi_0}(\varphi),
$$
to evaluate $\frac{\partial}{\partial a} \phi$ and $\frac{\partial}{\partial \varphi} \phi$, recall that the directional derivative of $F$ at $(a,\varphi)$ in the direction $(h,\gamma)$, with $(a+h,\varphi+\gamma) \in \mathcal{D}(F)$, is denoted by $v$ and is the unique solution of the PIDE
\begin{multline}
  v_\tau(\tau,y) - a(\tau,y) \left(v_{yy}(\tau,y) - v_{y}(\tau,y)\right)  + rv_y(\tau,y)
 -I_\varphi(v_{yy} - v_y)(\tau,y)\\ =
 h\left(u_{yy}(\tau,y) - u_y(\tau,y)\right) 
+ I_{\gamma}\left(u_{yy} - u_y\right)(\tau,y),
 \label{eq:gradient1}
\end{multline}
with homogeneous boundary and initial conditions, where $u$ is the solution of the PIDE problem \eqref{eq:pide2}-\eqref{eq:pide2b}.

Note that,  $\frac{\partial}{\partial a} F(a,\varphi) h = v(h,\varphi+0)$, and $\frac{\partial}{\partial a} \phi(a,\varphi) = \frac{\partial}{\partial a} F(a,\varphi)^*P^*(Pu(a,\varphi)-u^\delta)$, so, for every $h\in Z$, 
$$
 \left\langle \frac{\partial}{\partial a} \phi,h\right\rangle = \left\langle \frac{\partial}{\partial a} F(a,\varphi)^*P^*(Pu(a,\varphi) - u^\delta),h\right\rangle = \left\langle Pu(a,\varphi) - u^\delta, P\frac{\partial}{\partial a} F(a,\varphi)h\right\rangle.
$$
Since $P\frac{\partial}{\partial a} F(a,\varphi)h = P\mathcal{L}M_{u_{yy}-u_y}h$, where $M_{u_{yy}-u_y}$ is the multiplication by $u_{yy}-u_y$ operator and $\mathcal{L}$ is the operator that maps the source $h(u_{yy} - u_y)$ onto the solution of the PIDE \eqref{eq:gradient1} with homogeneous boundary and initial conditions (with $\gamma = 0$), it follows that 
$$
\left\langle Pu(a,\varphi) - u^\delta, P\frac{\partial}{\partial a} F(a,\varphi)h\right\rangle = \langle M_{u_{yy}-u_y} \mathcal L^*P^*(Pu(a,\varphi) - u^\delta),h \rangle = \langle  (u_{yy}-u_y)w,h\rangle,
$$
where $w$ is the solution of the adjoint PIDE:
\begin{multline}
 w_\tau(\tau,y) + \left(a w\right)_{yy}(\tau,y) + (aw)_y(\tau,y) - rw_y(\tau,y) = \\ 
 \int_{\R}\varphi(x)\left(w_{yy}(\tau,x+y)+ w_y(\tau,x+y)\right)dx + P^*Pu(a,\nu) - P^*u^\delta,
\label{eq:adjoint}
\end{multline}
with homogeneous boundary and terminal conditions. 

In a similar way, we evaluate $\frac{\partial}{\partial \varphi}\phi$ and find
$$
\frac{\partial}{\partial \varphi}\phi(z) = [H^*_u w] (z) := \int_0^T\int_{\R}\left(u_{yy}(\tau,y-z) - u_y(\tau,y-z)\right)w(\tau,y)dyd\tau,
$$
where $u$ is the solution of the PIDE problem \eqref{eq:pide2}-\eqref{eq:pide2b}.

\section{A Numerical Scheme}\label{sec:numerics} 
Differently from \cite{ConVol2005a}, we consider directly the case where the activity of jumps can be infinite. This is because we use the representation \eqref{eq:integral_part} for the integral term in the PIDE problem \eqref{eq:pide2}-\eqref{eq:pide2b}.

Firstly, let us restrict the log-moneyness range where the PIDE problem \eqref{eq:pide2}-\eqref{eq:pide2b} is defined to $[y_{\min},y_{\max}]$, with $y_{\min} < 0 < y_{\max}$, and then, $D = [0,\tau_{\max}]\times[y_{\min},y_{\max}]$. Outside $[y_{\min},y_{\max}]$, the numerical solution assumes the value of the payoff function at these points.
 
Let $I,J \in \N$ be fixed. We consider the discretization $\tau_i = i \Delta \tau$, with $i = 0,1,2,...,I$, and $y_j = j \Delta y$, with $j = -J,-J+1,...,0,1,...,J$. Denote by $u^{i}_{j}:= u(\tau_i,y_j)$, $a^i_j := a(\tau_i,y_j)$,  $\beta:= \Delta \tau/\Delta y$ and $\eta = \Delta \tau/\Delta y^2$. Define also:
$$
\varphi_j = \left\{
\begin{array}{ll}
\displaystyle\int_{y_{\min}-\frac{\Delta y}{2}}^{y_j}(\text{e}^{y_j}-\text{e}^x)\nu(dx), & y_j < 0\\ 
\\
\displaystyle\int_{y_j}^{y_{\max}+\frac{\Delta y}{2}}(\text{e}^x-\text{e}^{y_j})\nu(dx), & y_j > 0,
\end{array}
\right.
$$
where these integrals are approximated by the trapezoidal rule.

The differential part of the PIDE problem \eqref{eq:pide2}-\eqref{eq:pide2b} is approximated by the Crank-Nicolson scheme and the integral operator by the trapezoidal rule, leading to:
\begin{multline}
u^{i}_{j} - \displaystyle\frac{1}{2}\eta a^{i}_{j}(u^{i}_{j+1} - 2u^{i}_{j} +u^{i}_{j-1}) + \frac{1}{4}\beta a^{i}_{j}(u^{i}_{j+1} - u^{i}_{j-1}) =\\ 
u^{i-1}_{j} + \displaystyle\frac{1}{2}\eta a^{i-1}_{j}(u^{i-1}_{j+1} - 2u^{i-1}_{j} +u^{i-1}_{j-1}) - \frac{1}{4}\beta a^{i-1}_{j}(u^{i-1}_{j+1} - u^{i-1}_{j-1}) + M^{i-1}_j,
\label{cns}
\end{multline}
where
$$
M^{i-1}_j = \sum_{k=-J}^{J}\varphi_k\text{e}^{y_k}\left[\beta(u^{i-1}_{j+1-k} - 2u^{i-1}_{j-k} +u^{i-1}_{j-1-k})-\frac{1}{2}\Delta \tau(u^{i-1}_{j+1-k} - u^{i-1}_{j-1-k})\right].
$$
In \citet{KinMayAlbEng2008} a Crank-Nicholson-type algorithm was also used to solve the so-called direct problem. There, the authors were interested in the calibration of the local speed function, which here is set constant and equal to $1$.

The numerical scheme for solving the adjoint PIDE \eqref{eq:adjoint} with homogeneous boundary and terminal conditions is quite similar to the one in Equation~\eqref{cns}. Following the same ideas presented in Section~ \ref{sec:gradient} we find the discrete version of the gradients of the data misfit function $\phi$.

\subsection{Numerical Validation}\label{sec:validation}

The purpose of this example is to illustrate the accuracy of the scheme in \eqref{cns} by comparing it with other techniques. 

Assume that $S_0=1$, $y_{\max} = 5$, $y_{\min} = -5$, $\tau_{\max} = 1$, $\Delta \tau = 0.005$, $\Delta y = 0.025$, $r = 0$ and the local volatility surface is constant with $a \equiv 0.0113$. Then, we evaluate European call prices in three different ways, the scheme of Section~\ref{sec:numerics}, the implicit-explicit scheme from \citet{ConVol2005a}, and the Fourier transform method from \citet{TanVol2009}, which is based on the pricing formula presented in \citet{CarMad1999}.

In the following synthetic examples the measure $\nu$ is assumed to be absolutely continuous w.r.t. the Lebesgue measure and given by
\begin{equation}
 \nu(dx) = \displaystyle\frac{0.1}{\sqrt{2\pi}}\text{e}^{-\frac{x^2}{2}}dx.
 \label{jumpsize}
\end{equation}

We also consider a functional local volatility surface, given by
\begin{equation}
\sigma(\tau,y) = \left\{
\begin{array}{ll}
\displaystyle\frac{2}{5}-\frac{4}{25}\text{e}^{-\tau/2}\cos\left(\displaystyle\frac{4\pi y}{5}\right),& \text{ if } -2/5 \leq y \leq 2/5\\
\\
2/5,& \text{ otherwise.}
\end{array} \right.
\label{vol}
\end{equation}
and compare the results given by the scheme \eqref{cns} with the one presented in \citet{ConVol2005a}.

To measure the accuracy, we consider implied volatilities instead of prices. Let us denote by:  
\begin{itemize}
 \item $\Sigma_{CN}$ the set of implied volatilities corresponding to the prices evaluated with the schemes from Equation~\eqref{cns}.
 \item $\Sigma_{CV}$ the set of implied volatilities corresponding to the prices evaluated with the schemes from \citet{ConVol2005a}.
 \item $\Sigma_{Fourier}$ the set of implied volatilities corresponding to the prices evaluated with the schemes from \citet{TanVol2009}.
\end{itemize}
We estimate the normalized $\ell_2-$distance between them as follows:
$$
\|\Sigma_{CN} - \Sigma_{CV}\|/\|\Sigma_{CV}\| ~\text{ or }~ \|\Sigma_{CN} - \Sigma_{Fourier}\|/\|\Sigma_{Fourier}\|,
$$
We also estimate the mean and standard deviation of the absolute relative error (abs. rel. error), which is evaluated at each node as follows:
$$
|\Sigma_{CN}(\tau_i,y_j) - \Sigma_{CV}(\tau_i,y_j)|/|\Sigma_{CV}(\tau_i,y_j)| ~\text{ or }~  |\Sigma_{CN}(\tau_i,y_j) - \Sigma_{Fourier}(\tau_i,y_j)|/|\Sigma_{Fourier}(\tau_i,y_j)|.
$$
Such results can be seen in Table~\ref{tab:example1}. A comparison between implied volatilities with constant and non-constant local volatility surface can be found in Figures~\ref{fig:impvol1a} and \ref{fig:impvol1b}, respectively.
\begin{table}[!ht]
\centering
\begin{tabular}{c c|c|c|c|}
\cline{3-5}
 & & \multirow{2}{*}{N.distance}&\multicolumn{2}{c|}{Abs. Rel. Error}\\
 \cline{4-5}
 & & & Mean & Std. Dev.\\
\hline
\multicolumn{1}{|c}{\multirow{2}{*}{$a\equiv 0.0113$}}&\multicolumn{1}{|c|}{CV} & 0.0064 & 0.0070 & 0.0072\\
\multicolumn{1}{|c}{}&\multicolumn{1}{|c|}{Fourier} & 0.0862 & 0.0923 & 0.0699\\
\hline
\multicolumn{1}{|c|}{Non-constant $a$}&\multicolumn{1}{|c|}{CV} & 0.0064 & 0.0064 & 0.0038\\
\hline
\end{tabular}
\caption{Normalized distance and absolute relative error.}
\label{tab:example1}
\end{table}

\begin{figure}[!ht]
  \centering
      \includegraphics[width=0.249\textwidth]{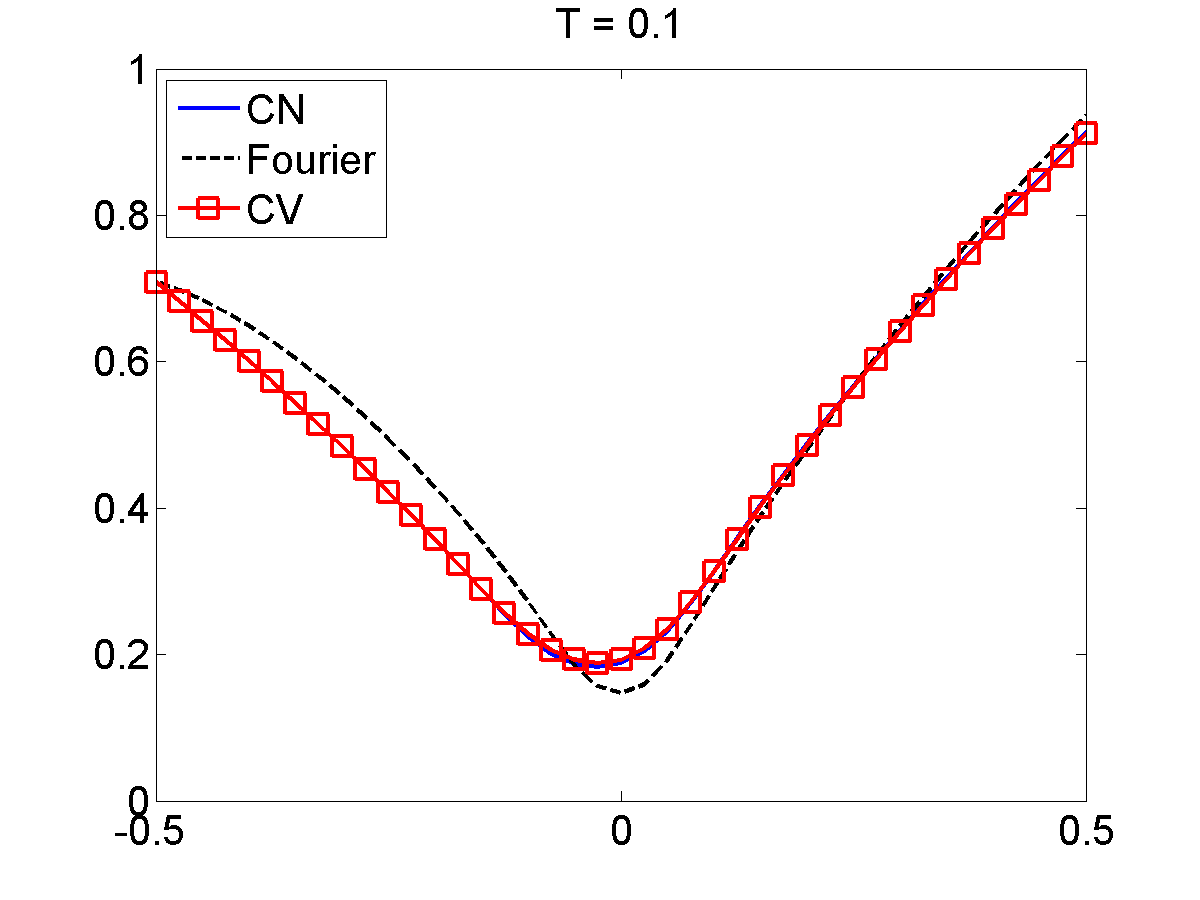}\hfill
      \includegraphics[width=0.249\textwidth]{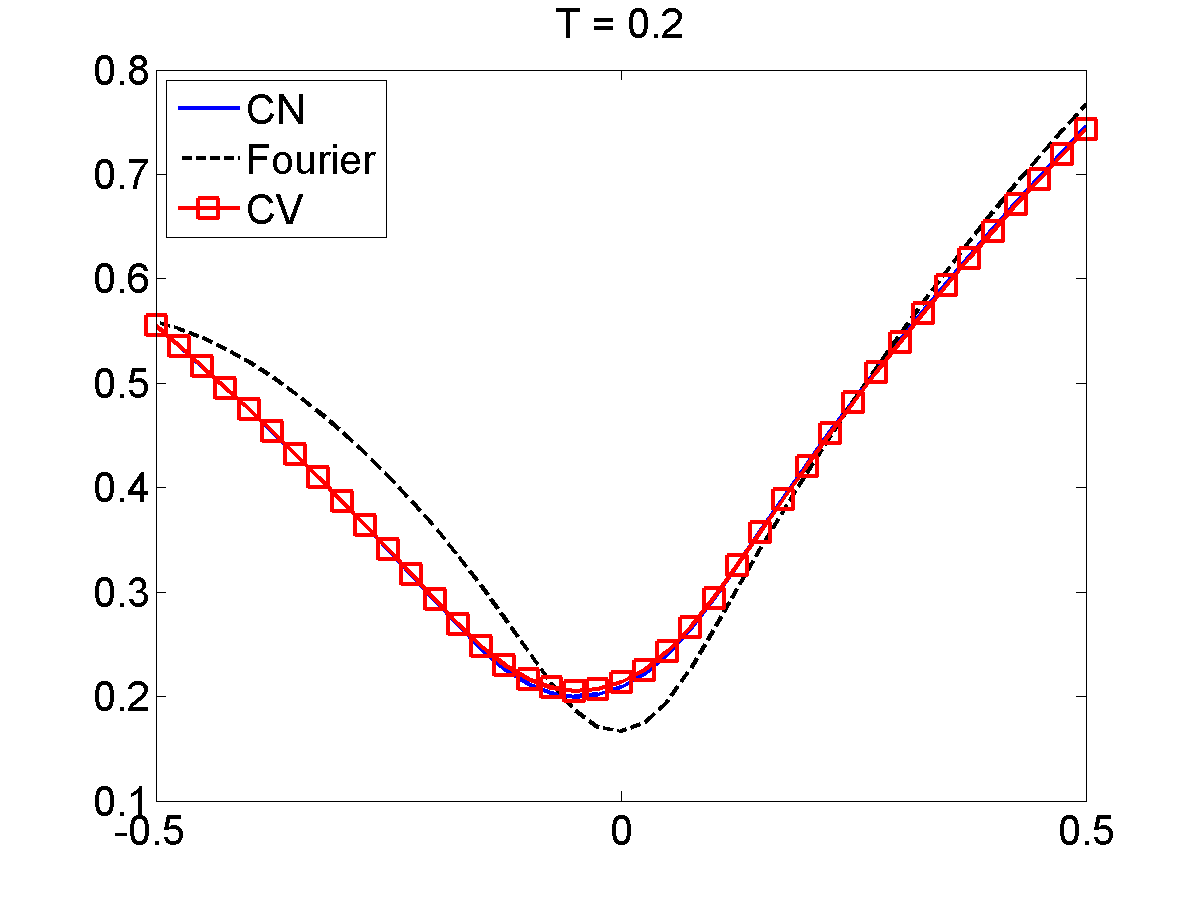}\hfill
      \includegraphics[width=0.249\textwidth]{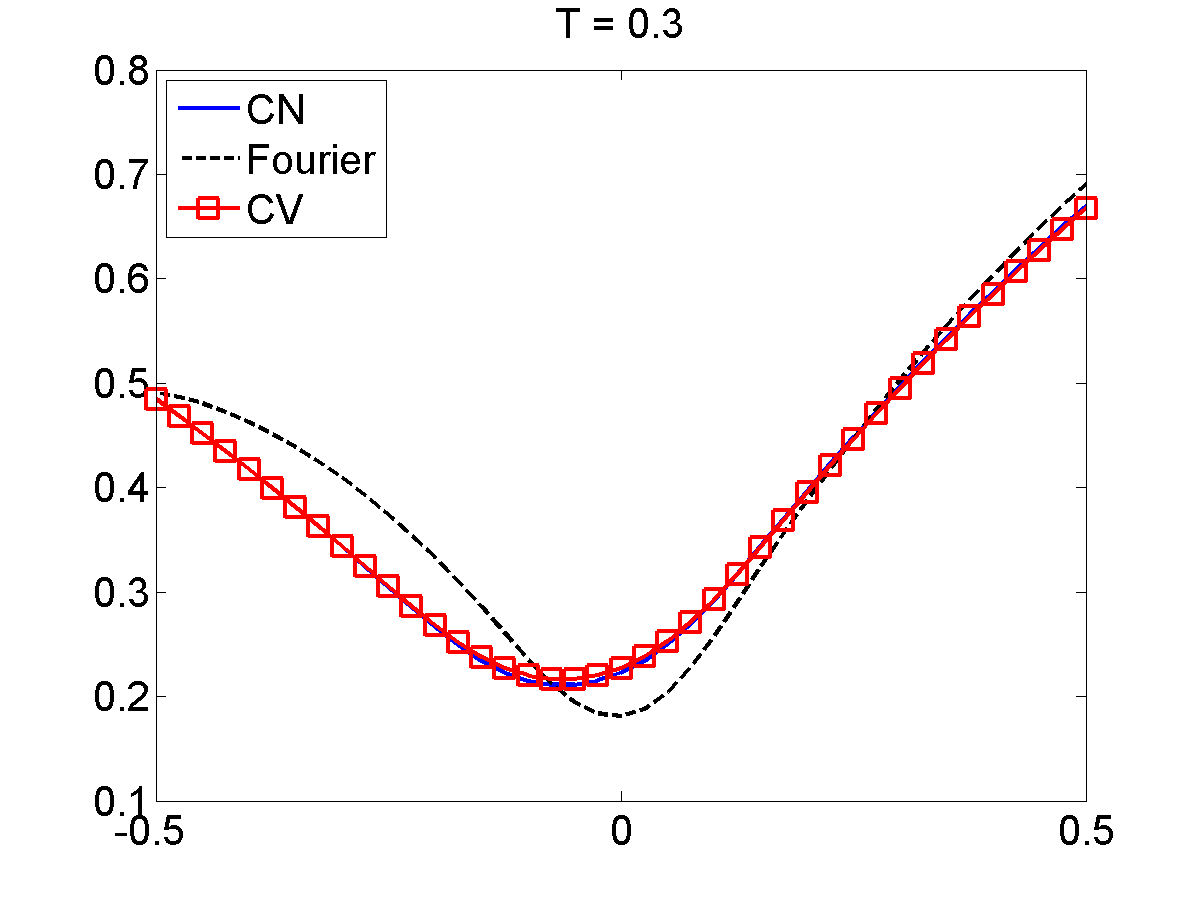}\hfill
      \includegraphics[width=0.249\textwidth]{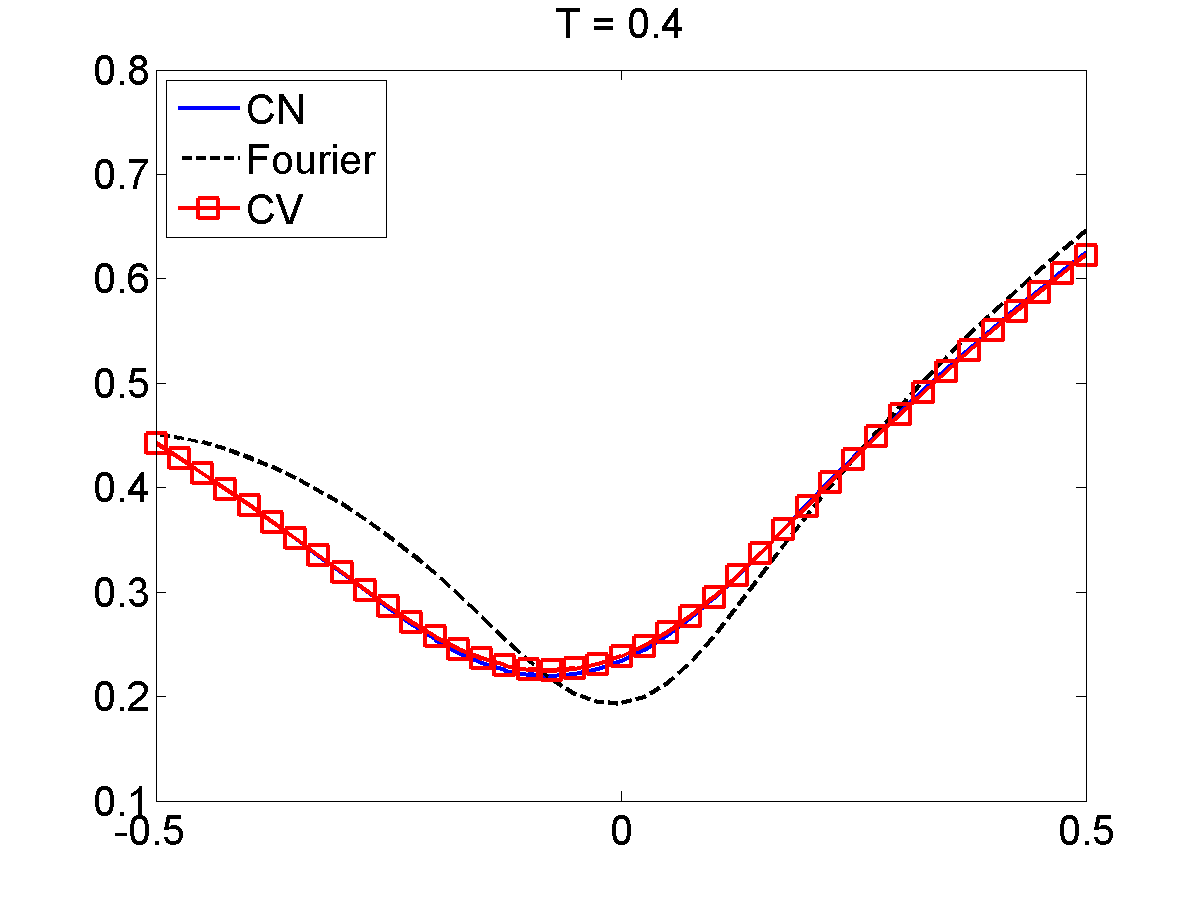}\hfill
      \includegraphics[width=0.249\textwidth]{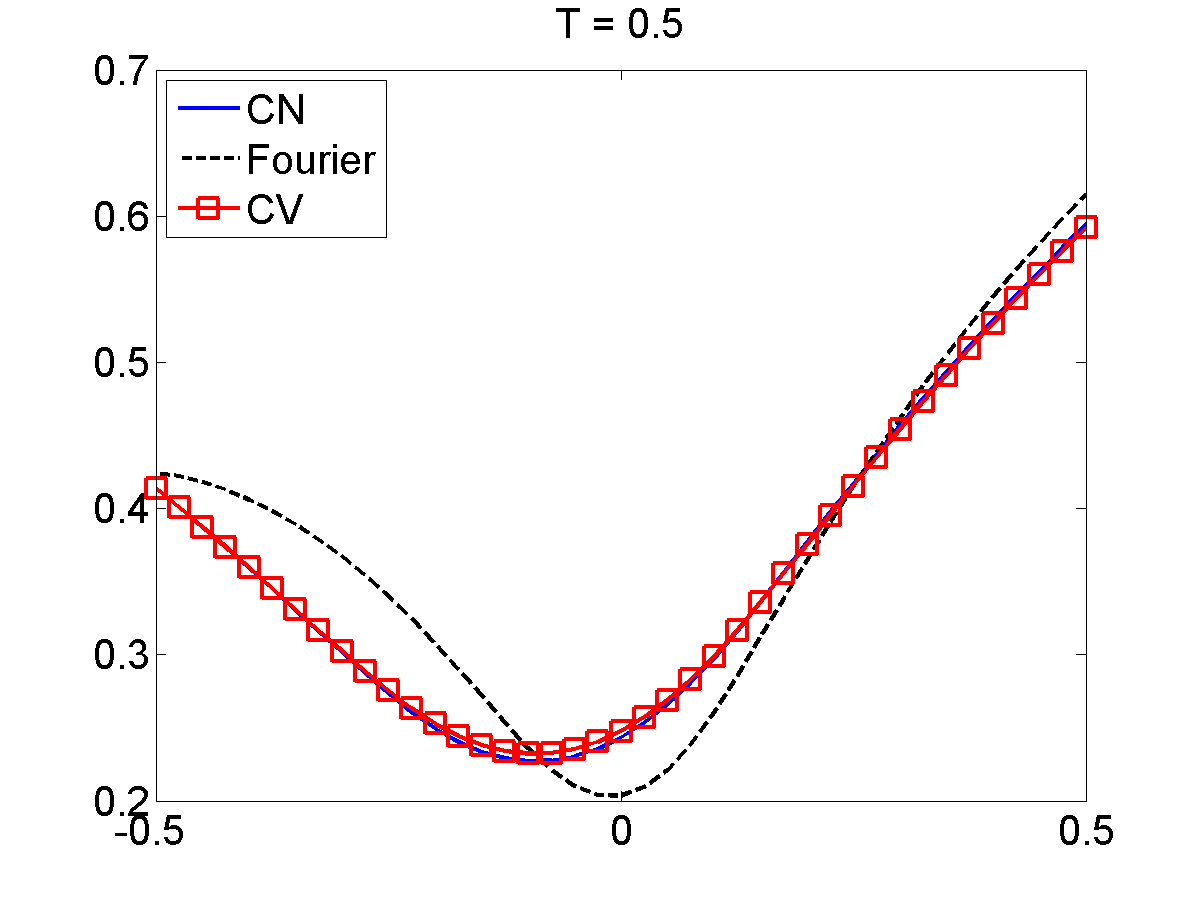}\hfill
      \includegraphics[width=0.249\textwidth]{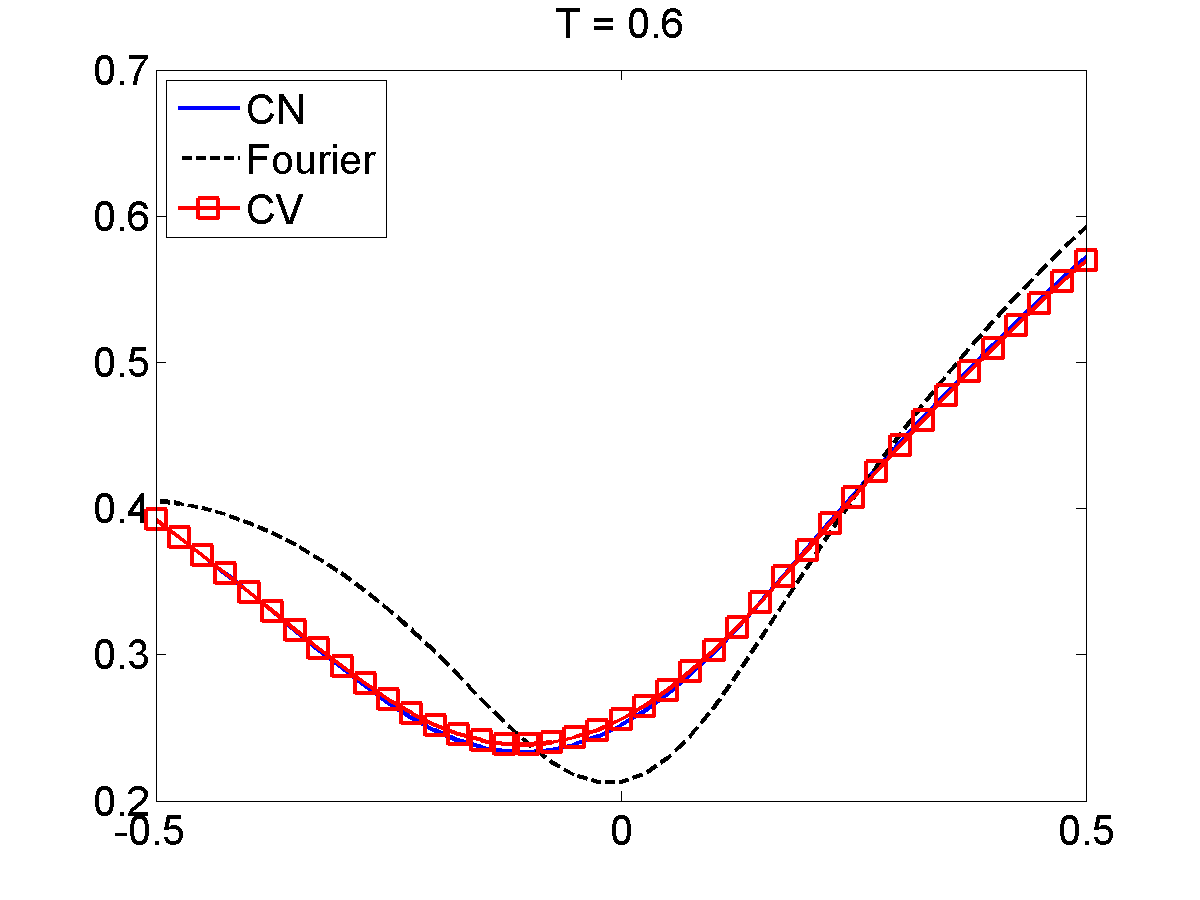}\hfill
      \includegraphics[width=0.249\textwidth]{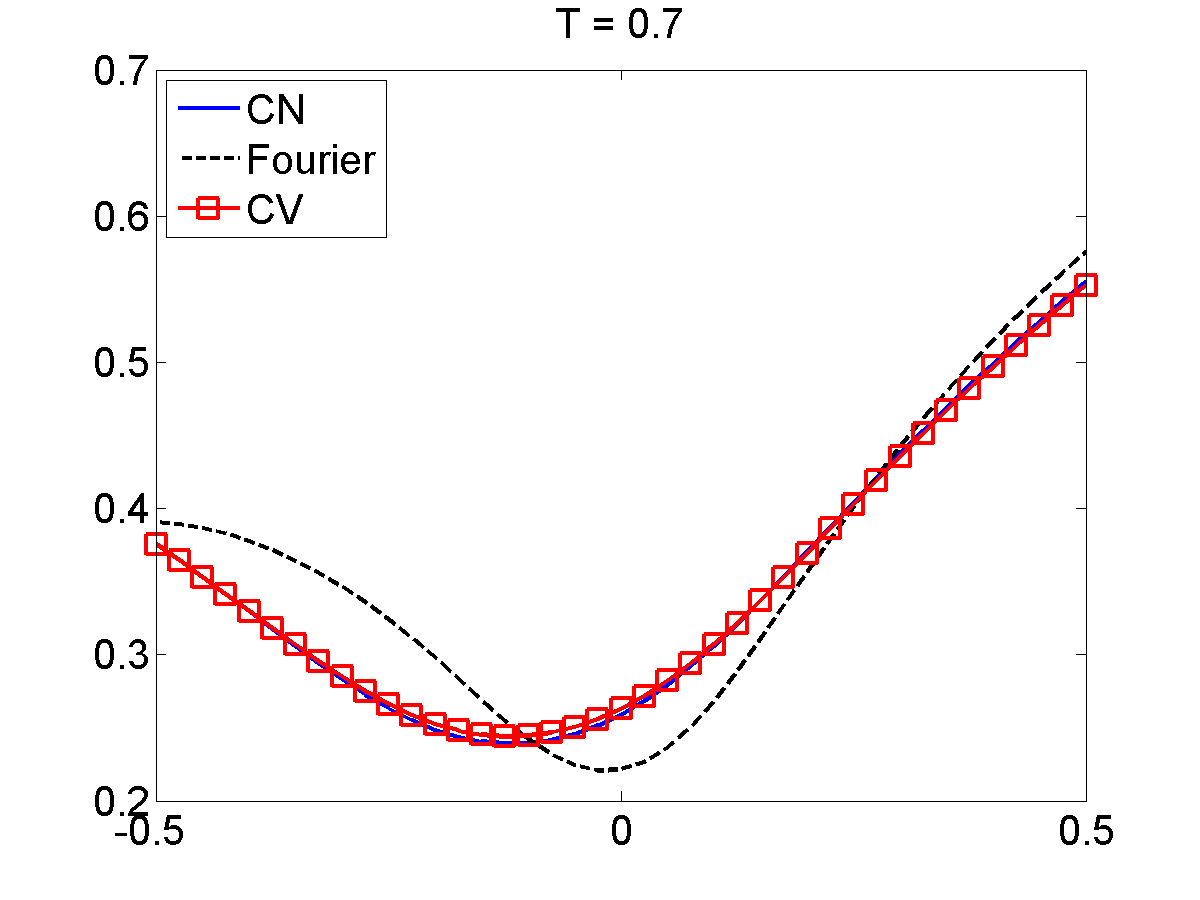}\hfill
      \includegraphics[width=0.249\textwidth]{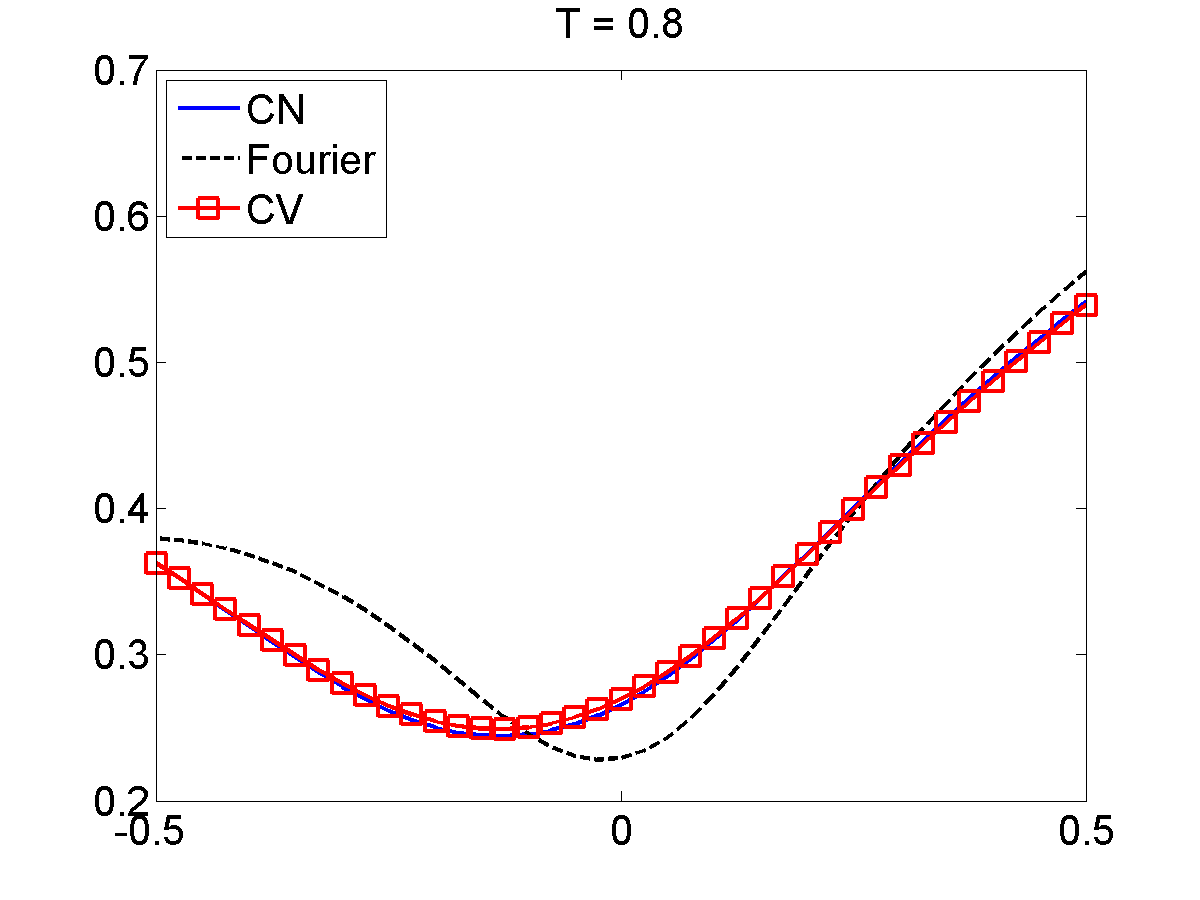}\hfill
      \includegraphics[width=0.249\textwidth]{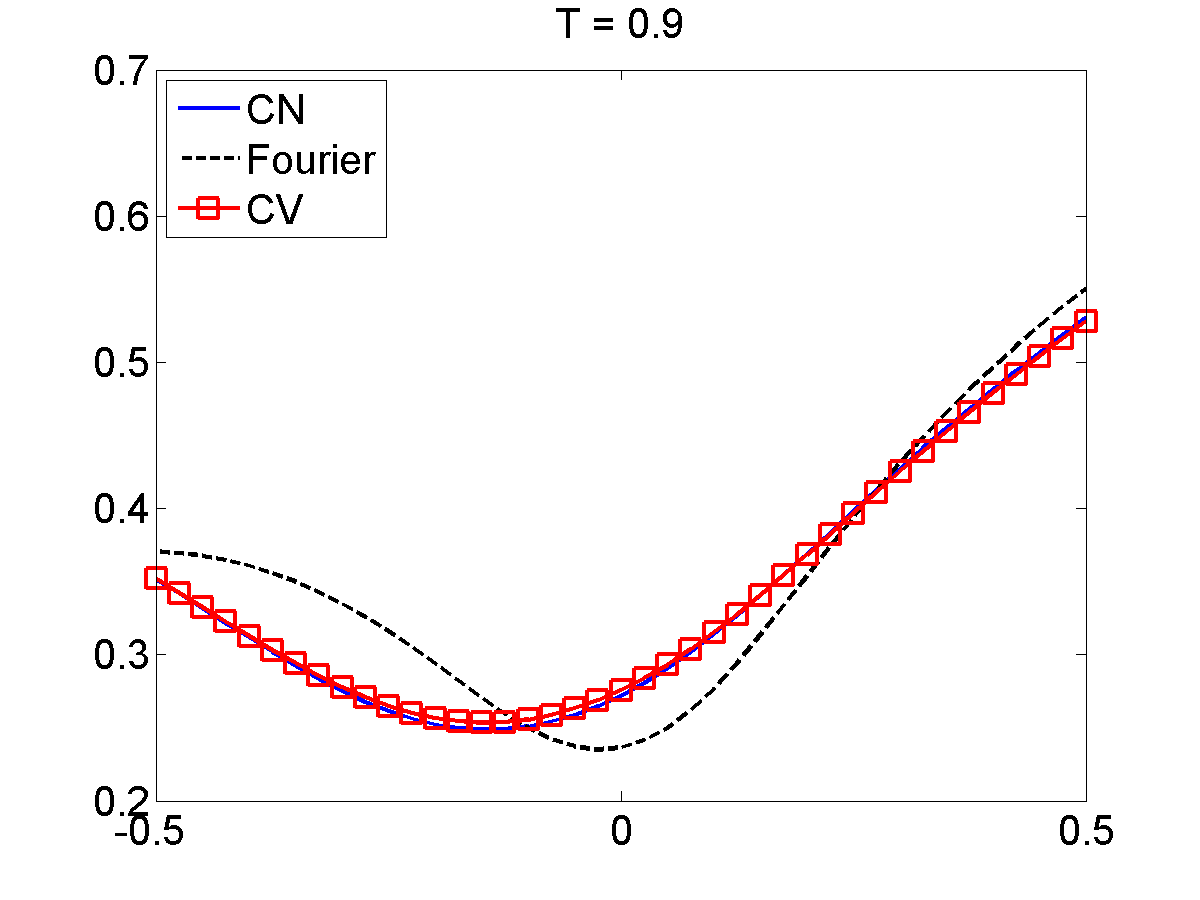}\hfill
      \includegraphics[width=0.249\textwidth]{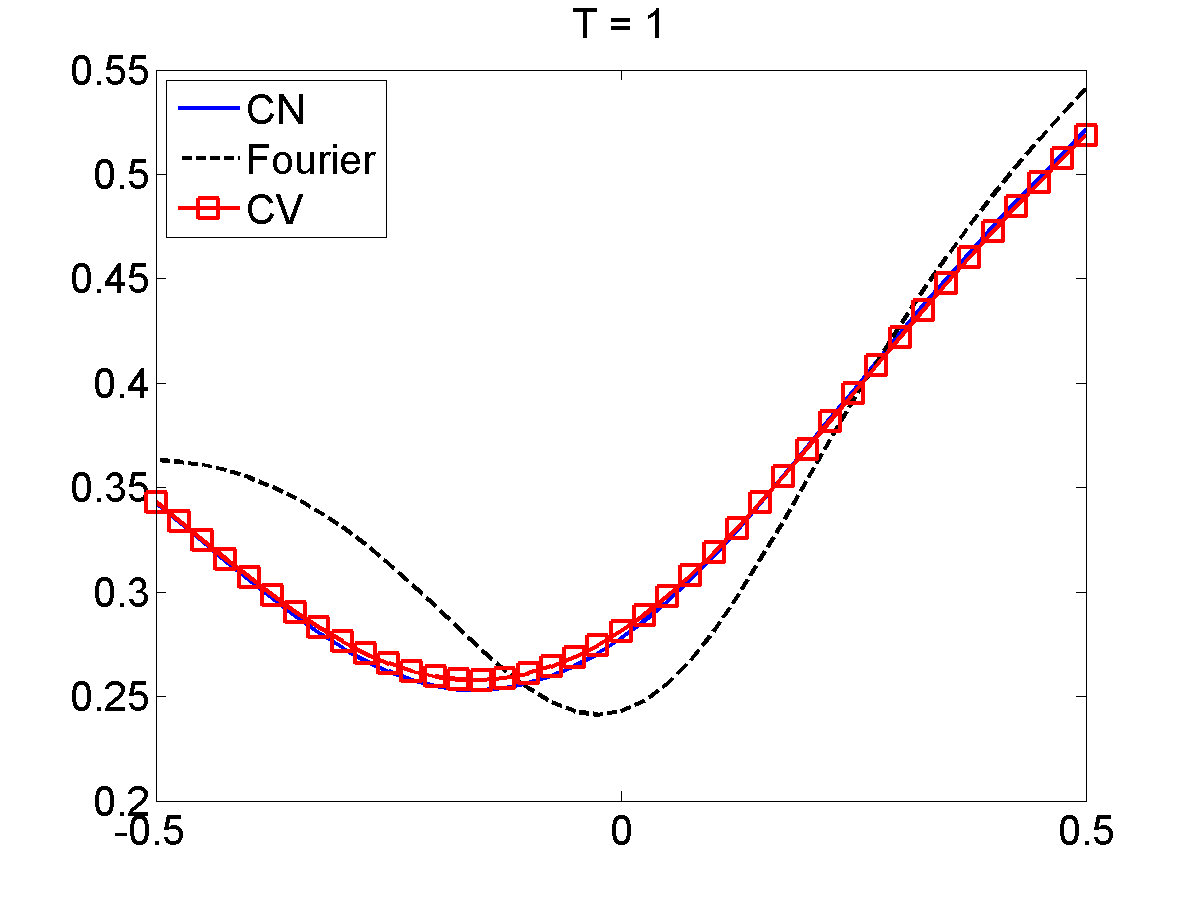}\hfill
  \caption{Implied volatility when local volatility surface is constant.}
  \label{fig:impvol1a}
\end{figure}

\begin{figure}[!ht]
  \centering
      \includegraphics[width=0.249\textwidth]{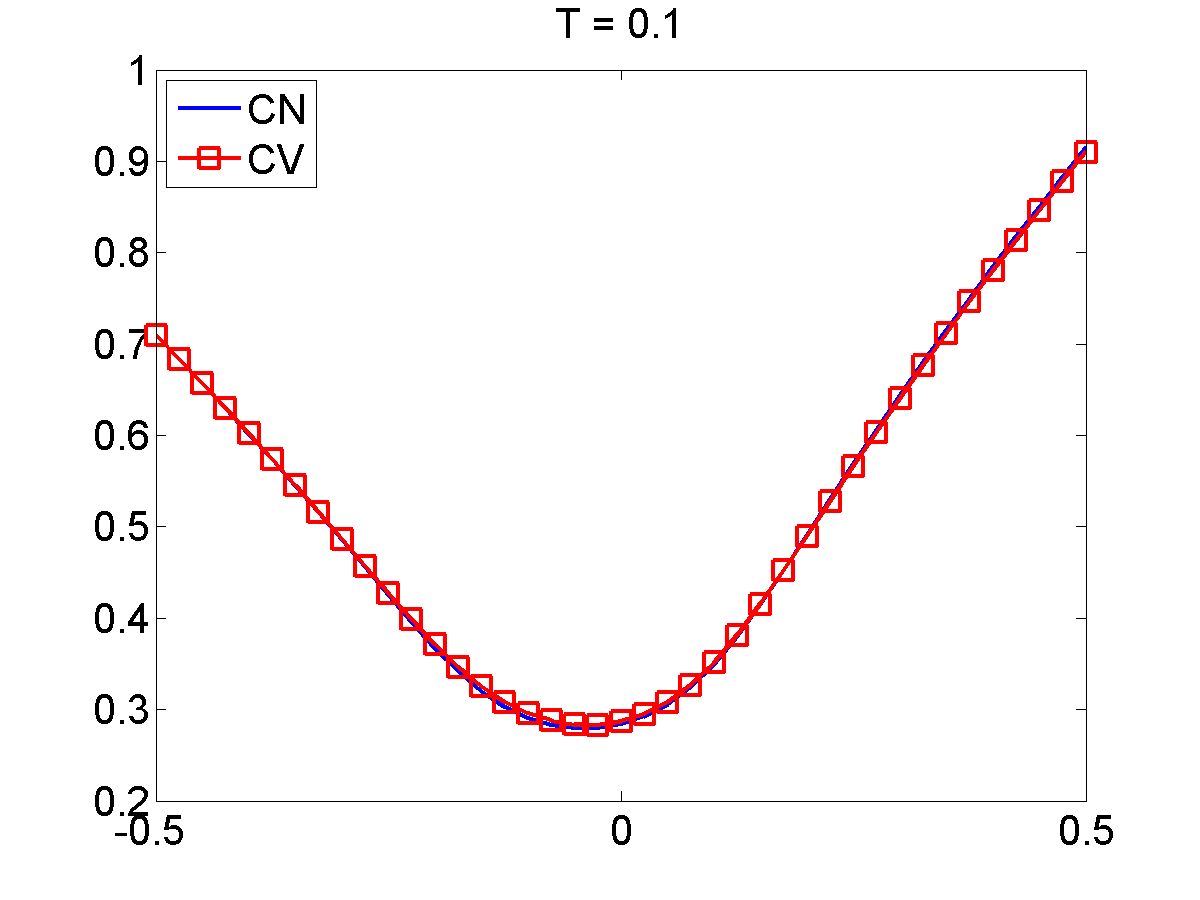}\hfill
      \includegraphics[width=0.249\textwidth]{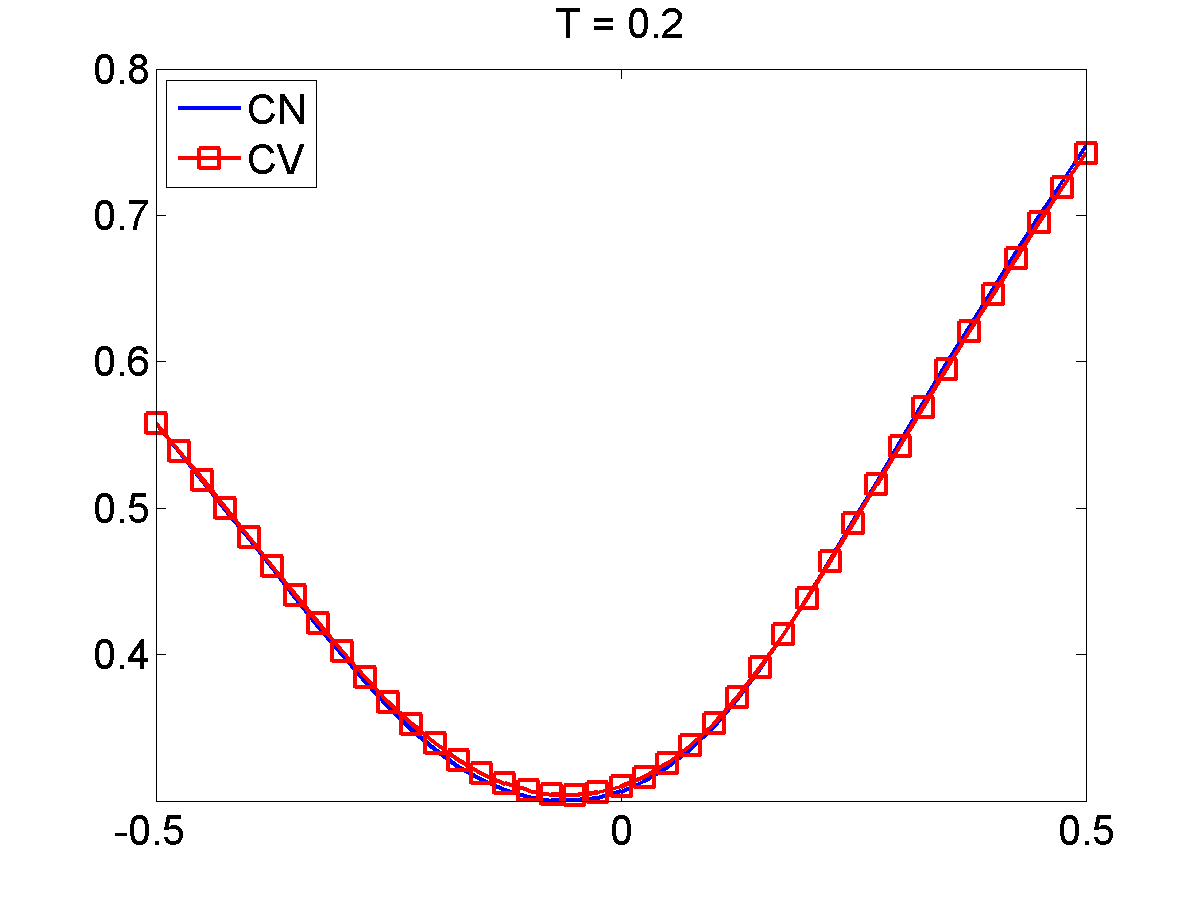}\hfill
      \includegraphics[width=0.249\textwidth]{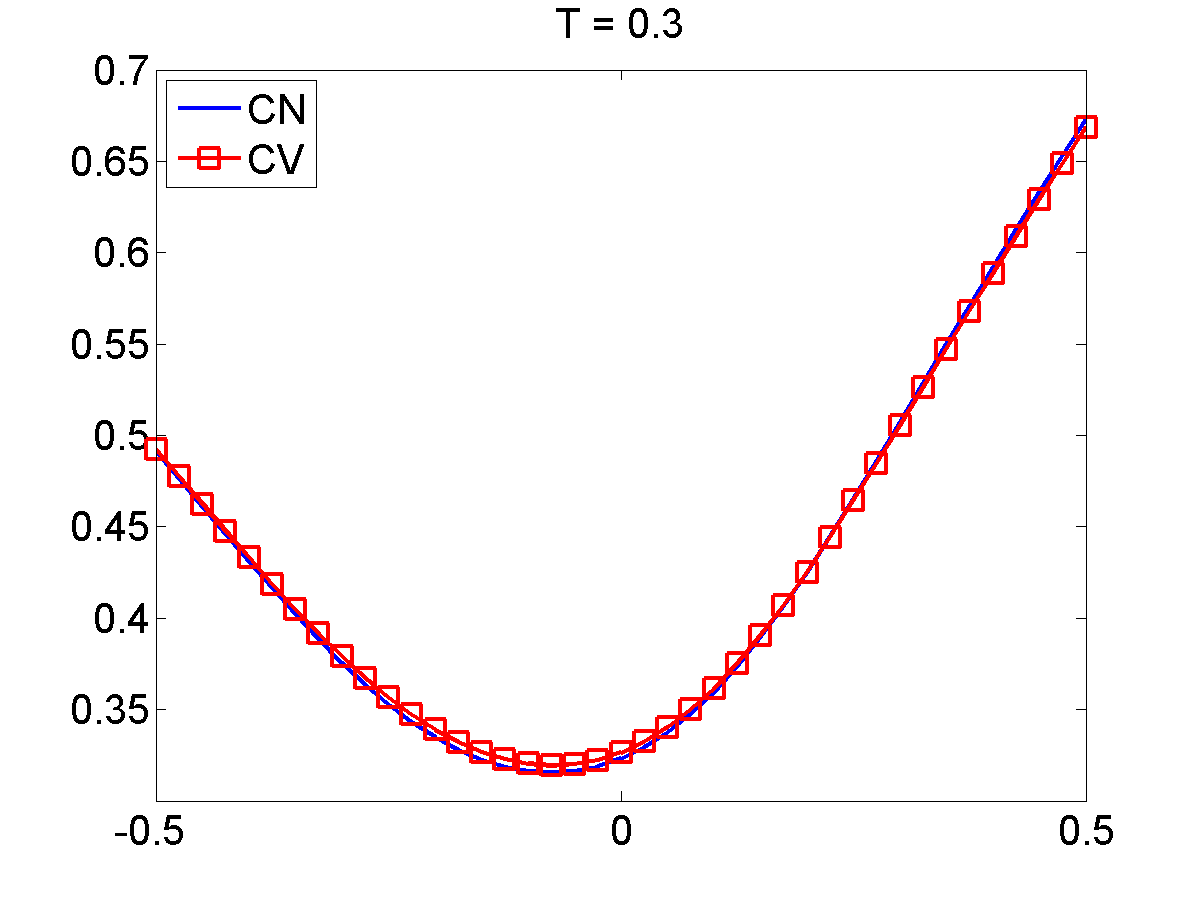}\hfill
      \includegraphics[width=0.249\textwidth]{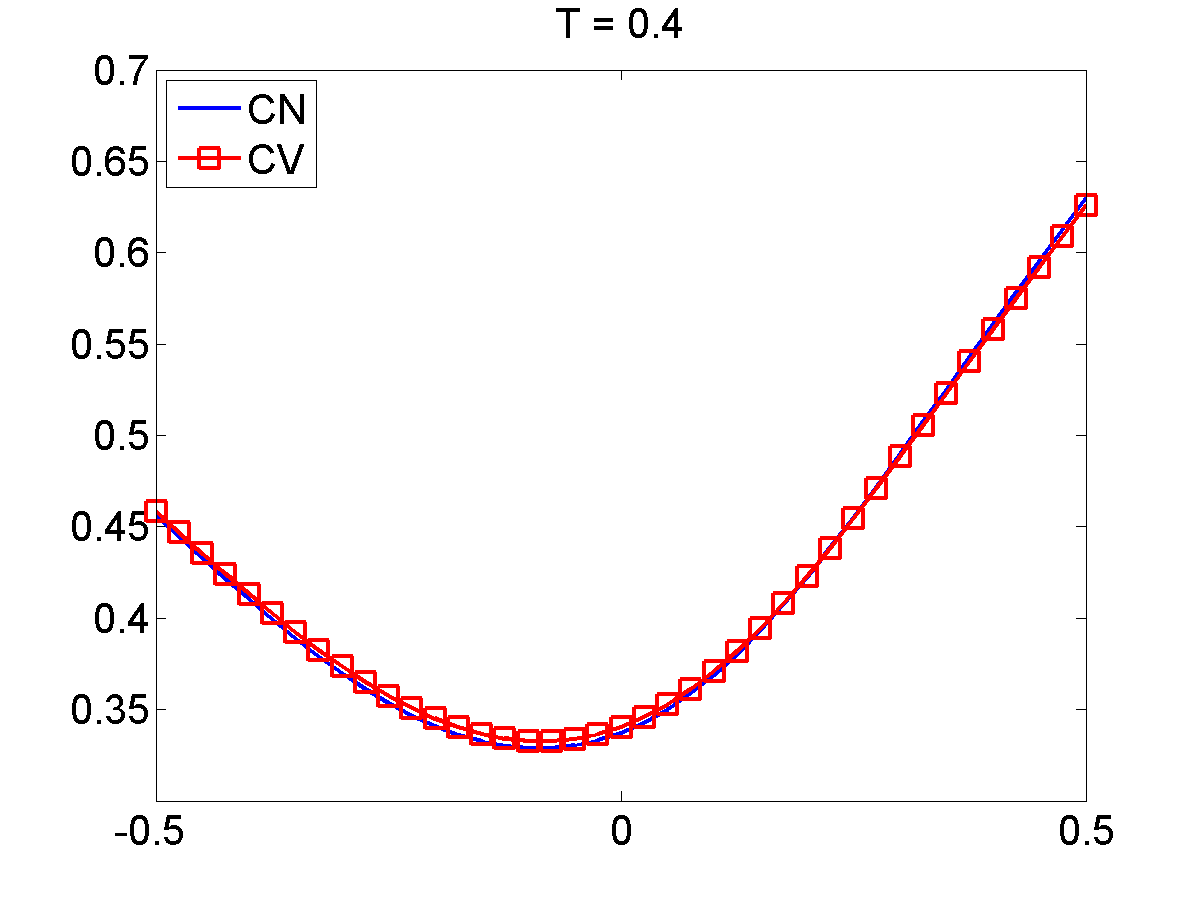}\hfill
      \includegraphics[width=0.249\textwidth]{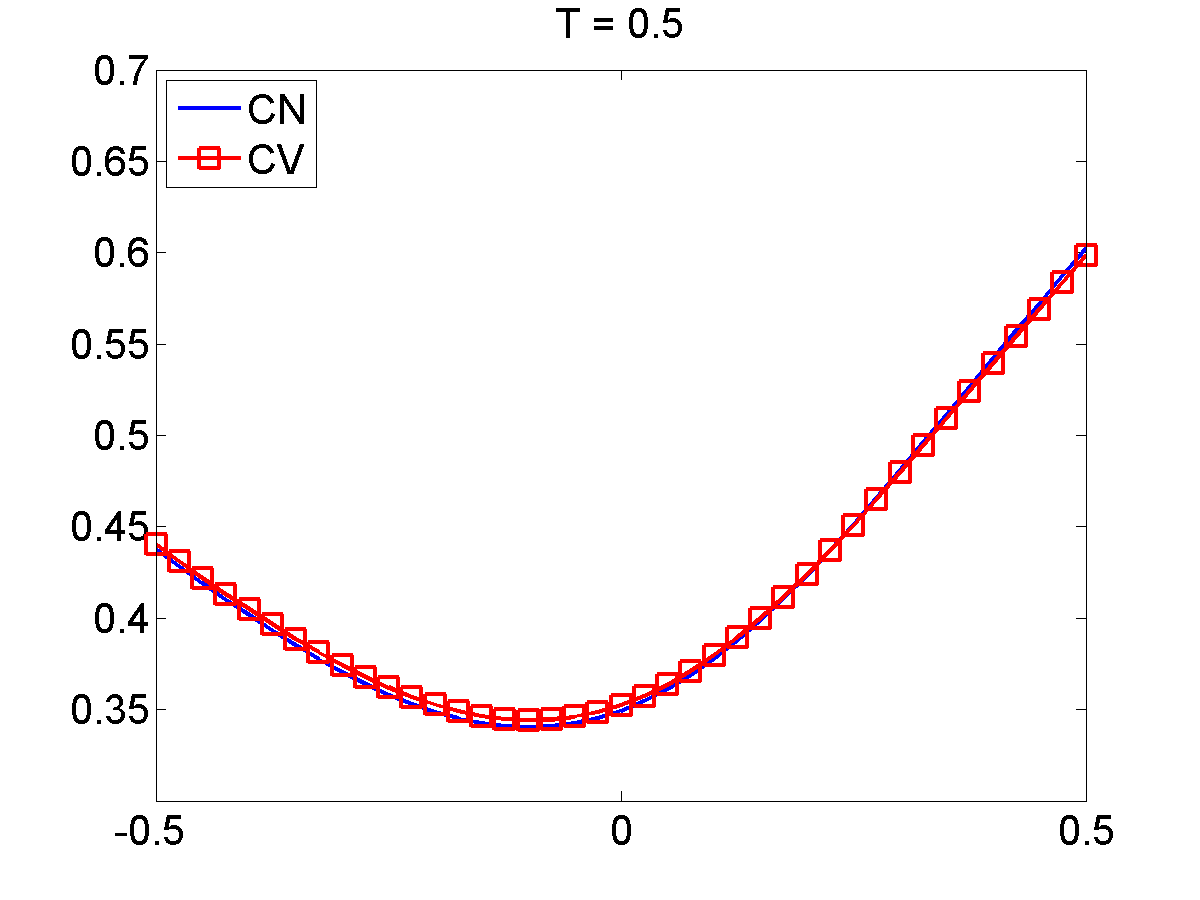}\hfill
      \includegraphics[width=0.249\textwidth]{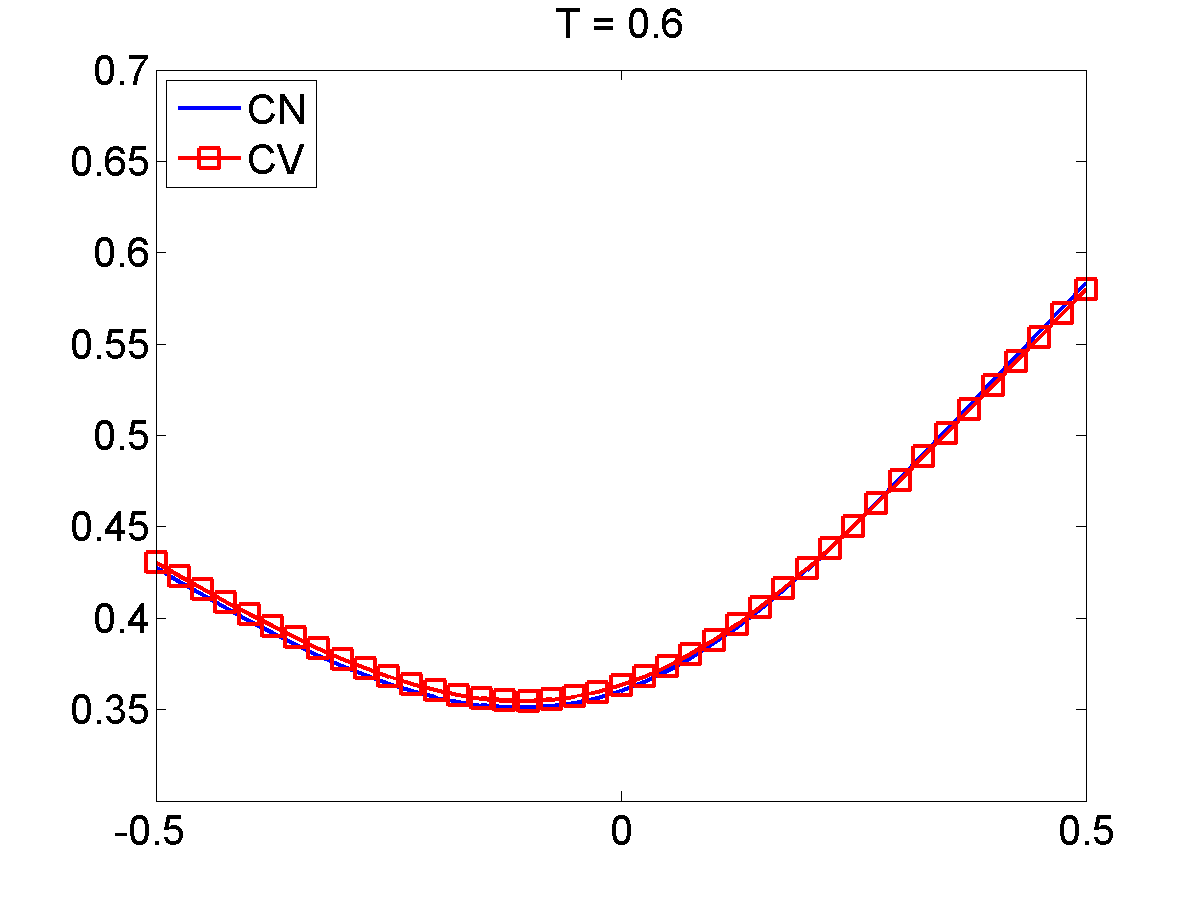}\hfill
      \includegraphics[width=0.249\textwidth]{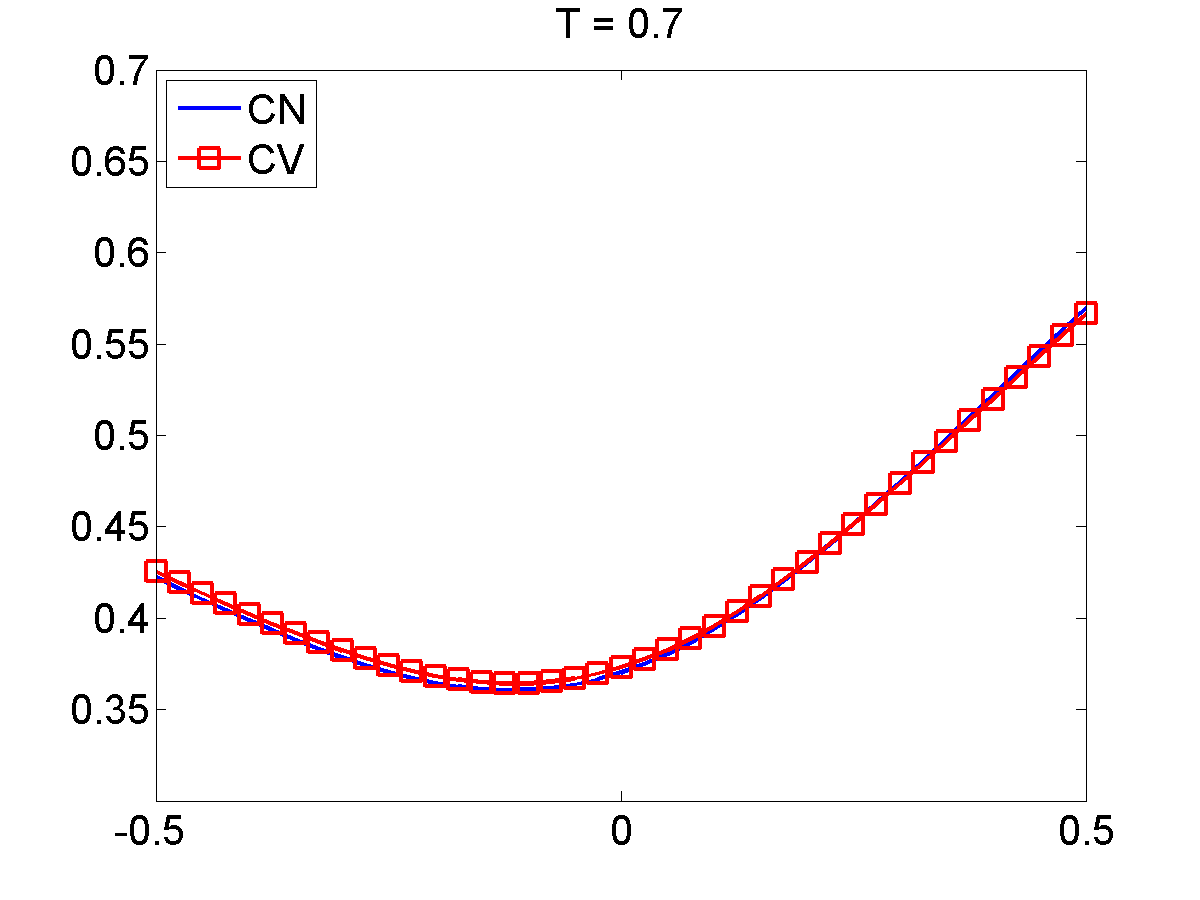}\hfill
      \includegraphics[width=0.249\textwidth]{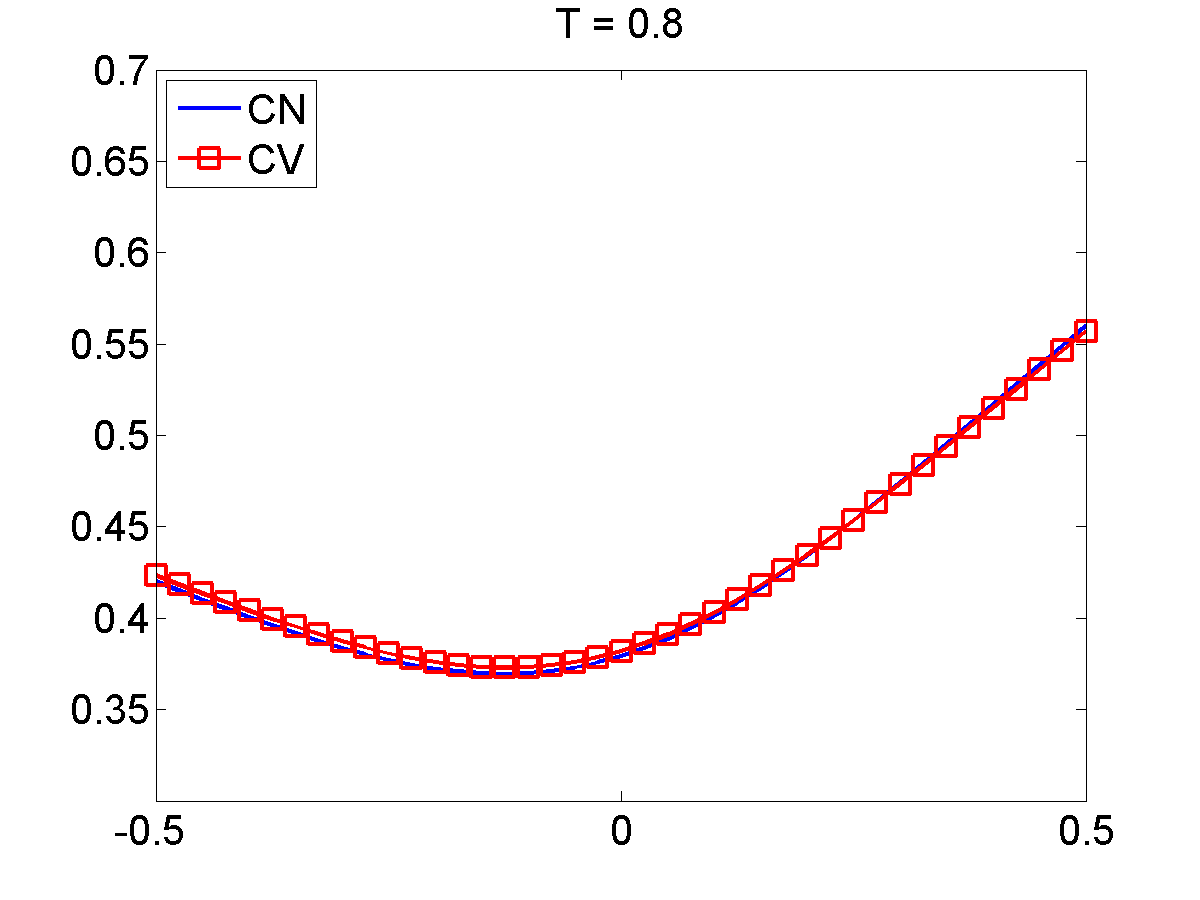}\hfill
      \includegraphics[width=0.249\textwidth]{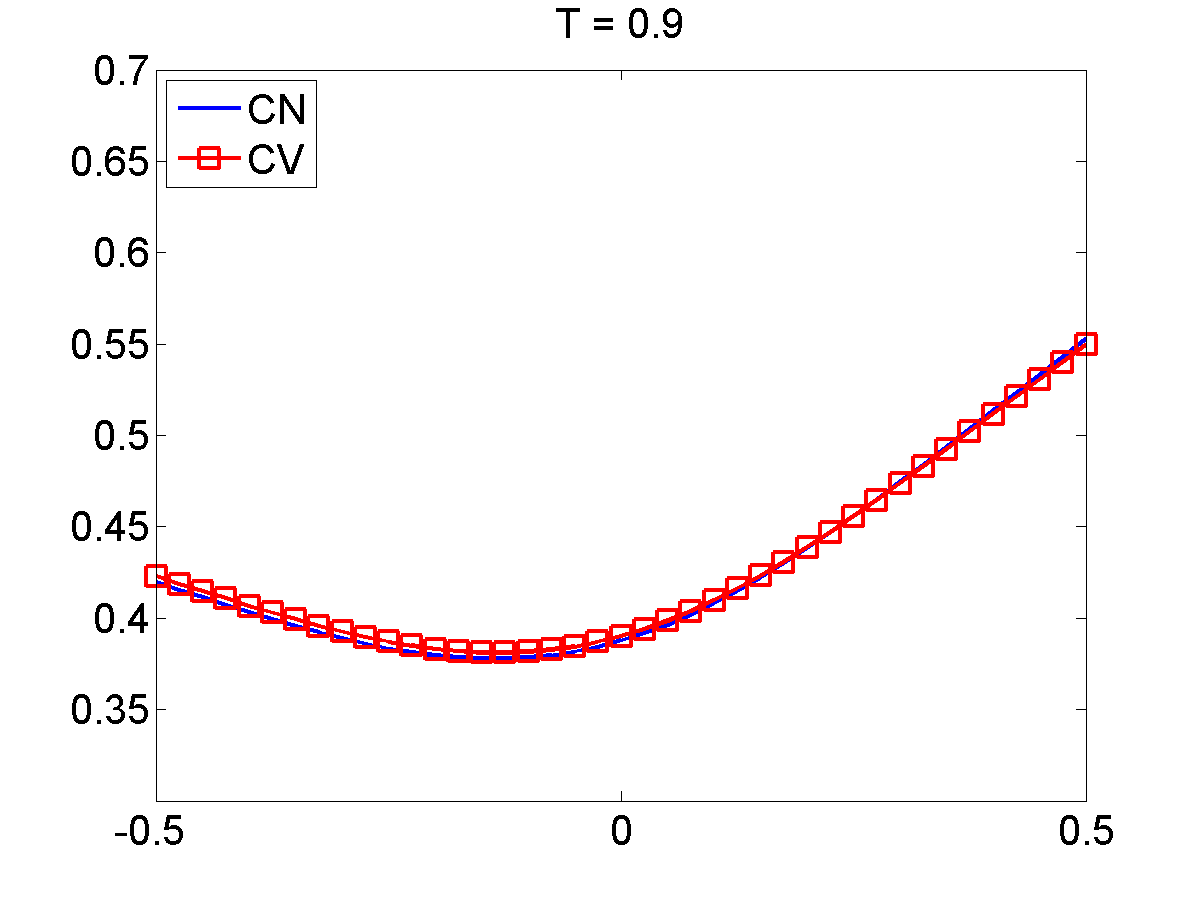}\hfill
      \includegraphics[width=0.249\textwidth]{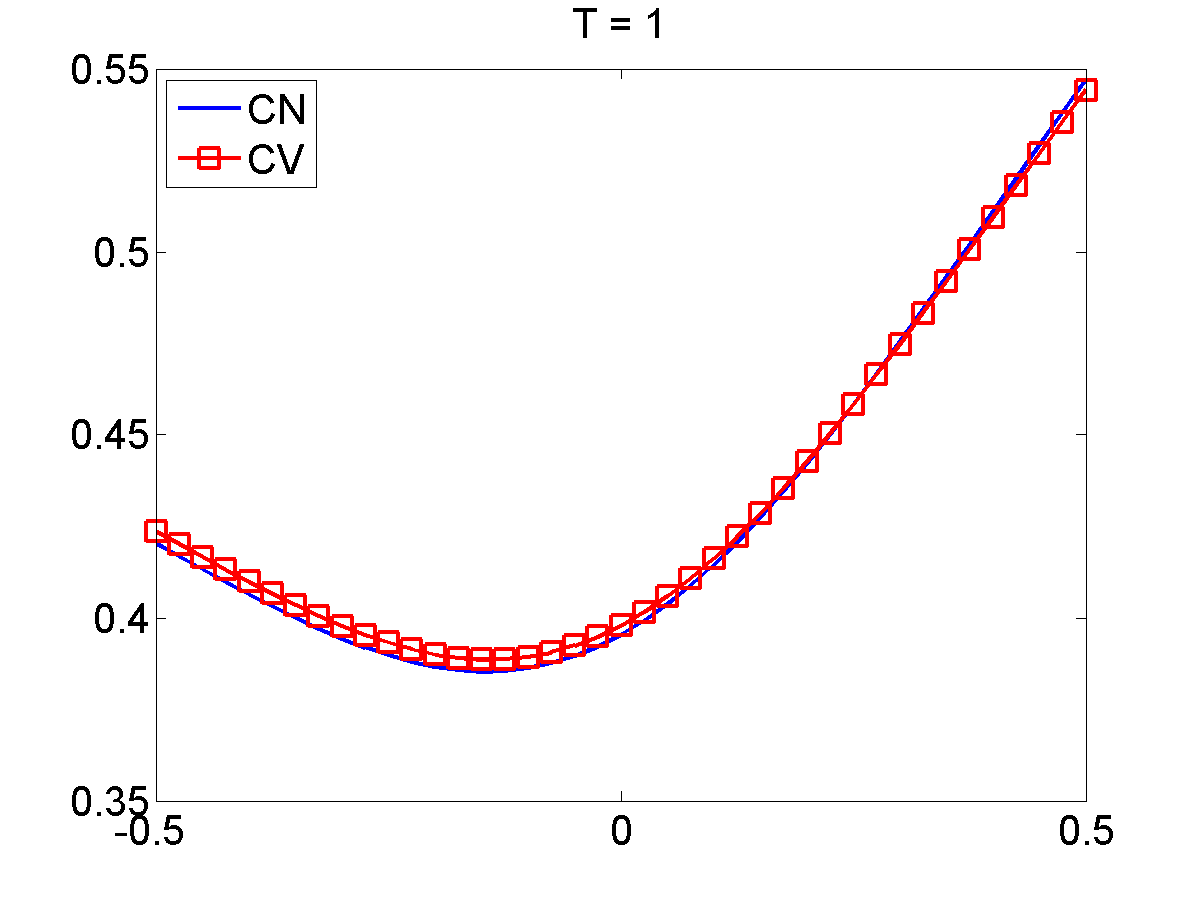}\hfill
  \caption{Implied volatility when local volatility surface is not constant.}
  \label{fig:impvol1b}
\end{figure}

As we can see, CN implied volatilities matched the CV ones  with constant and a non-constant local volatility surface. When comparing with implied volatilities corresponding to the Fourier prices, the adherence of CN volatilities was not exact, but the result was satisfactory, since the relative error and the normalized distance are below $10\%$ of the norm of $\Sigma_{Fourier}$. These results illustrate the accuracy of the present scheme.

\section{Numerical Examples}\label{sec:examples} 

We shall now perform a set of illustrative numerical examples. 

\subsection{Local Volatility Calibration}\label{sec:vol_estimation}

This example is aimed to illustrate that, if $\nu$ is known, it is possible to calibrate the local volatility surface, as in \cite{AndAnd2000}. The European call prices are generated by the difference scheme \eqref{cns}, with local volatility surface \eqref{vol} and jump-size density \eqref{jumpsize}, at the nodes $(\tau_i,y_j) = (i\cdot 0.1,j\cdot0.05)$, with $i=1,..,10$ and $j=-10,-9,...,0,1,...,10$. This is a sparse grid in comparison with the one where the direct problem is solved, see the beginning of Section~\ref{sec:validation}. 

Under a discrete setting, set in the functional \eqref{eq:tikhonov2} the parameters as $\alpha_2 = 0$, $\alpha_1 = 10^{-4}$, and define the penalty functional
\begin{equation}
 f_{a_0}(a) = \|a-a_0\|^2 + \|\partial_{\tau,\Delta} a\|^2 + 100\|\partial_{y,\Delta} a\|^2,
\end{equation}
where $\|\cdot\|$ denotes the $\ell_2$-norm and the operators $\partial_{\tau,\Delta}$ and $\partial_{y,\Delta}$ denote the forward finite difference approximation of the first derivatives w.r.t. $\tau$ and $y$, respectively. The choice of the weights in the penalty functional is  made heuristically and some hints about this choice can be found in \citet{AlbAscZub2016}.

The minimization problem is solved by a gradient-descent method, the step sizes are chosen by a rule inspired by the steepest decent method and the iterations cease whenever the normalized $\ell_2$-residual
$$
\|F(a)-u^\delta\|/\|u^\delta\|,
$$
is less than $0.01$. For more details, see \citet{AlbAscZub2016}. 

The mesh step sizes used to evaluate the local volatility $a$ are the same as those used to generate the data. So, we use the following rule to evaluate the local volatility surface in the whole domain
\[ a(\tau,y) = \begin{cases} a(\tau,-0.5) & {\rm if~} \tau > 0.1, \ y \leq -0.5, (\rm{deep~in~the~money}) \\ 
a(\tau,0.5) & {\rm if~} \tau > 0.1, \ y\geq 0.5, (\rm{deep~out~of~the~money}) \\
a(0.1,y) & {\rm if~} \tau \leq 0.1 , \end{cases} \]
combined with bilinear interpolation.

\begin{figure}[!ht]
  \centering
      \includegraphics[width=0.25\textwidth]{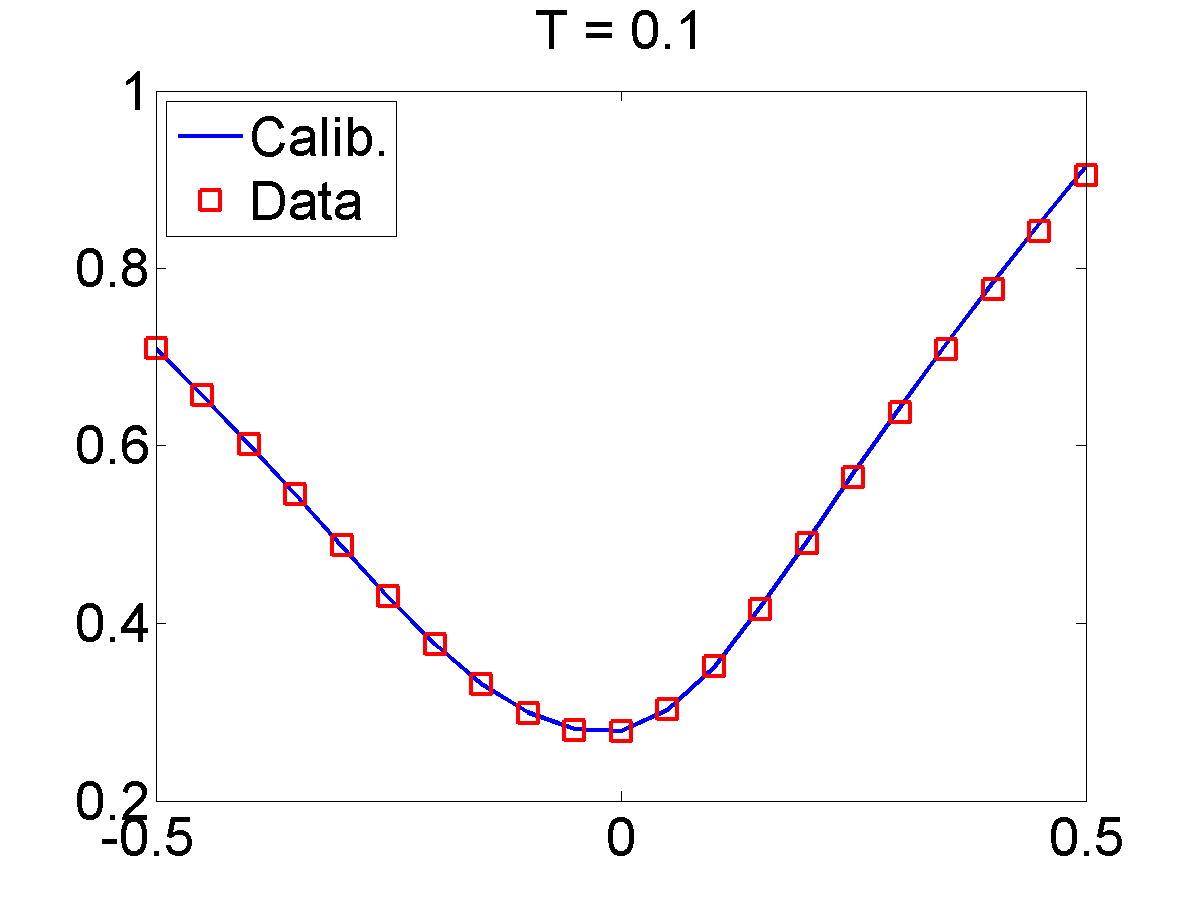}\hfill
      \includegraphics[width=0.25\textwidth]{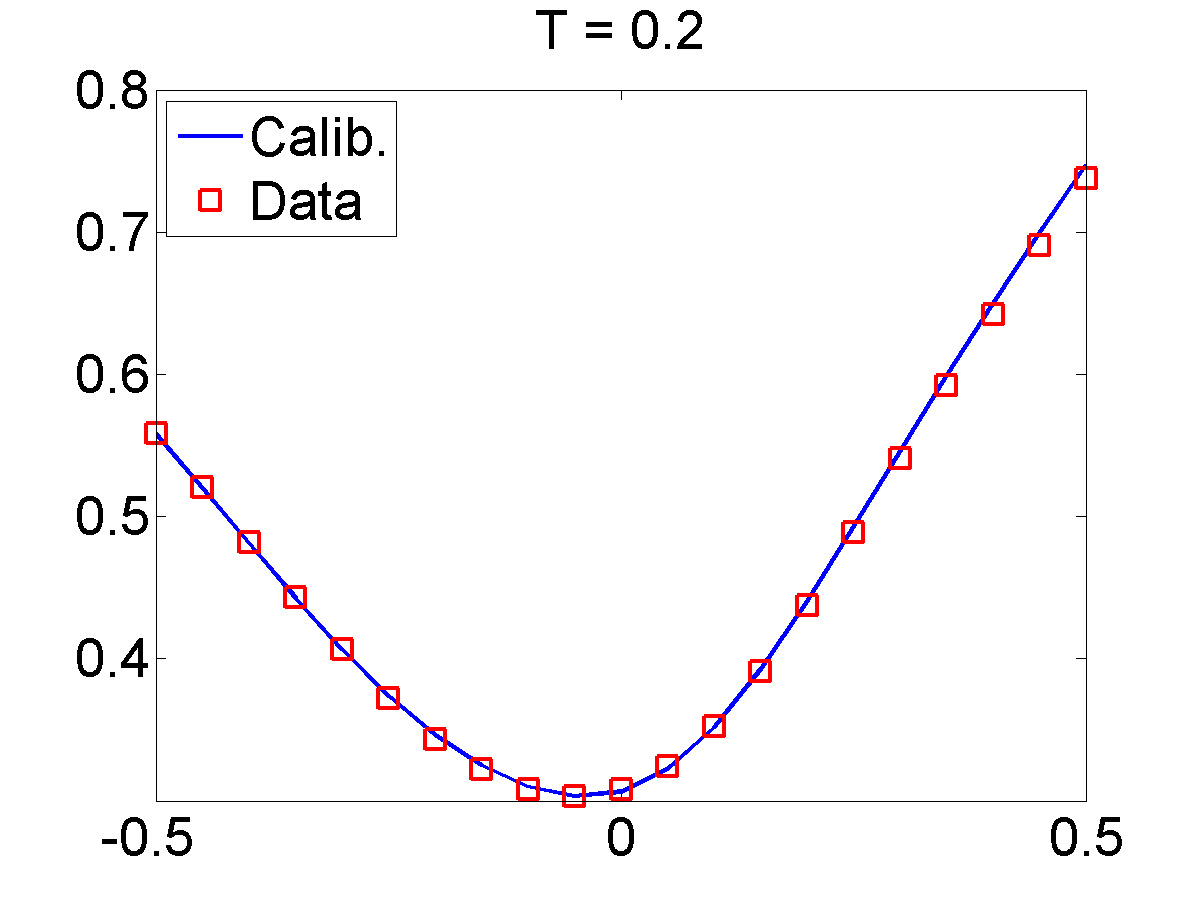}\hfill
      \includegraphics[width=0.25\textwidth]{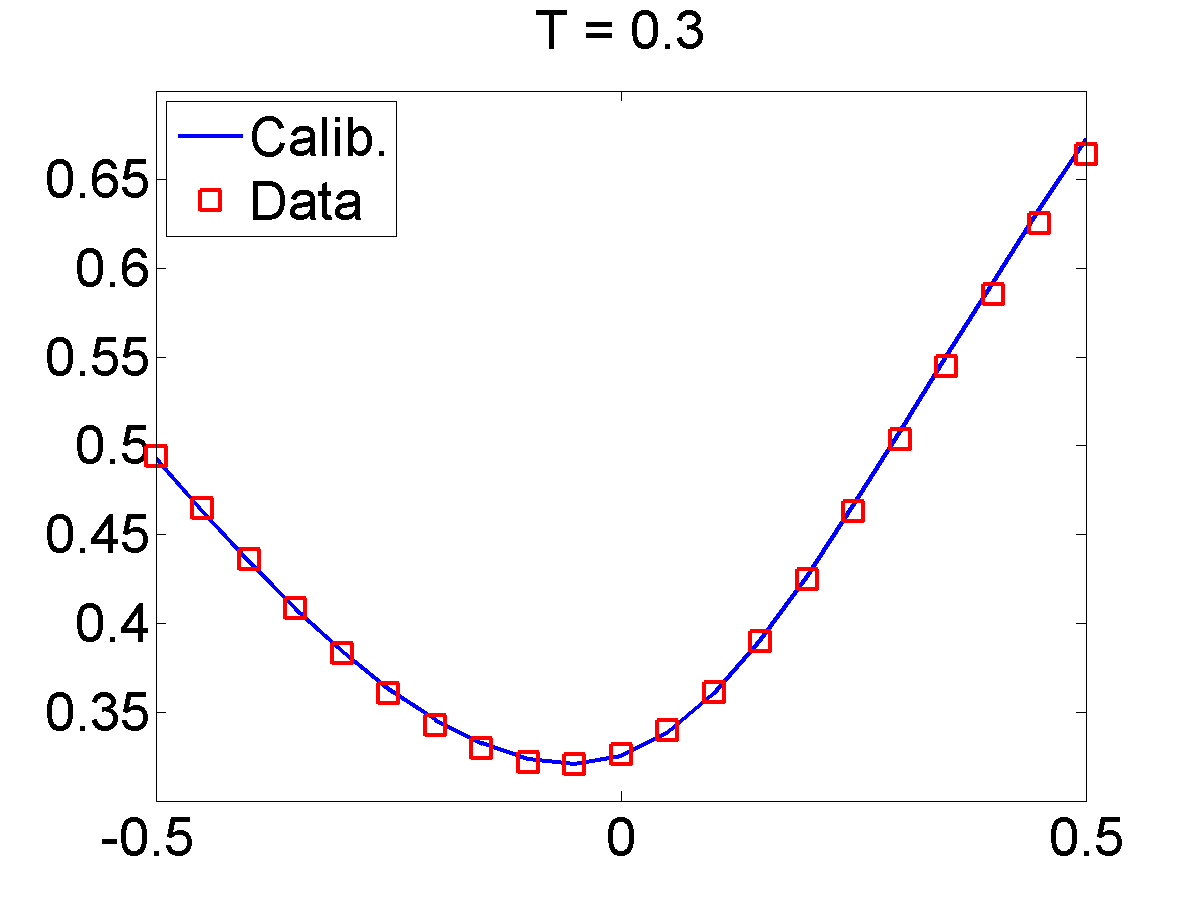}\hfill
      \includegraphics[width=0.25\textwidth]{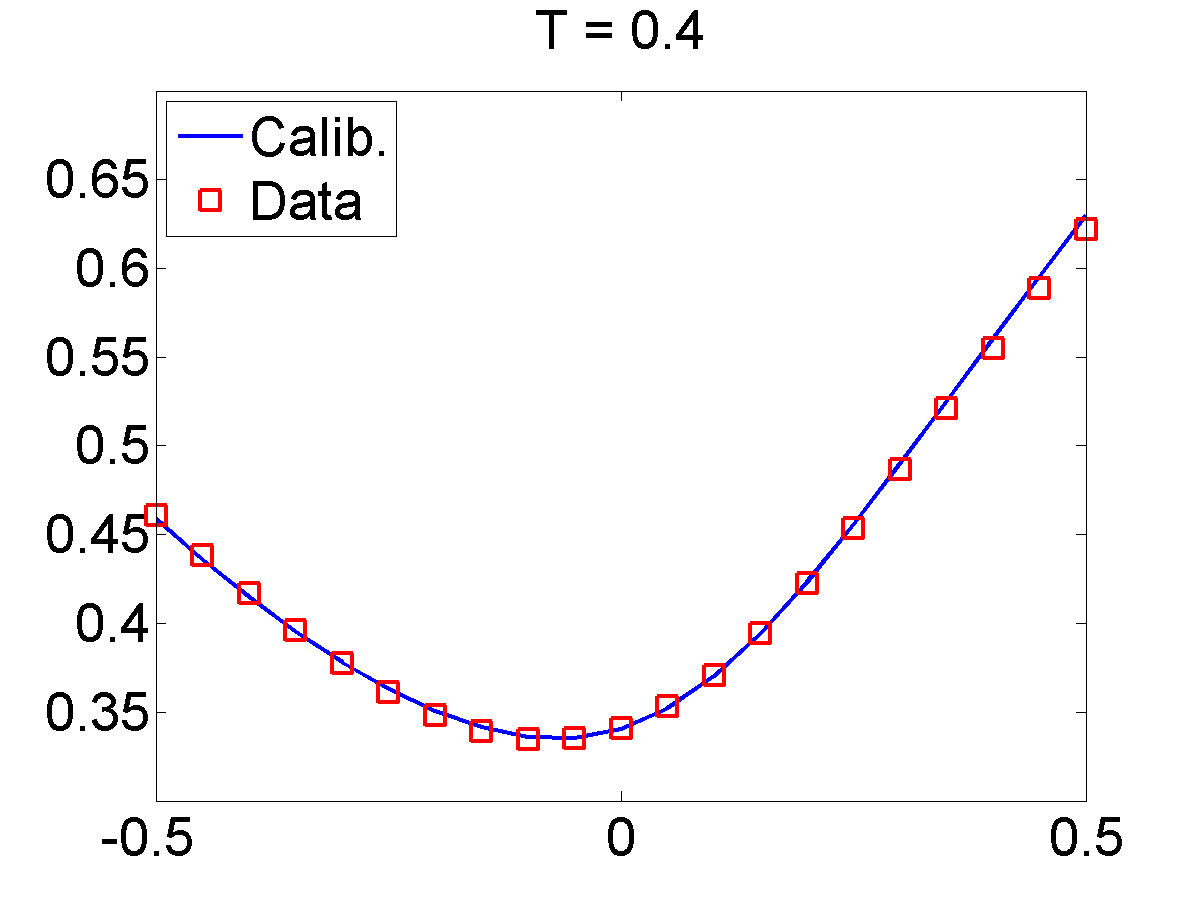}\hfill
      \includegraphics[width=0.25\textwidth]{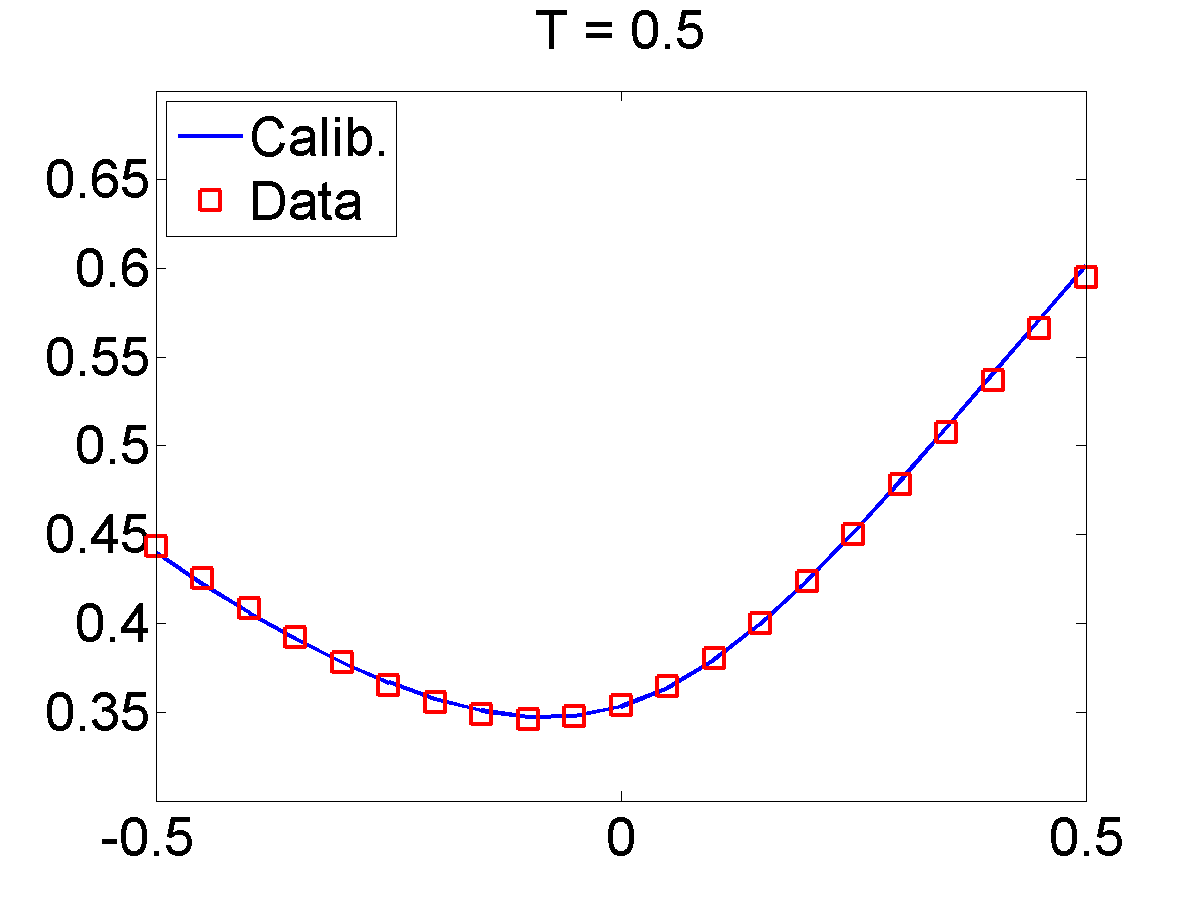}\hfill
      \includegraphics[width=0.25\textwidth]{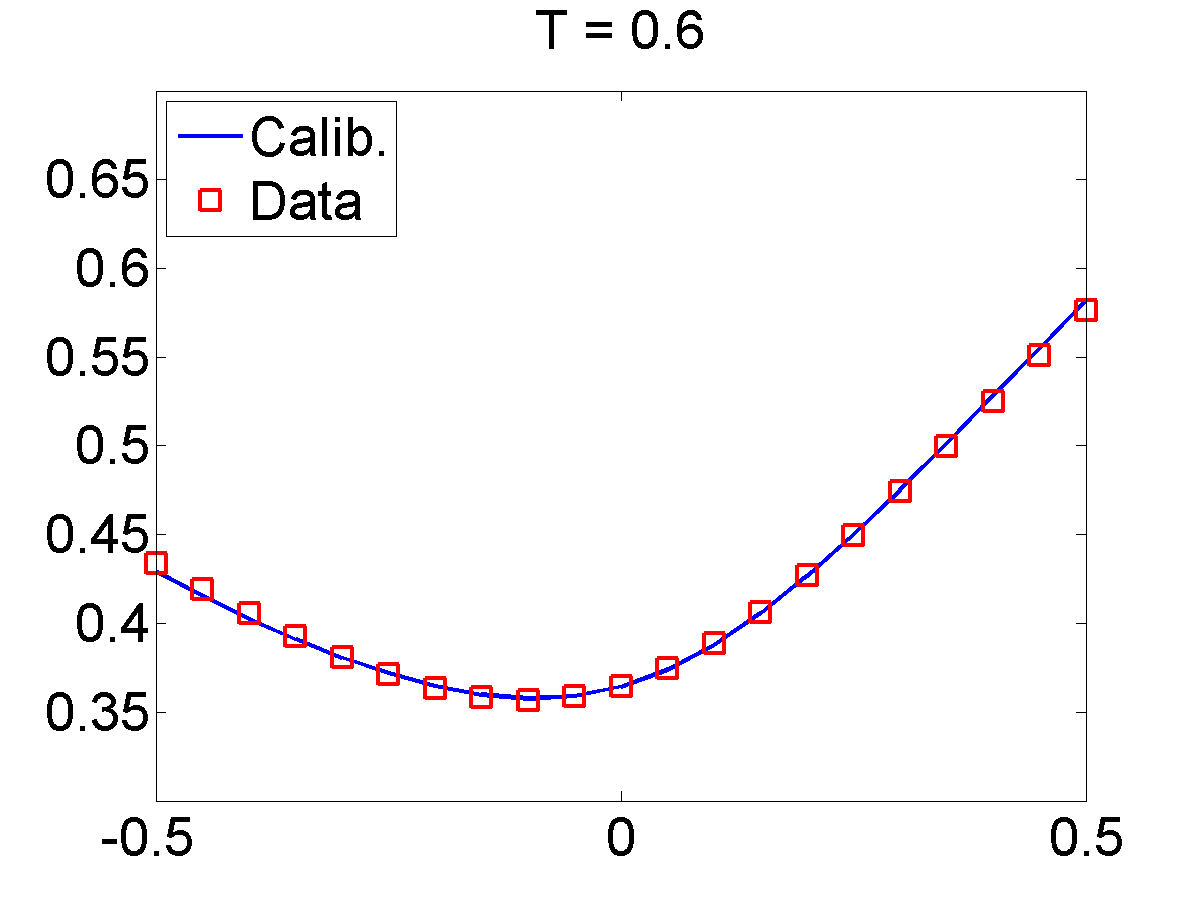}\hfill
      \includegraphics[width=0.25\textwidth]{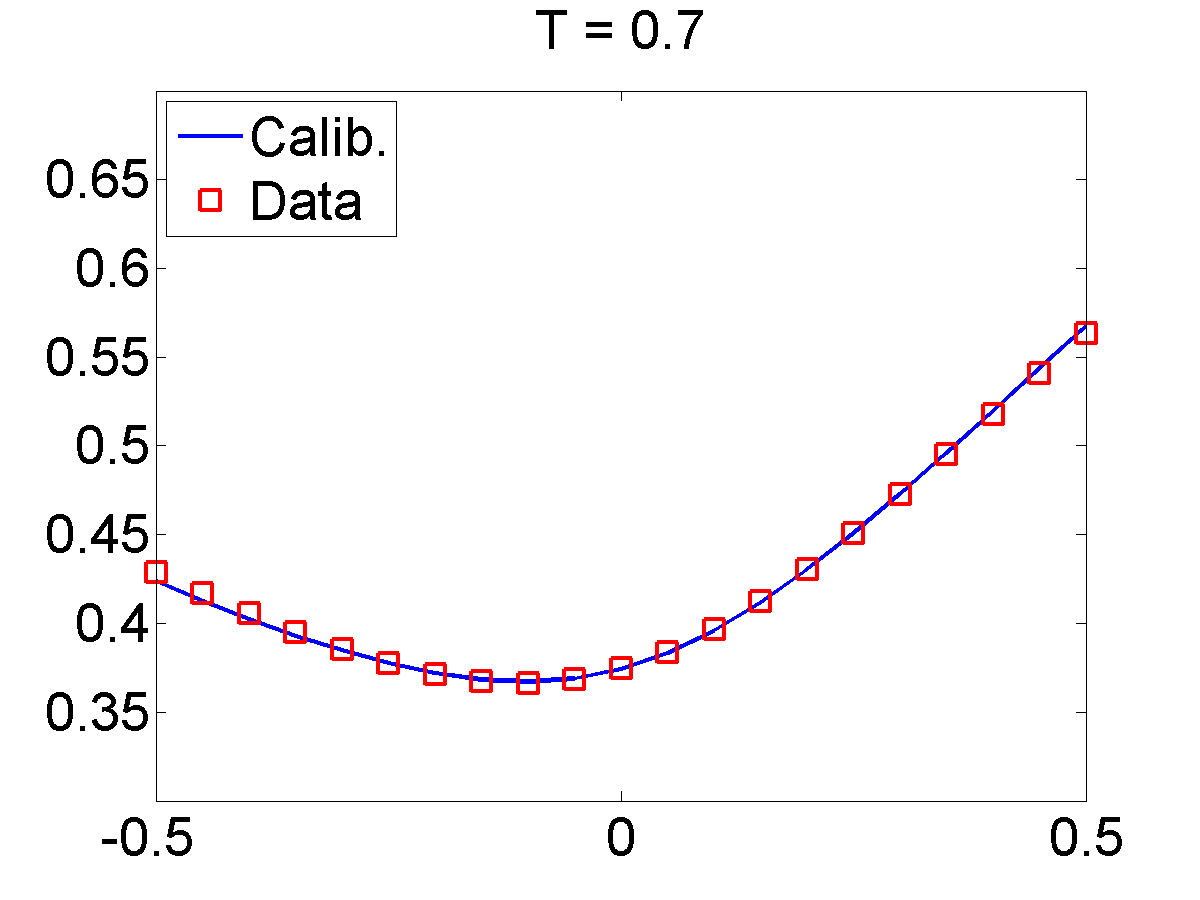}\hfill
      \includegraphics[width=0.25\textwidth]{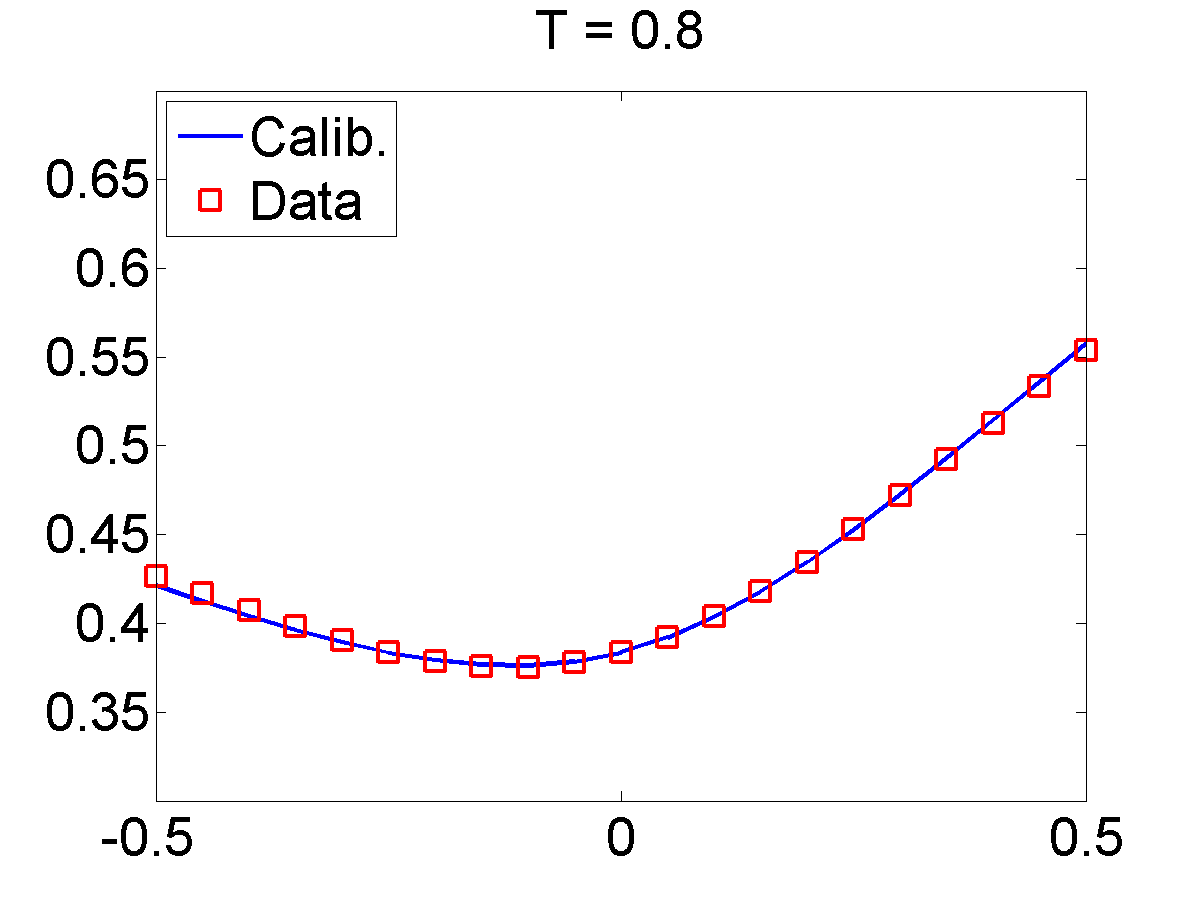}\hfill
      \includegraphics[width=0.25\textwidth]{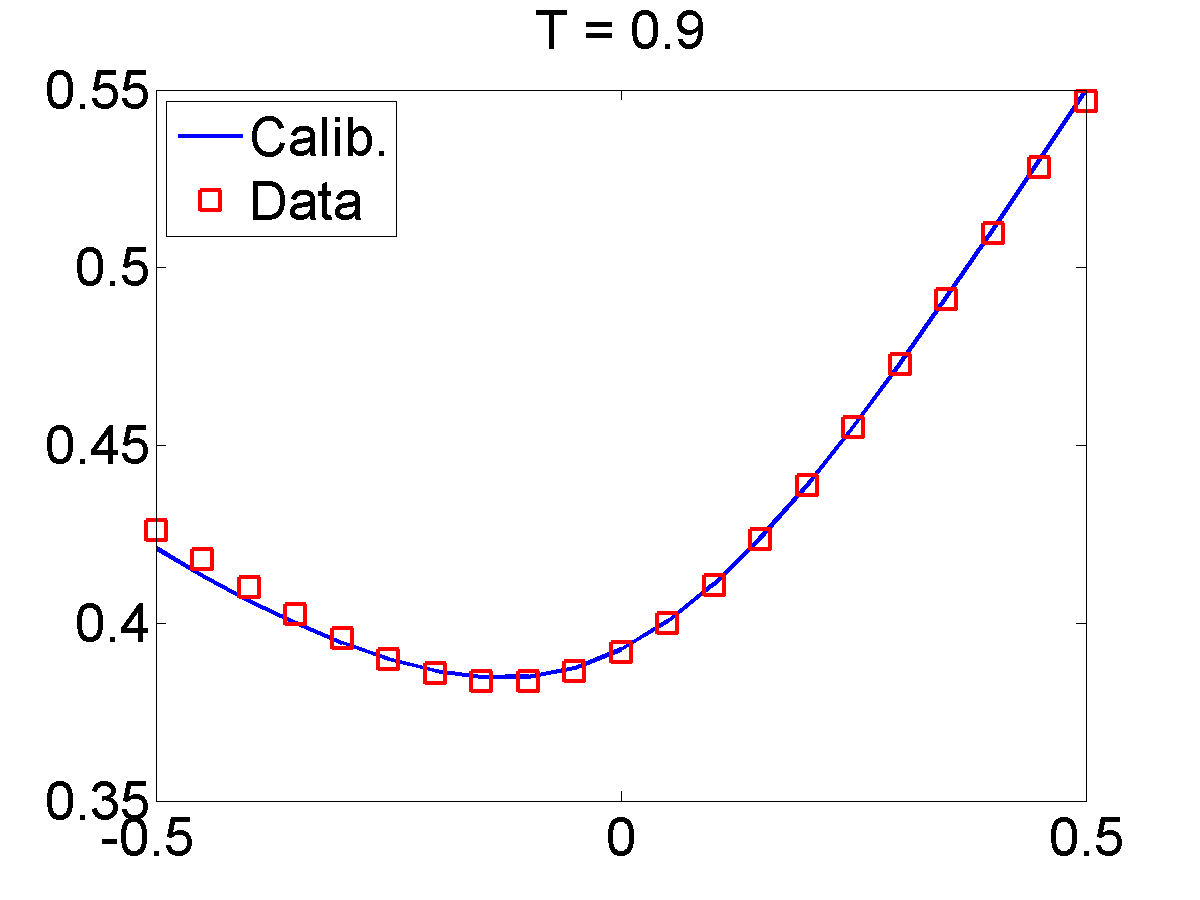}\hfill
      \includegraphics[width=0.25\textwidth]{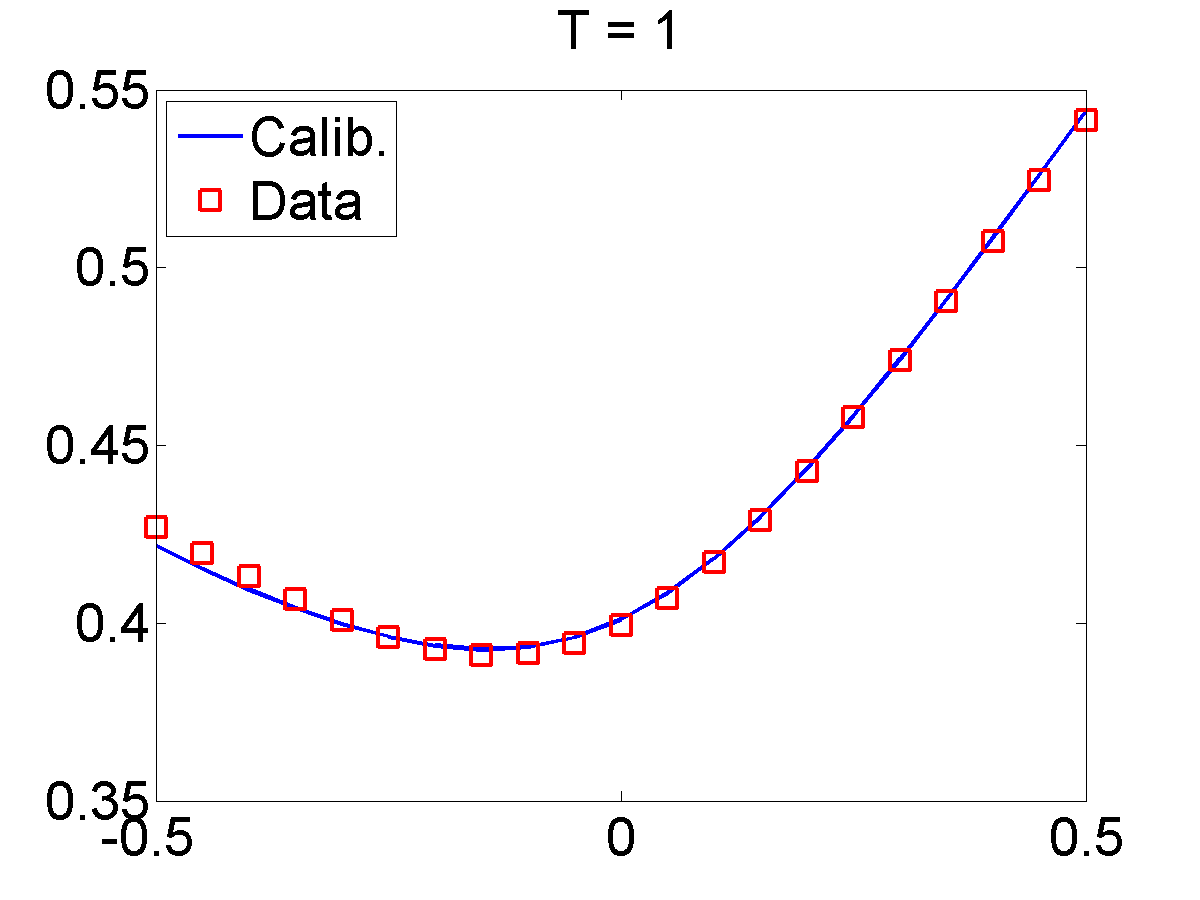}\hfill
  \caption{Implied volatilities of data and the calibrated local volatility.}
  \label{fig:impvol2}
\end{figure}

\begin{figure}[!ht]
  \centering
      \includegraphics[width=0.25\textwidth]{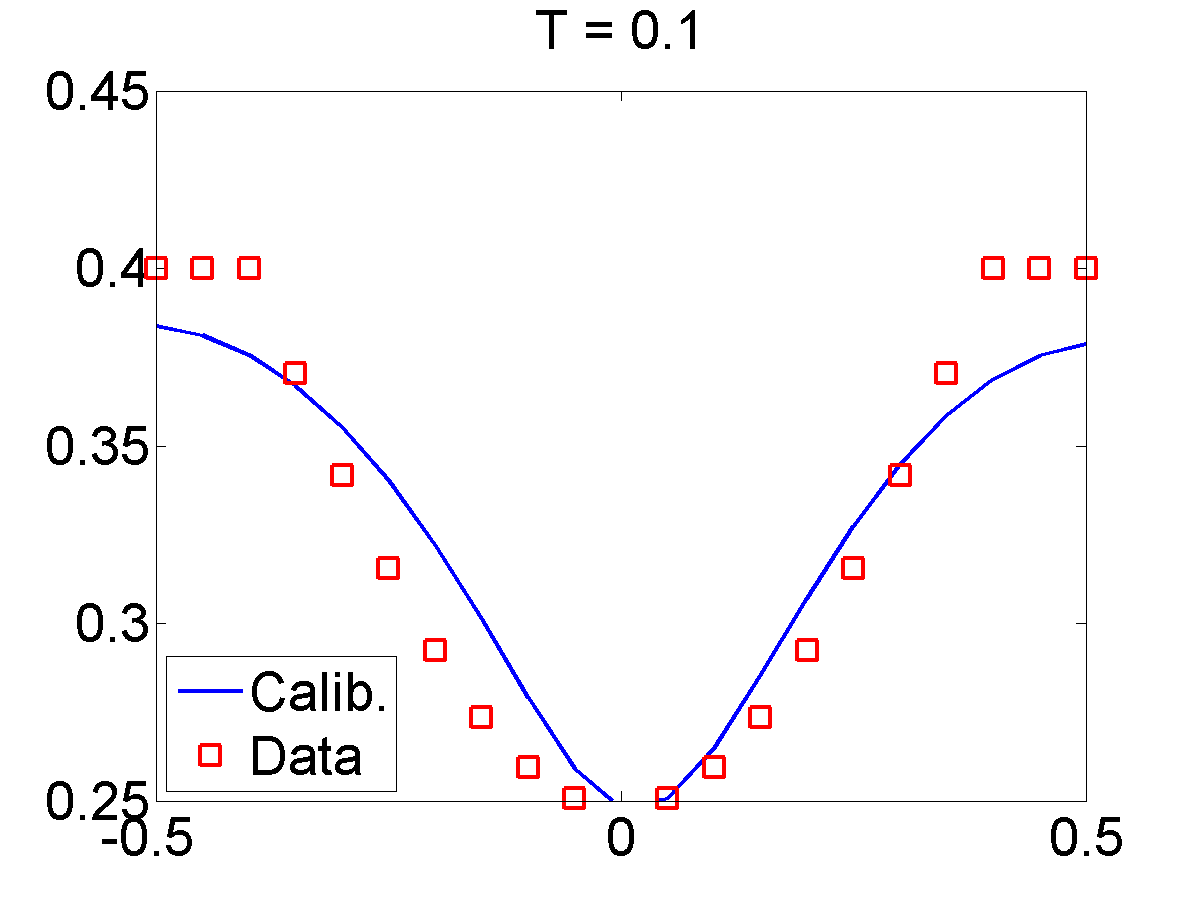}\hfill
      \includegraphics[width=0.25\textwidth]{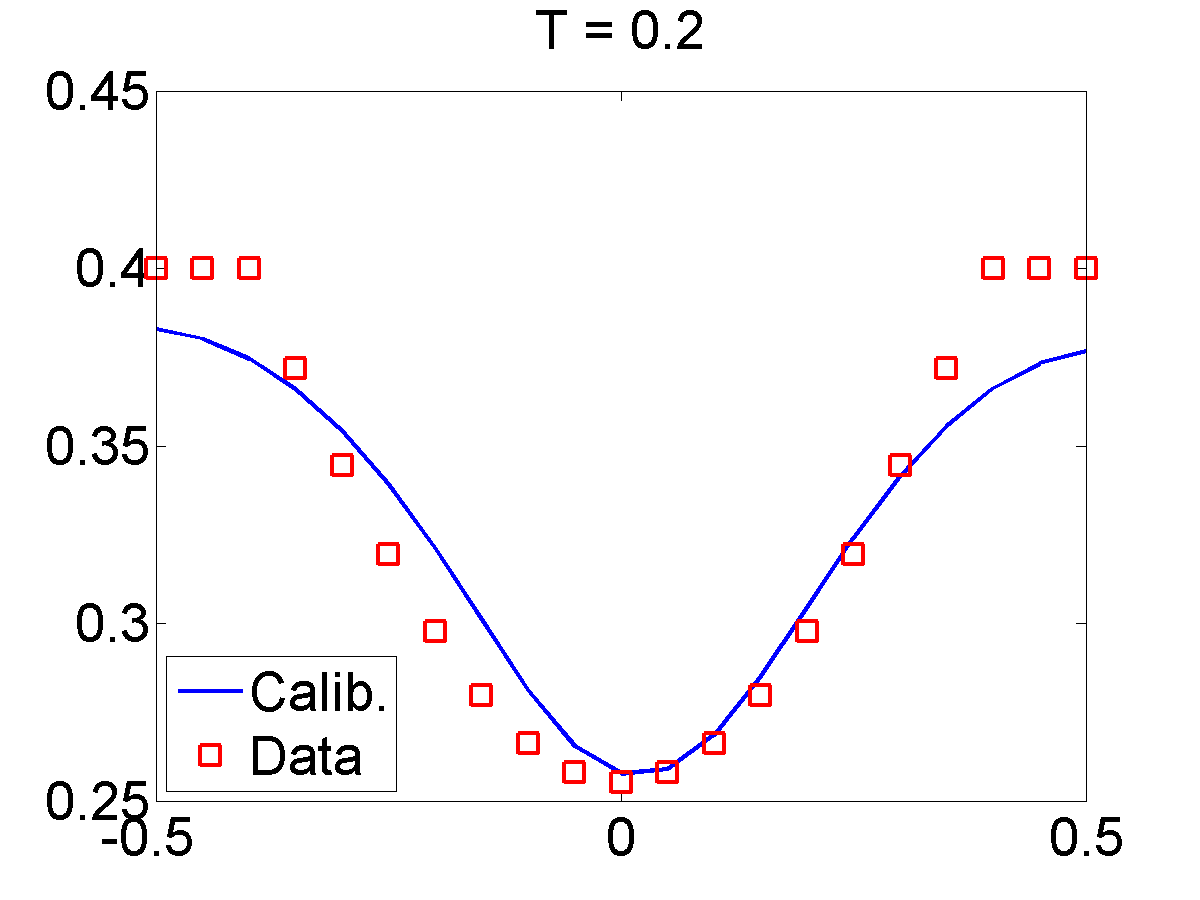}\hfill
      \includegraphics[width=0.25\textwidth]{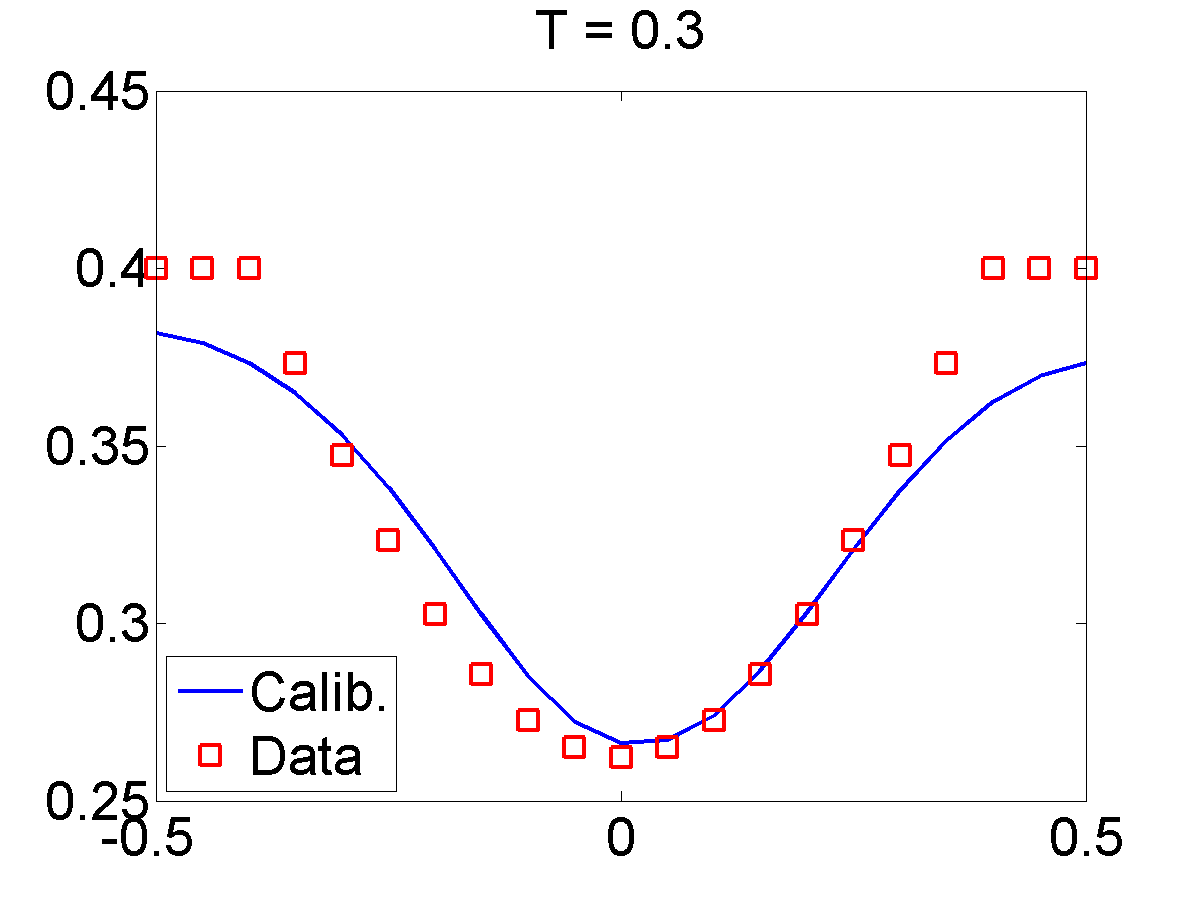}\hfill
      \includegraphics[width=0.25\textwidth]{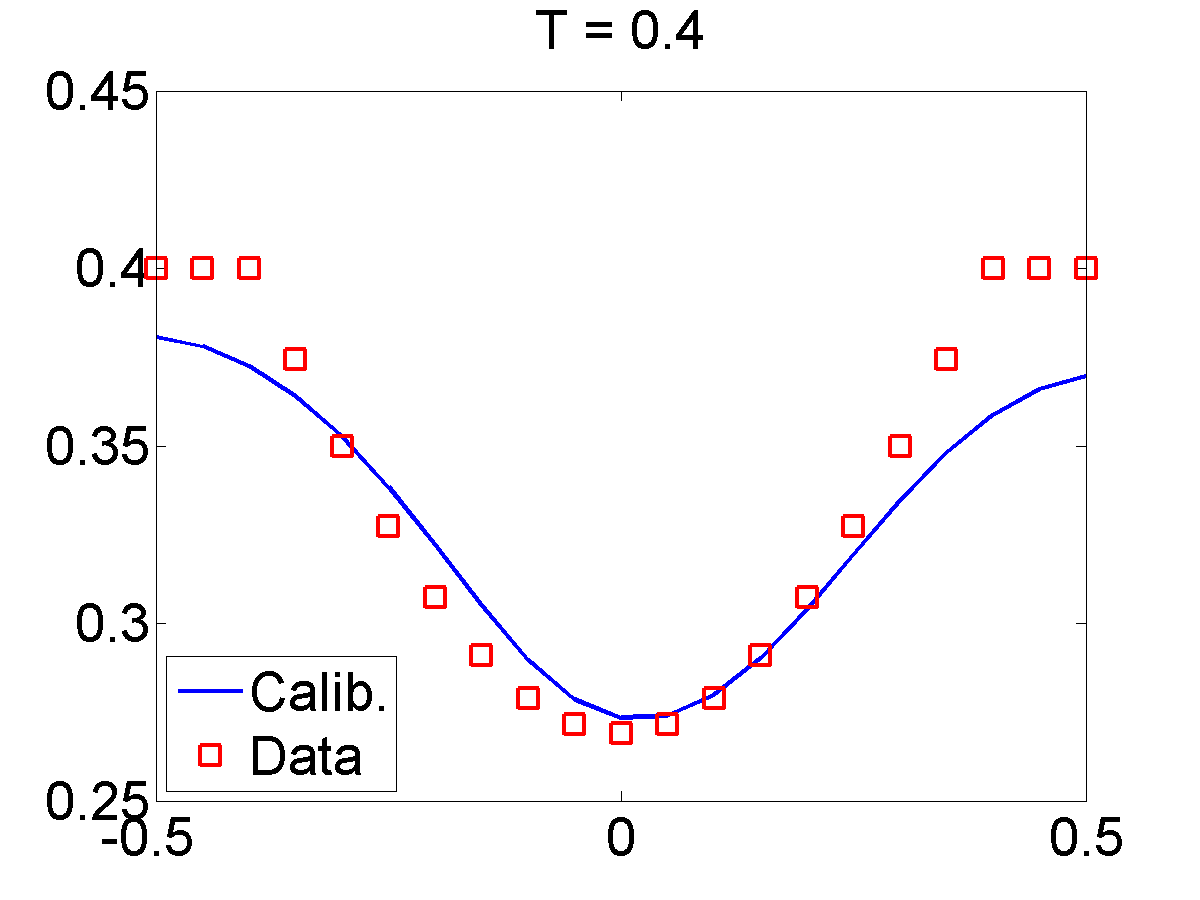}\hfill
      \includegraphics[width=0.25\textwidth]{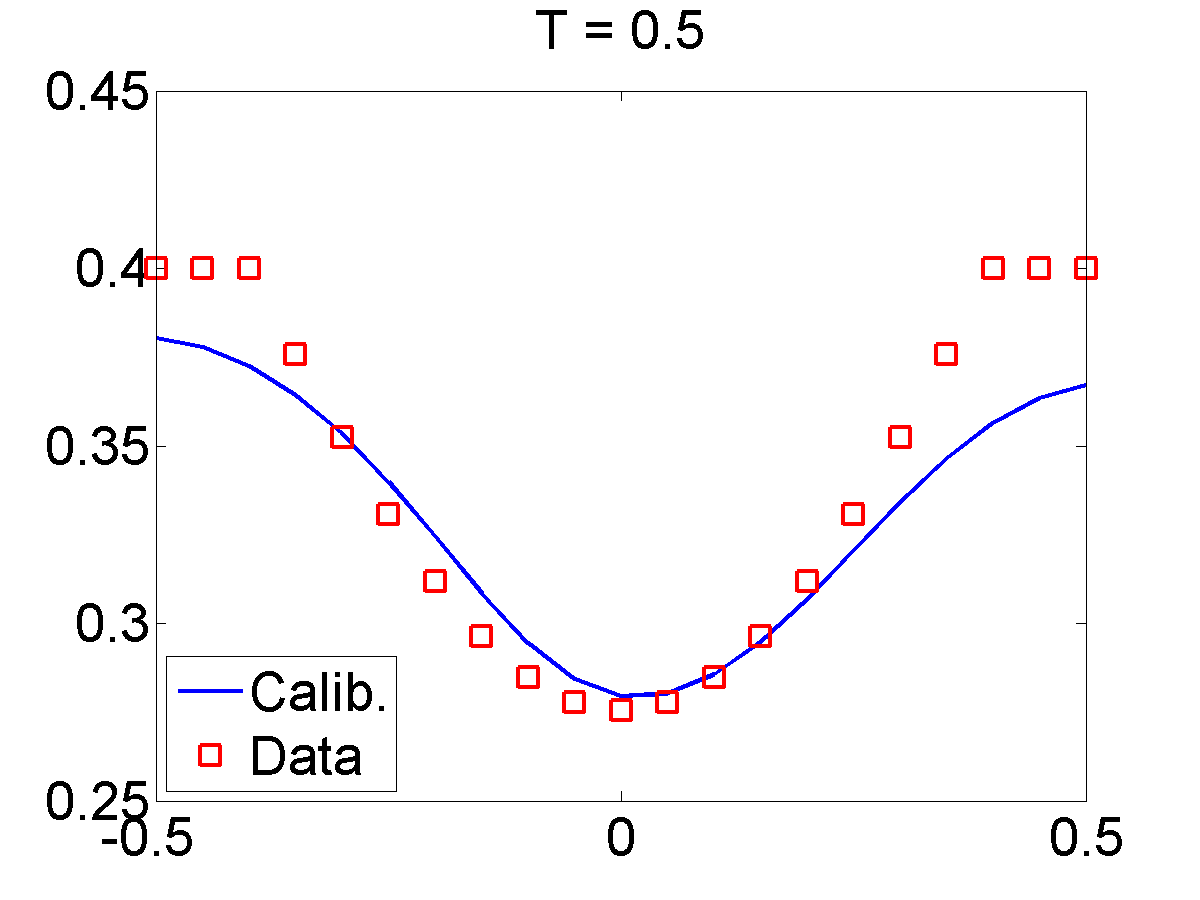}\hfill
      \includegraphics[width=0.25\textwidth]{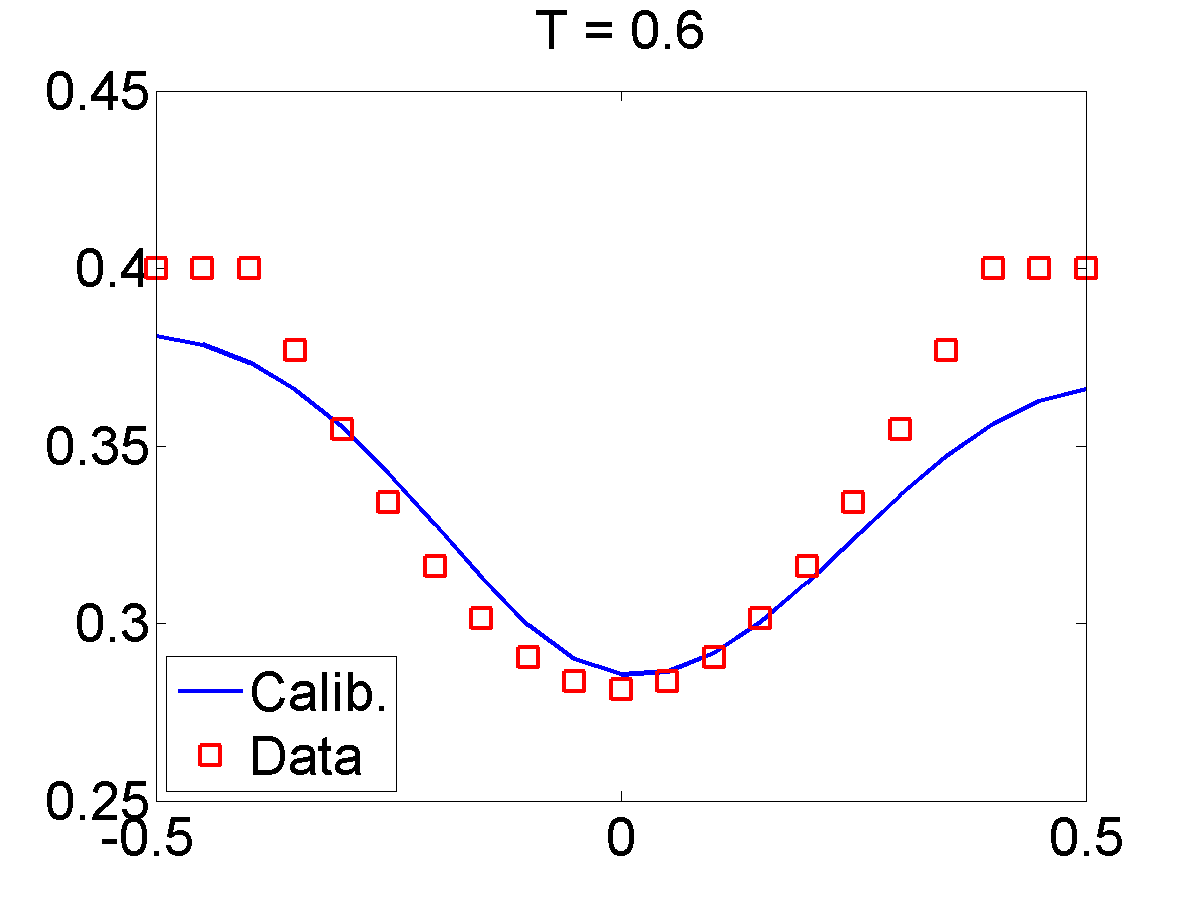}\hfill
      \includegraphics[width=0.25\textwidth]{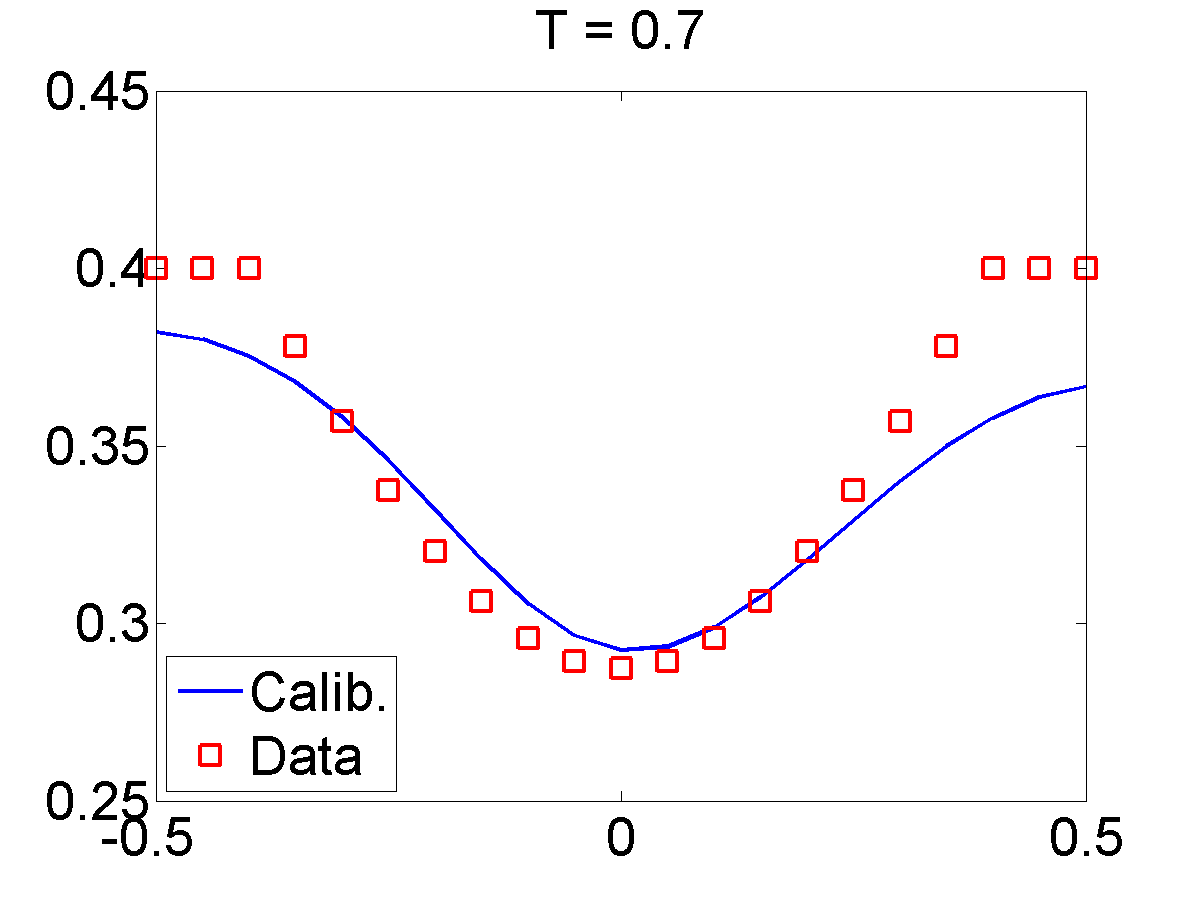}\hfill
      \includegraphics[width=0.25\textwidth]{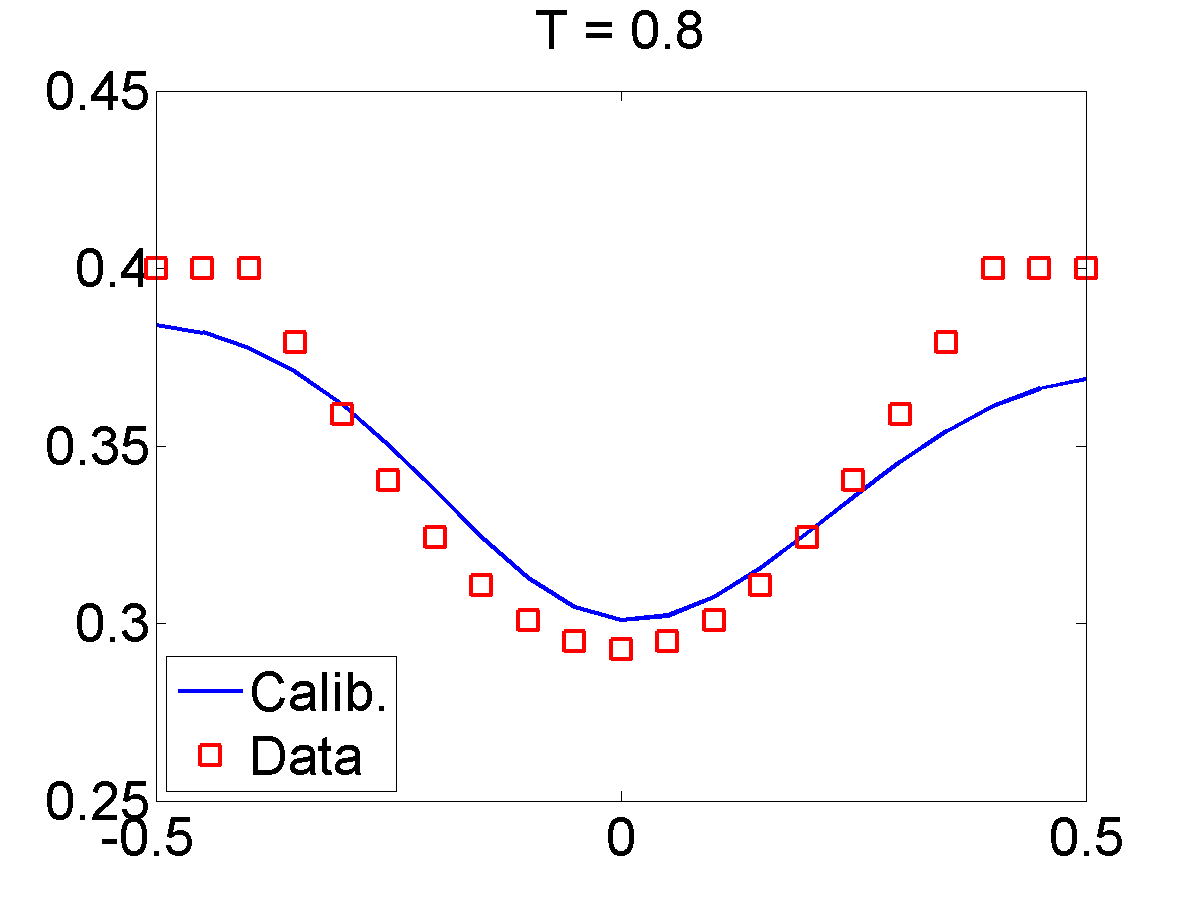}\hfill
      \includegraphics[width=0.25\textwidth]{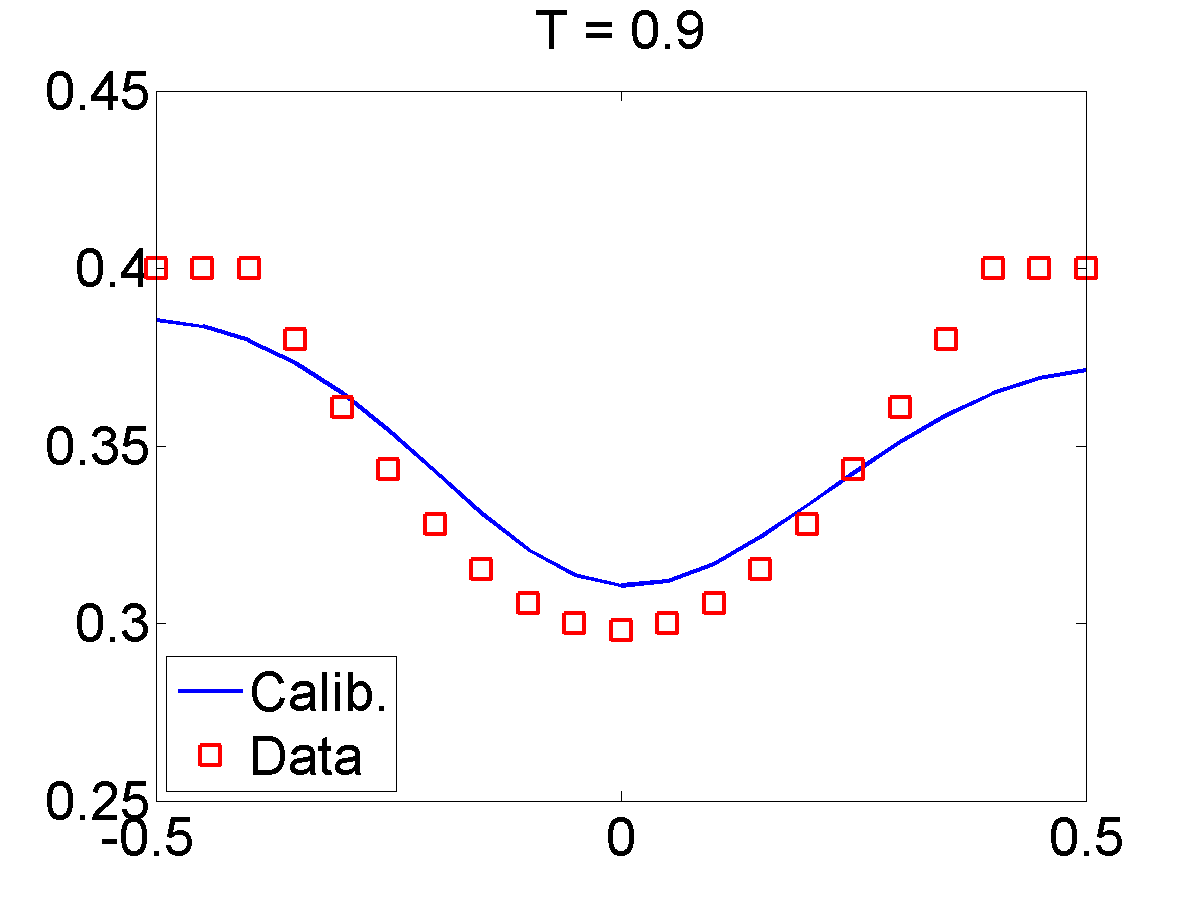}\hfill
      \includegraphics[width=0.25\textwidth]{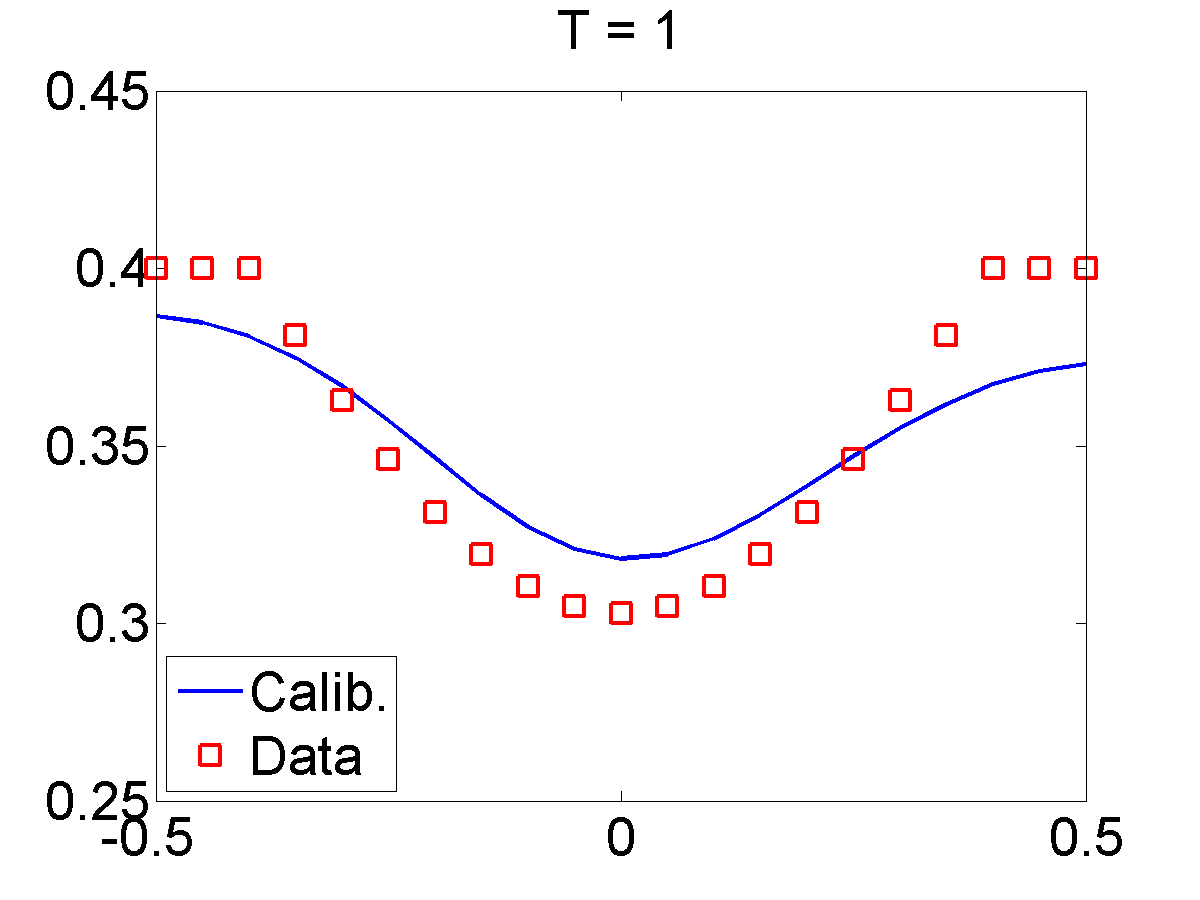}\hfill
  \caption{Original and Calibrated Local volatility surfaces.}
  \label{fig:localvol2}
\end{figure}

The normalized $\ell_2$-distance between the implied volatility of the data and the prices obtained with the calibrated local volatility was  $0.0065$, the mean and standard deviation of the associated absolute relative error at each node were $0.0043$ and $0.0036$, respectively. With respect to the original and the calibrated local volatility surfaces, the normalized $\ell_2$-distance was $0.0701$. The mean and the standard deviation of the corresponding absolute relative error at each node were $0.0583$ and $0.0399$, respectively. The accuracy of our methodology can be also observed in Figures~\ref{fig:impvol2}-\ref{fig:localvol2} where the implied volatilities of the model matched the data ones, and the reconstructed local volatility was quite similar to the original one. In both figures, ``Calib.'' stands for the calibrated local volatility and ``Data'' stands for the original one. Note that, the calibration was not perfect, since the data is collected in a sparse grid. 

\subsection{Calibration of jump-size distribution}\label{sec:nu_calibbration}

Assuming that the local volatility surface is given, the double-exponential tail and the jump-size distribution are calibrated form observed prices. For this example, the same synthetic data and parameters presented in Section~\ref{sec:vol_estimation} are used. 

Define
$$
\nu_j = \int_{y_j-\frac{\Delta y}{2}}^{y_j+\frac{\Delta y}{2}}\nu(dy).
$$
Firstly, we calibrate $\varphi$, and then, $\nu$ is reconstructed from $\varphi$, by minimizing the functional:
\begin{equation}
 \sum^{M}_{j=-M}(\varphi_j - \varphi(\nu)_j)^2 + \alpha\sum^{M}_{j=-M}\left[\nu_j\log(\nu_j/\nu_{j,0}) - (\nu_{j,0} - \nu_j)\right],
\end{equation}
where $\varphi(\nu)_j$ is given by
$$
\varphi(\nu)_j = \left\{
\begin{array}{ll}
\displaystyle\sum_{l=-M}^{j}(\text{e}^{y_j}-\text{e}^{y_l})\nu_l, & y_j < 0\\ 
\displaystyle\sum_{l=j}^{M}(\text{e}^{y_l}-\text{e}^{y_j})\nu_l, & y_j > 0.
\end{array}
\right.
$$
The regularization parameter is set as $\alpha = 1\times 10^{-5}$.

\begin{figure}[!ht]
  \centering
      \includegraphics[width=0.4\textwidth]{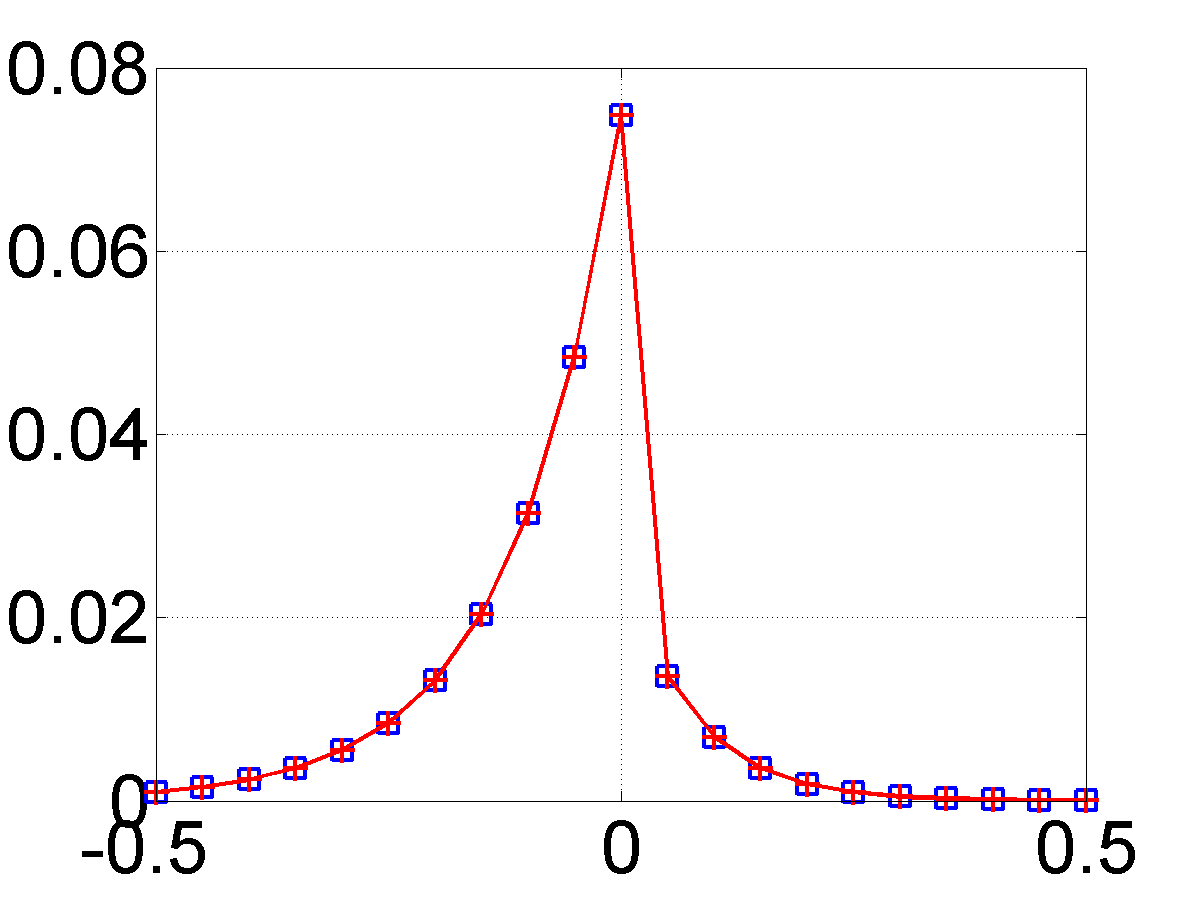}\hfill
      \includegraphics[width=0.4\textwidth]{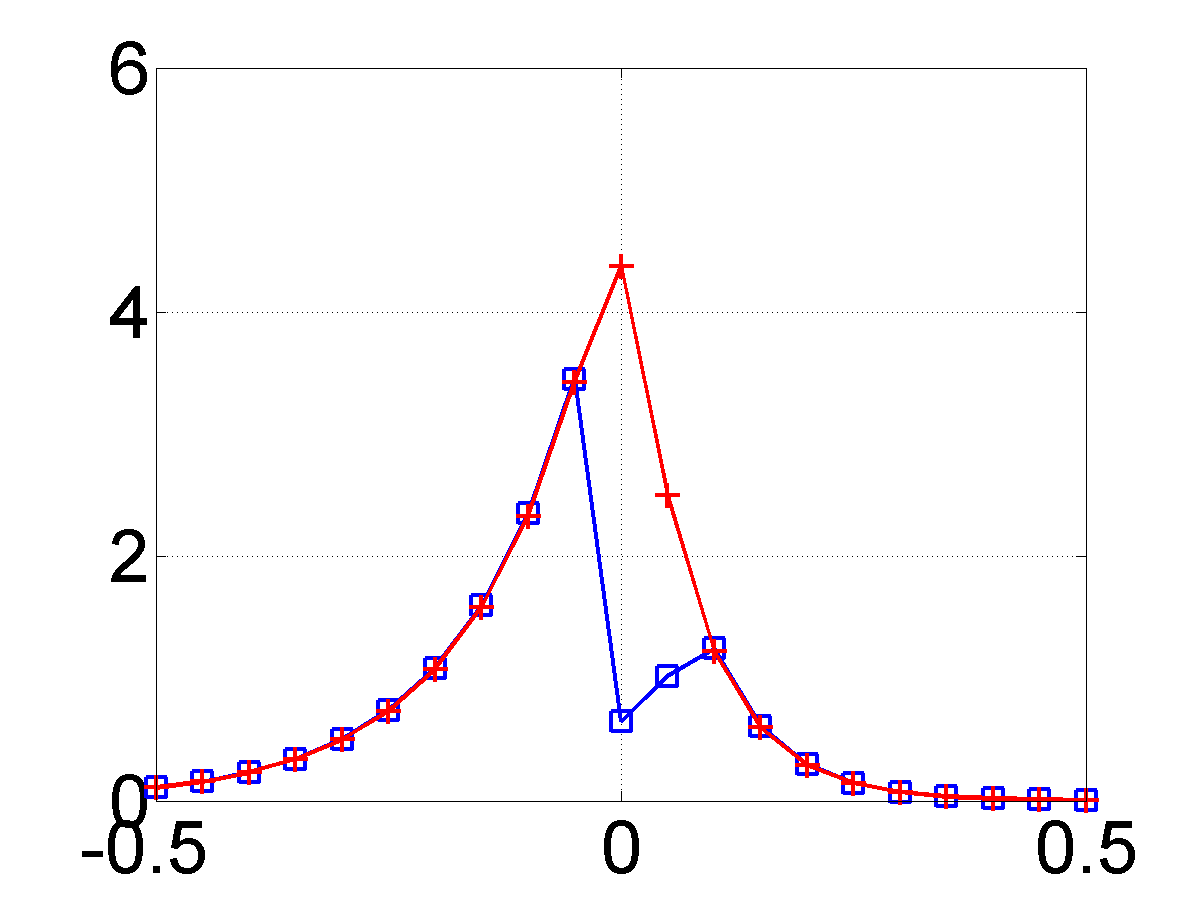}
  \caption{Left: true (line with crosses) and reconstructed (line with squares) double-exponential tail functions. Right: true (line with crosses) and reconstructed (line with squares) jump-size distributions.}
  \label{fig:psinu}
\end{figure}

As we can see in Figure~\ref{fig:psinu}, the reconstructed double-exponential tail $\varphi$ matched the true one. The calibrated jump-size distribution $\nu$ is also adherent to the original one except around zero, probably due to the discontinuity of $\varphi$ at zero. The normalized distance between the true and reconstructed double-exponential tail functions was $2.14\times 10^{-4}$, and the mean and standard deviation of the associated absolute relative error at each node were $0.002$ and $0.0059$, respectively. The normalized $\ell_2$-distance between the true and the calibrated jump-size distributions was $0.59$, and the mean and standard deviation of the associated absolute relative error at each node were $0.0946$ and $0.2369$, respectively. If we exclude the points $y = 0,~0.05$, the values of the normalized distance, the mean and standard deviation become $2.73\times 10^{-5}$, $0.0022$ and $0.0061$, respectively. So, excluding these two points, the calibration was perfect. The normalized residual was $1.34 \times 10^{-10}$. This is probably due to the discontinuity of $\varphi$ at zero, which introduces some noise into the reconstruction.

So, if the local volatility surface is given, the calibration of $\varphi$ and $\nu$ are quite satisfactory even with scarce data. These results are comparable to the ones obtained in \cite{ConTan2004,ConTan2006}.

\subsection{Testing the Splitting Algorithm}\label{sec:splitting_ex}

The goal of the present example is to illustrate that the splitting algorithm is able to calibrate simultaneously the local volatility function and the double exponential tail. 

The call prices are given at the nodes $(\tau_i,y_j) = (i\cdot 0.1,j\cdot 0.05)$, with $i= 1,...,10$ and $j= -90,-89,...,0,...,10$. This represents $2.5\%$ of the mesh where the direct problem is solved. The algorithm was initialized with the minimization of the Tikhonov functional w.r.t. the volatility parameter. The initial states of the local volatility surface and double exponential tail, as well as $a_0$ and $\varphi_0$ in the penalty functional, were set as $a_0(\tau,x) = 0.08$ and
$$
\nu_0(dx) = \left(0.5\exp(-0.5x^2-0.5x)\mathcal{X}_{[0,5]} + 0.5\exp(-0.5x^2 -0.5|x|)\mathcal{X}_{[-5,0)}\right)dx,
$$ 
respectively. Here, $\mathcal{X}_{[0,5]}$ is the characteristic or indicator function of the set $[0,5]$.

The minimization w.r.t. the local volatility surface was performed as in Section~\ref{sec:vol_estimation}. However, to proceed with the minimization w.r.t. the double exponential tail, firstly, we made the change of variable $\Gamma = \log(\phi)$ and considered the decomposition $\Gamma(y) = \Gamma(y)\mathcal{X}_{(-5,0)} + \Gamma(y)\mathcal{X}_{(0,5)}$. Since the $y$-domain now is bounded,  $\Gamma_-(y) = \Gamma(y)\mathcal{X}_{(-5,0)}$ and $\Gamma_+(y) = \Gamma(y)\mathcal{X}_{(0,+5)}$ can be expressed in terms of Fourier series. So, we truncate its series at the third term and minimize the Tikhonov functional w.r.t. the Fourier coefficients.

In this example the Kullback-Leibler divergence in the definition of the penalty functional in Section~\ref{sec:calibexptail} was replaced by the square of $\ell_2$-norm.

After two steps of the splitting algorithm, the normalized $\ell_2$-residual was $0.0017$,  below the tolerance which was set as $0.002$. The normalized $\ell_2$-distances between the reconstructed and true parameters were, $0.165$ for the local volatility surface and  $0.641$ for the double exponential tail. 

%

Figure~\ref{fig:lvol} presents the true and the reconstructed local volatility surfaces at the first and second steps of the splitting algorithm. The comparison between the double-exponential tails is done in Figure~\ref{fig:detail}.

\begin{figure}[!ht]
  \centering
      \includegraphics[width=0.32\textwidth]{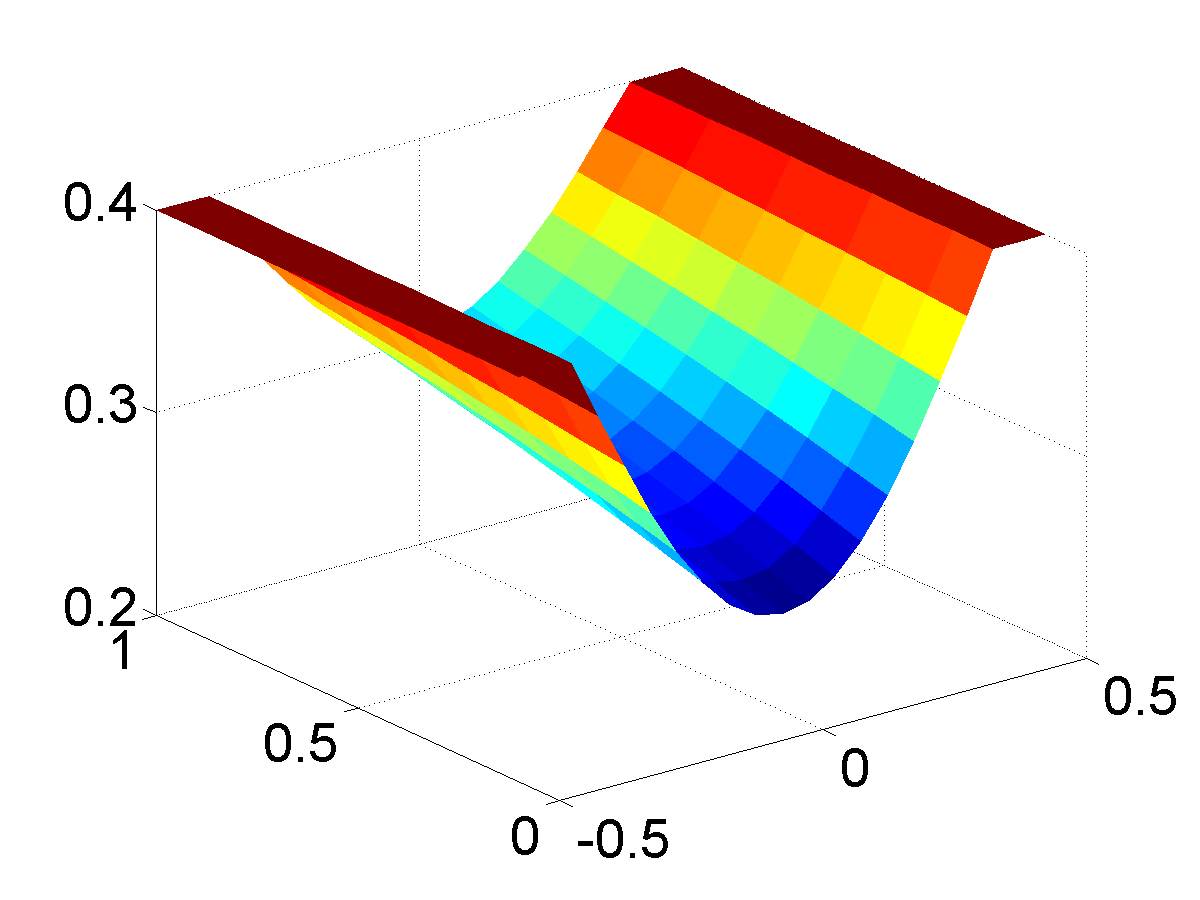}\hfill
      \includegraphics[width=0.32\textwidth]{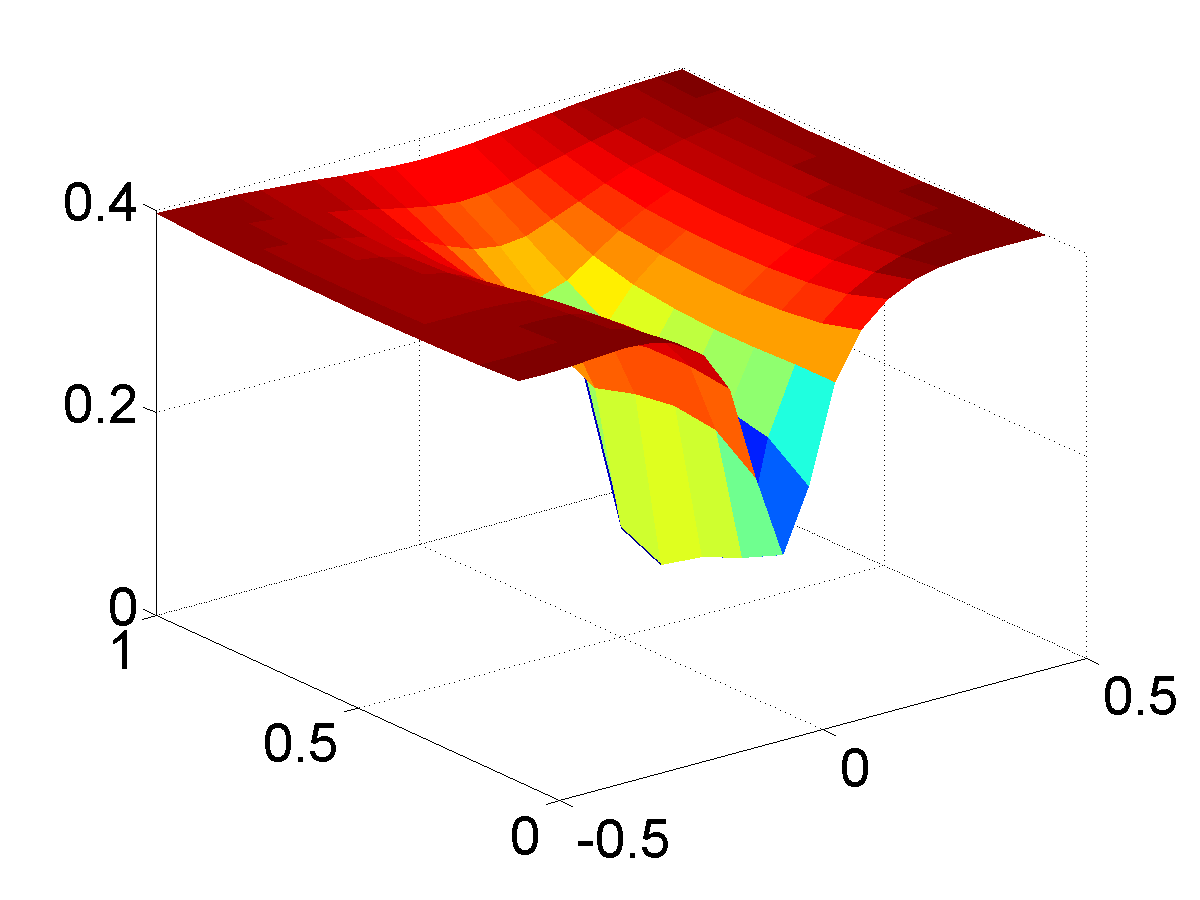}\hfill
      \includegraphics[width=0.32\textwidth]{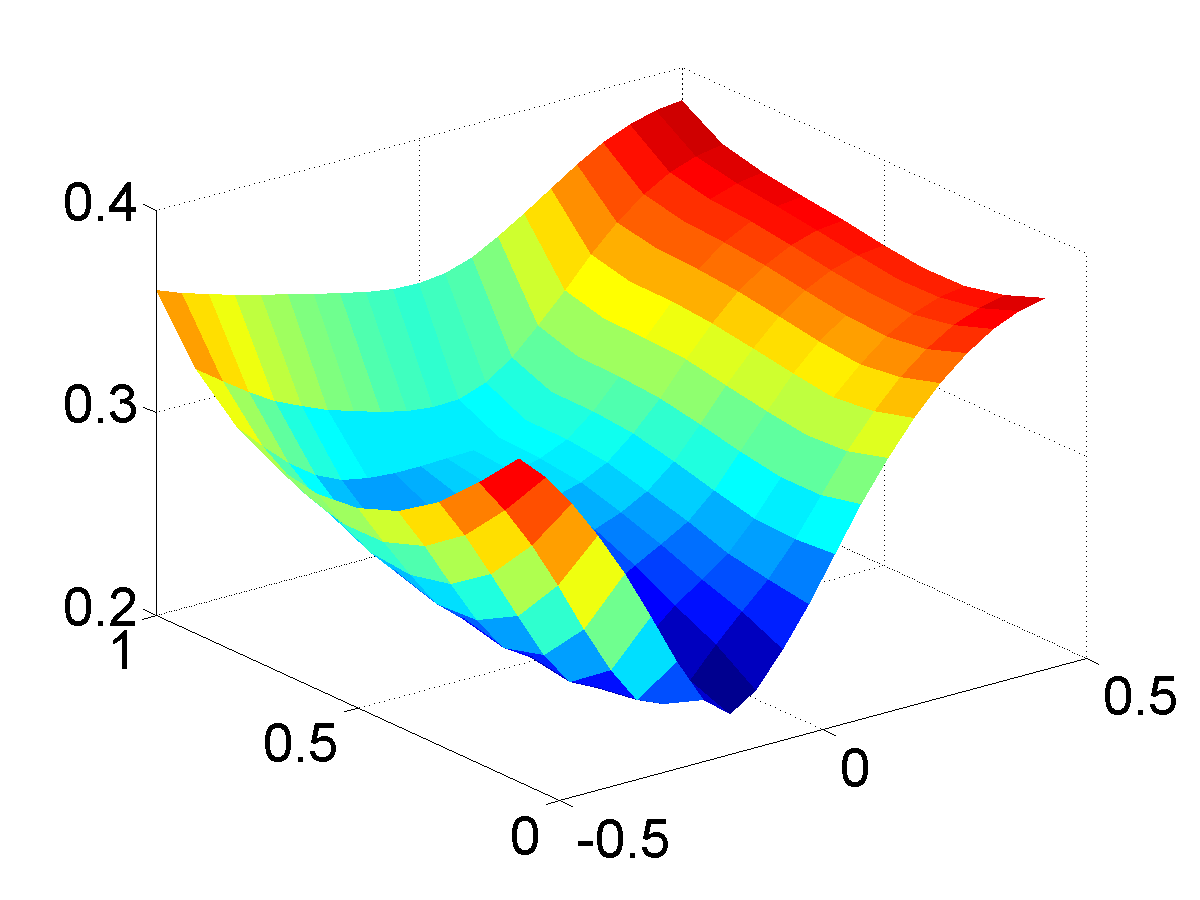}\hfill
  \caption{Reconstruction of the local volatility surface: original (left), after one step (center) and after two steps (right).} 
  \label{fig:lvol}
\end{figure}

\begin{figure}[!ht]
  \centering
      \includegraphics[width=0.4\textwidth]{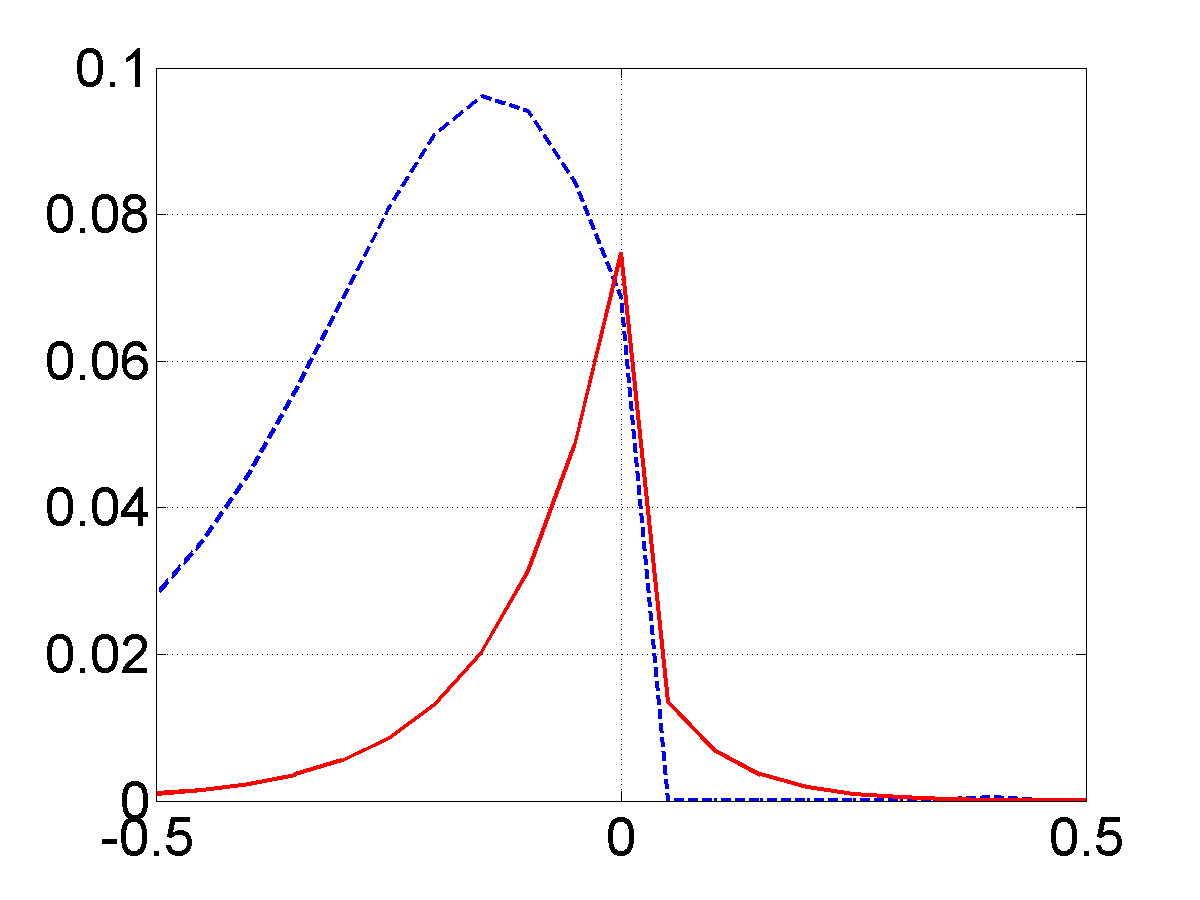}\hfill
      \includegraphics[width=0.4\textwidth]{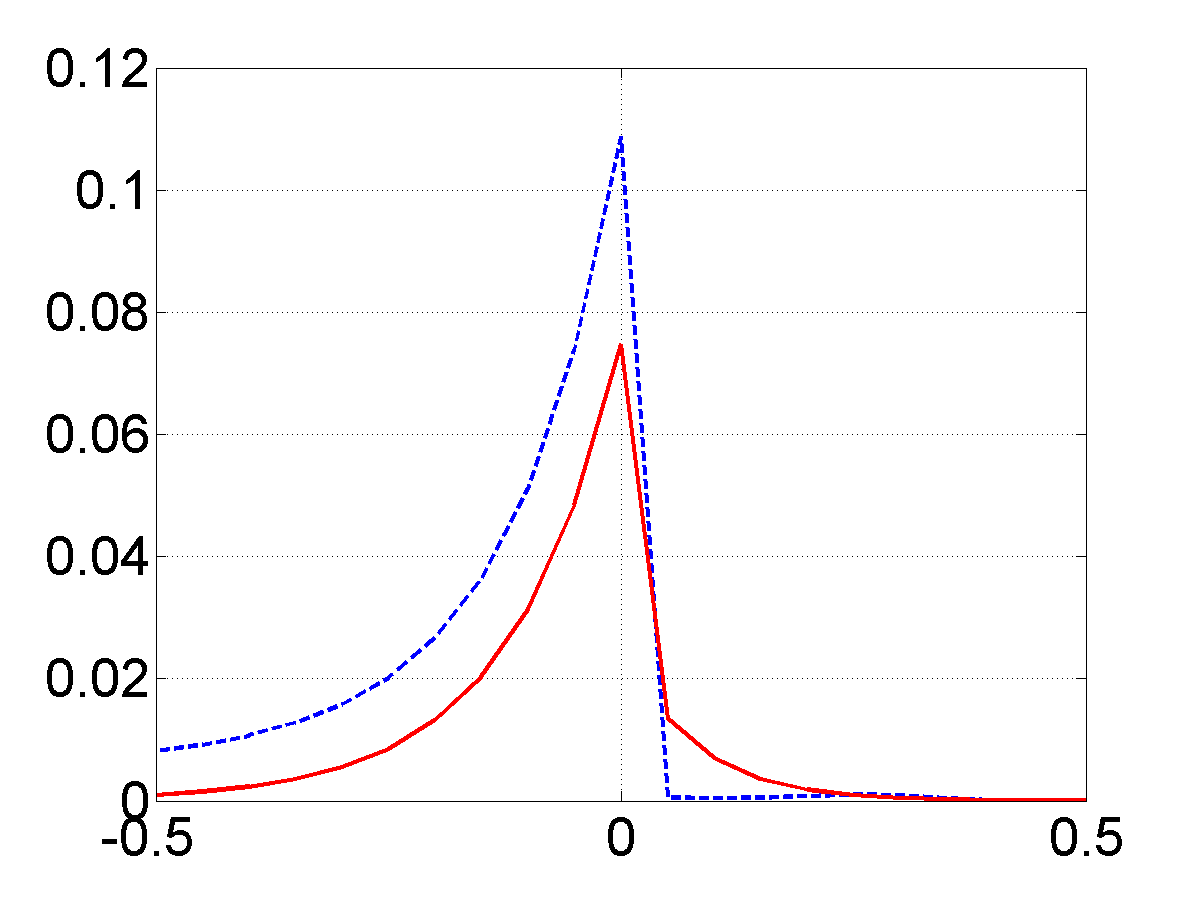}\hfill
  \caption{Reconstruction of the double exponential tail: after one step (left) and after two steps (right). Continuous line: true. Dashed line: reconstruction.} 
  \label{fig:detail}
\end{figure}

If the reconstructions of each parameter are analyzed separately, it seems that the results were not as accurate as in the previous examples. However, this was expected since the amount of unknowns in this test is much larger than before. In addition, it is well known that the distribution of small jump-sizes and volatility are closely related. See \cite{ConTan2003}. This means that in simultaneous reconstructions, it is difficult to separate one from another. So, based on such observations, the results were satisfactorily accurate, since the main features of both parameters were incorporated by the reconstructions, as illustrated in Figures~\ref{fig:lvol}-\ref{fig:detail}.

\subsection{Pricing Exotic Options}

To provide another illustration of the accuracy of the splitting algorithm, we evaluate the so-called Lookback call and put options, which have the following payoff functions:
$$
LB_{\mbox{call}}(\tau_i) = \max\left\{0,S_{\tau_i}-\min_{0\leq k \leq N}S_{t_k}\right\} ~~\mbox{and}~~
LB_{\mbox{put}}(\tau_i) = \max\left\{0,\max_{0\leq k \leq N}S_{t_k}\right\},
$$
respectively, where the time-to-maturity of the options are $\tau_i = 0.1, 0.2, 0.3, 0.5$, the current time is given by $t_k = k\cdot \Delta t$, with $ k = 0,1,...,N$, $\Delta t = \tau_i/N$, and $N$ is the number of time steps, set to $N=100$.

The price of the options are approximated by a Monte Carlo integration as in
$$
LB_{\mbox{call}}(0) = \mbox{e}^{-r\tau_i}\mathbb{E}\left[LB_{\mbox{call}}(\tau_i)\right] \approx 
\mbox{e}^{-r\tau_i}\frac{1}{N_r}\sum_{l = 1}^{N_r}LB_{\mbox{call}}(\tau_i)^{(l)},
$$
where $LB_{\mbox{call}}(\tau_i)^{(l)}$ is the $l$-th realization of the random variable $LB_{\mbox{call}}(\tau_i)$, and $N_r$ is the total amount of realizations, which is set to $N_r = 10\,000$. The realizations of $LB_{\mbox{call}}(\tau_i)$ and $LB_{\mbox{put}}(\tau_i)$ are generated by Dupire model and the jump-diffusion model in \eqref{ito1}. 

The Dupire model is solved by Euler-Naruyama method with local volatility calibrated from the European call price dataset of Section~\ref{sec:splitting_ex}. The normalized residual in local volatility calibration is approximately the same achieved by the jump-diffusion calibration in Section~\ref{sec:splitting_ex}. The jump-diffusion model is solved by the method in \cite{GieTenWei2017} with the local volatility and the jump-size distribution calibrated in Section~\ref{sec:splitting_ex}. The samples of the jump-sizes are given by inverse transform sampling, where the inverse of the cumulative distribution of jump-sizes was evaluated by least-squares. The ground truth prices are given by jump-diffusion model with the true local volatility and true jump-size distribution of Section~\ref{sec:splitting_ex}. 

\begin{table}[htb]
\centering
\begin{tabular}{l|cccc}
\hline
$\tau$ & $0.1$ & $0.2$ & $0.3$ & $0.4$\\
\hline
Jumps &$ 0.0509  $&$  0.0692  $&$  0.0855  $&$  0.1059$\\
Dupire &$ 0.0728  $&$  0.1019 $&$   0.1270  $&$  0.1620$\\
True &$ 0.0577  $&$  0.0828  $&$  0.1040  $&$  0.1363$\\
\hline
\end{tabular}
\caption{Lookback call prices}\label{tab:exotic1}
\end{table}

\begin{table}[htb]
\centering
\begin{tabular}{l|cccc}
\hline
$\tau$ & $0.1$ & $0.2$ & $0.3$ & $0.4$\\
\hline
Jumps&$    0.0690 $&$   0.1060 $&$   0.1357 $&$   0.1907$\\
Dupire&$    0.0776 $&$   0.1112 $&$   0.1368 $&$   0.1821$\\
True&$       0.0662$&$    0.0993 $&$   0.1269 $&$   0.1780$\\
\hline
\end{tabular}
\caption{Lookback put prices}\label{tab:exotic2}
\end{table}

\begin{table}[htb]
\centering
\begin{tabular}{l|cccc}
\hline
$\tau$ & $0.1$ & $0.2$ & $0.3$ & $0.4$\\
\hline
Jumps&$    0.1185  $&$  0.1494 $&$   0.1640 $&$   0.1919$\\
Dupire&$    0.2618  $&$  0.2409 $&$   0.2309 $&$   0.2112$\\
\hline
\end{tabular}
\caption{Normalized error in lookback call prices}\label{tab:exotic3}
\end{table}

\begin{table}[htb]
\centering
\begin{tabular}{l|cccc}
\hline
$\tau$ & $0.1$ & $0.2$ & $0.3$ & $0.4$\\
\hline
Jumps&$    0.0425  $&$  0.0596  $&$  0.0648  $&$  0.0680$\\
Dupire&$    0.1725  $&$  0.1360  $&$  0.1053 $&$   0.0630$\\
\hline
\end{tabular}
\caption{Normalized error for the lookback put prices}\label{tab:exotic4}
\end{table}

Tables~\ref{tab:exotic1} and~\ref{tab:exotic2} present the prices of the lookback call and put options, respectively. The error in the prices can be found in Tables~\ref{tab:exotic3} and~\ref{tab:exotic4}. In these tables, the word {\em Jumps} stands for jump-diffusion model, whereas the word {\em Dupire} stands for Dupire model and {\em True} stands for the ground truth prices. Based on these results, we can see that the jump-diffusion model with parameters calibrated by the splitting algorithm is more precise than the Dupire model with calibrated local volatility. 

\subsection{The Splitting Algorithm with DAX Options}

This experiment is aimed to illustrate that the splitting algorithm can be used with market data. The tests are performed with end-of-the-day DAX European call prices traded on 20-Jun-2017, and maturing on 21-Jun-2017, 18-Aug-2017, 15-Sep-2017, 15-Dec-2017, and 16-Mar-2018. 

The mesh step lengths used here were $\Delta y = 0.05$ and $\Delta \tau \approx 0.003$. The penalty term of the Tikhonov functional was the same used in Section~\ref{sec:splitting_ex}, with $\alpha = 10^{-5}$. We used the same initial states for the local volatility surface and double exponential, as well as, the {\em a priori} parameters of Section~\ref{sec:splitting_ex}. The interest rate was taken as 0, and $S_0 = 12814.79$ USD. The data was given in the sparse mesh defined by transforming the market strikes into log-moneyness, and considering the time to maturity in years. Only three iterations of the splitting algorithm were needed until the data misfit function was below the tolerance, set as $tol = 0.0069$.  

To reconstruct the jump-size distribution and the local volatility surface, we used the same parameters of Section~\ref{sec:splitting_ex}. 

\begin{figure}[!ht]
  \centering
      \includegraphics[width=0.32\textwidth]{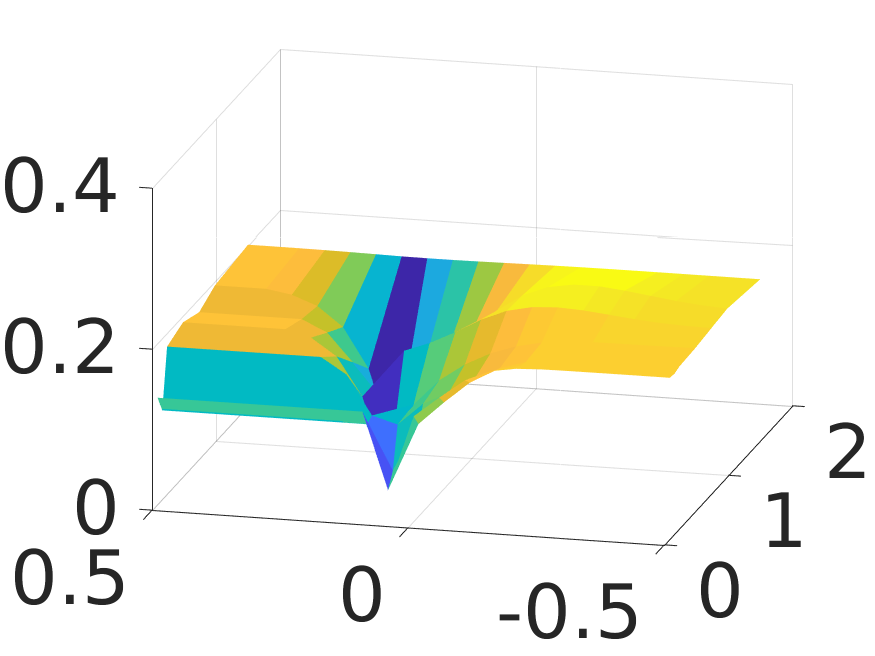}\hfill
      \includegraphics[width=0.32\textwidth]{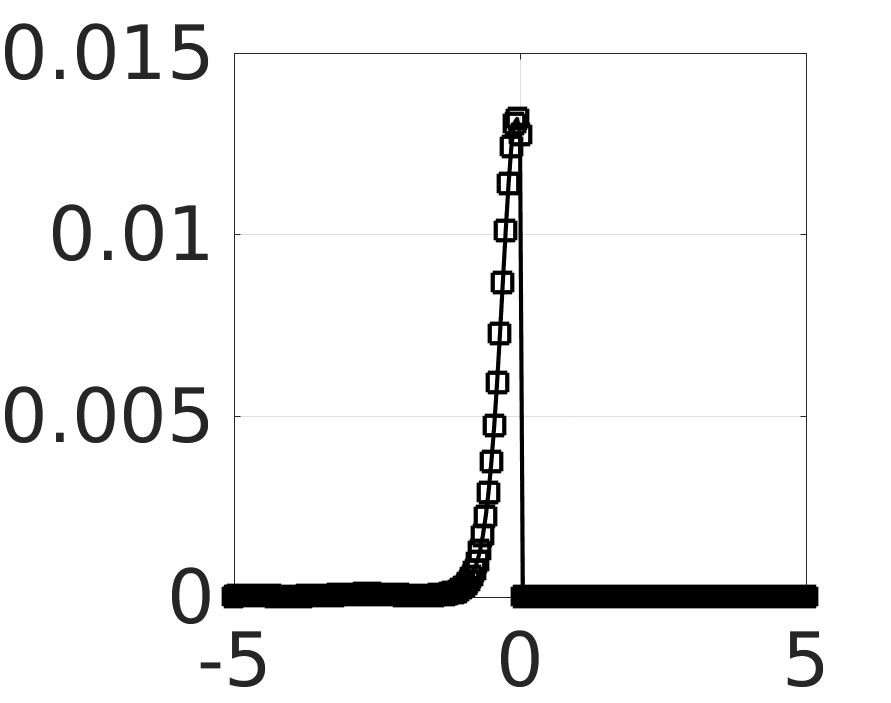}\hfill
      \includegraphics[width=0.32\textwidth]{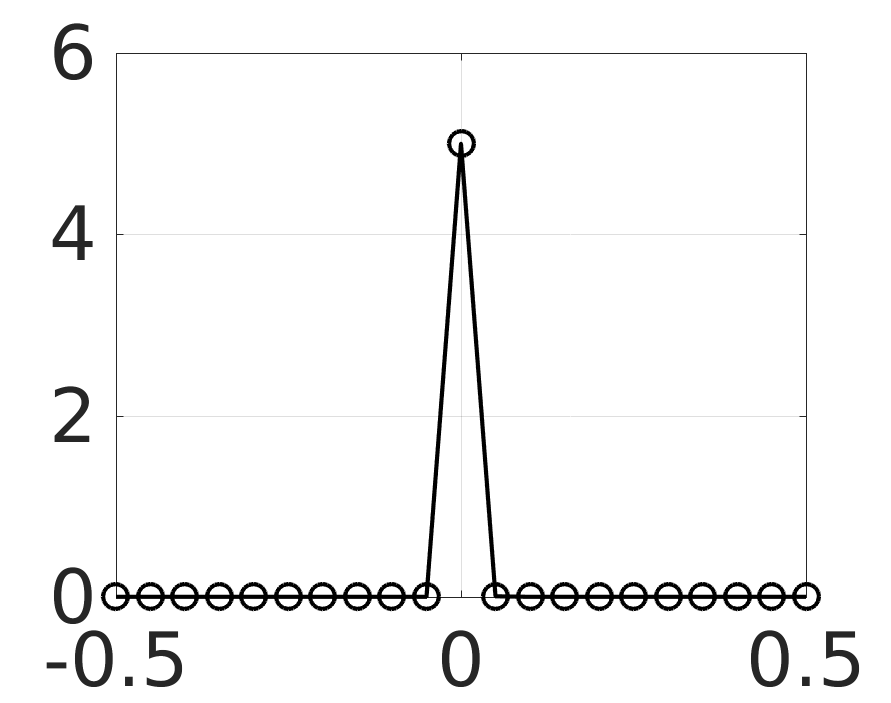}\hfill
  \caption{Reconstructions from Dax options of local volatility surface (left), double exponential tail (center) and jump-size density function (right).} 
  \label{fig:dax1}
\end{figure}

\begin{figure}[!ht]
  \centering
      \includegraphics[width=0.32\textwidth]{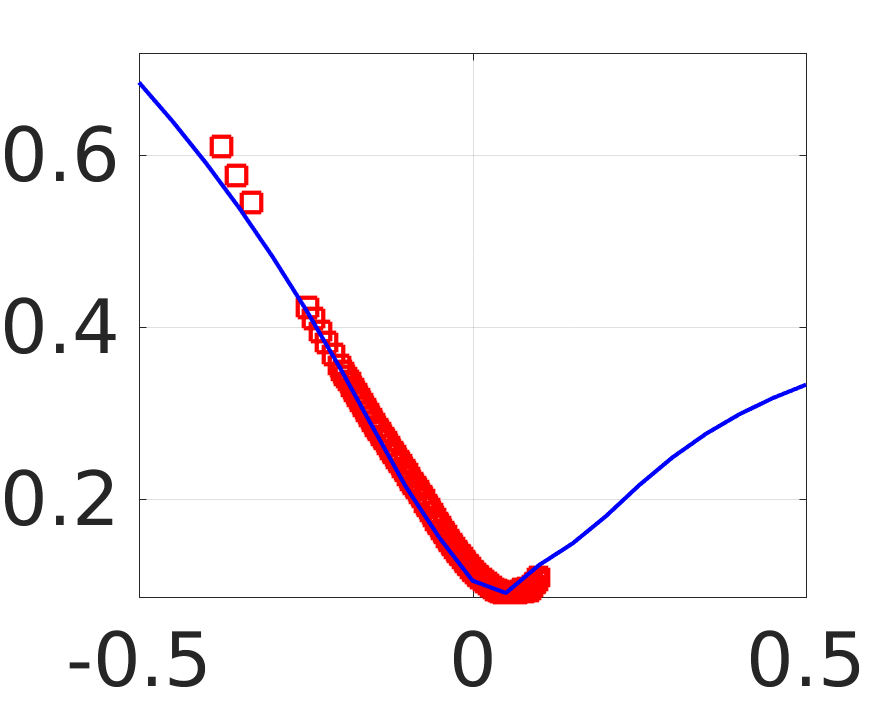}\hfill
      \includegraphics[width=0.32\textwidth]{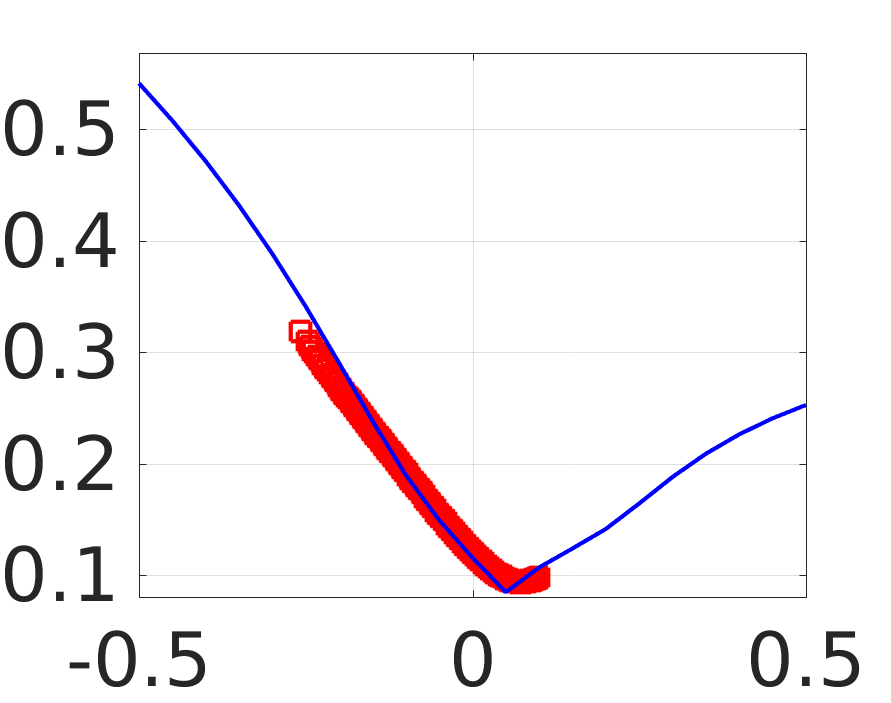}\hfill
      \includegraphics[width=0.32\textwidth]{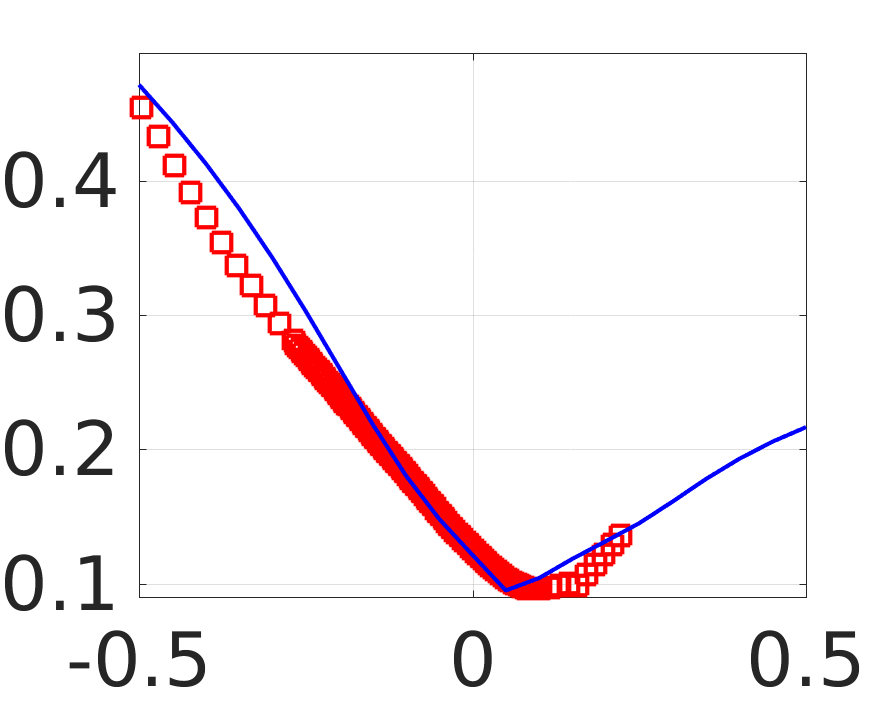}\hfill
      \includegraphics[width=0.32\textwidth]{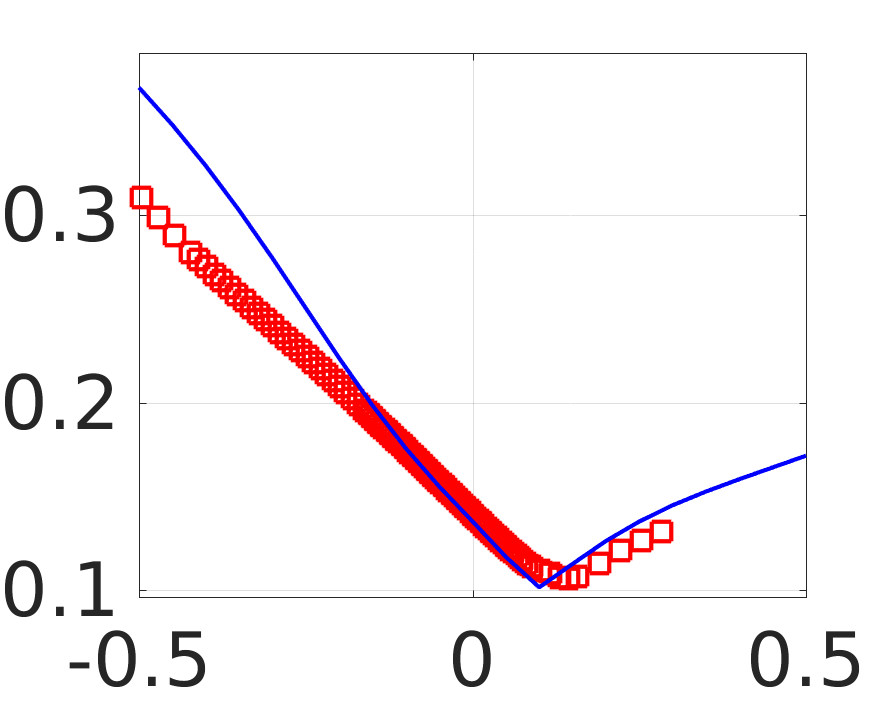} 
      \includegraphics[width=0.32\textwidth]{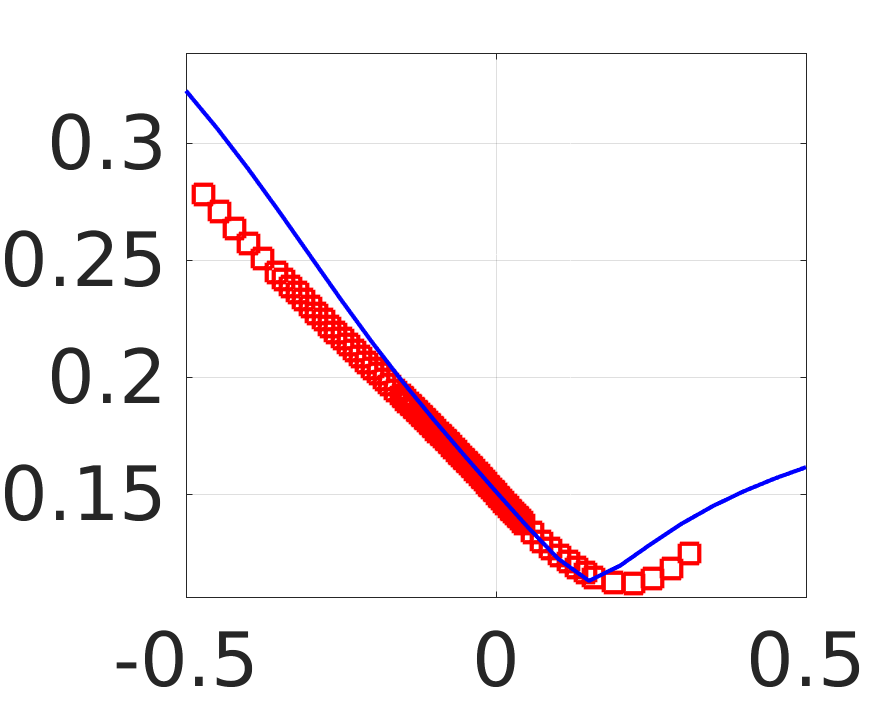}
  \caption{Market (squares) and model (continuous line) implied volatility for 
  DAX European call prices traded on 20-Jun-2017, and maturing on 21-Jun-2017, 18-Aug-2017, 15-Sep-2017, 15-Dec-2017, and 16-Mar-2018 (from left to right). 
  } 
  \label{fig:dax2}
\end{figure}

Figure \ref{fig:dax1} presents the calibrated local volatility surface,  double exponential tail and jump-size density function. The corresponding implied volatilities of market data and of the model can be found in Figure~\ref{fig:dax2}. As can be observed from these figures, the local volatility surfaces have a nice smile adherence, especially close to the at-the-money strikes ($y=0$).

\section{Conclusion}\label{sec:conclusion}
In the present paper, we have explored the inverse problem of simultaneous calibration of the local volatility surface and the jump-size distribution from quoted European vanilla options when stock prices are modeled as jump-diffusion processes. This is a difficult task, since the complexity is higher than that of the calibration problem involving purely diffusive prices, as in the local volatility calibration studied by  \cite{Cre2003a}, \cite{Cre2003b}, \cite{EggEng2005}, \cite{AlbAscYanZub2015}, and others.

Tikhonov-type regularization combined with a splitting strategy was applied to solve this inverse problem. We provided theoretical results showing that this methodology works for theoretical problem and it could be used with the specific problem under consideration. Numerical examples illustrated the effectiveness of this technique and provided stable approximations to the true local volatility and jump-size distribution with synthetic and real data.


\end{document}